\documentclass[aps,pra,one column,nofootinbib,superscriptaddress,10pt,floatfix,longbibliography]{revtex4-2}

\usepackage{minitoc}
\usepackage[toc,page,header]{appendix}
\usepackage{physics}
\usepackage{xfrac}
\usepackage{graphicx}
\usepackage{subfigure}
\usepackage{mathdots}
\usepackage{mathtools}
\usepackage{amsfonts,amssymb,amsmath}
\usepackage{bm}
\usepackage{mathrsfs}
\usepackage[]{graphics,graphicx,epsfig}
\usepackage{amsthm}
\usepackage{csquotes}
\MakeOuterQuote{"}
\usepackage{epstopdf}
\usepackage{tikz}
\usepackage{paralist}
\usepackage{diagbox}
\usepackage[inline]{enumitem}
\usepackage{qcircuit}
\usepackage{dsfont}
\usepackage{hyperref}
\usepackage{algorithm}
\usepackage{algpseudocode}
\usepackage{ninecolors}
\usepackage[normalem]{ulem}

\def\identity{\leavevmode\hbox{\small1\kern-3.8pt\normalsize1}}

\newtheorem{theorem}{Theorem}

\newtheorem{lemma}{Lemma}
\newtheorem{proposition}{Proposition}

\newtheorem{corollary}{Corollary}
\theoremstyle{remark}
\newtheorem{remark}[theorem]{Remark}
\newtheorem*{remarks}{Remark}

\newcommand{\caB}{\mathcal{B}}

\newcommand{\caE}{\mathcal{E}}
\newcommand{\caF}{\mathcal{F}}

\newcommand{\caH}{\mathcal{H}}
\newcommand{\caJ}{\mathcal{J}}
\newcommand{\caL}{\mathcal{L}}

\newcommand{\caN}{\mathcal{N}}

\newcommand{\caV}{\mathcal{V}}
\newcommand{\caW}{\mathcal{W}}

\newcommand{\rmi}{\mathrm{i}}

\newcommand{\rmB}{\mathrm{B}}

\def\eqref#1{\textup{(\ref{#1})}}
\newcommand{\eref}[1]{Eq.~\textup{(\ref{#1})}}

\newcommand{\lref}[1]{Lemma~\ref{#1}}

\newcommand{\thref}[1]{Theorem~\ref{#1}}

\newcommand{\coref}[1]{Corollary~\ref{#1}}

\newcommand{\sref}[1]{Sec.~\ref{#1}}

\newcommand{\pref}[1]{Proposition~\ref{#1}}

\newcommand{\Pref}[1]{Proposition~\ref{#1}}

\newcommand{\fref}[1]{Fig.~\ref{#1}}

\newcommand{\aref}[1]{Appendix~\ref{#1}}

\def\<{\langle}  %% overriding the original command \<
\def\>{\rangle}  %% overriding the original command \>

\newcommand{\rmK}{\mathrm{K}}

\newcommand{\rmM}{\mathrm{M}}

\newcommand{\bbC}{\mathbb{C}}
\newcommand{\bbR}{\mathbb{R}}

\newcommand{\caA}{\mathcal{A}}
\newcommand{\caK}{\mathcal{K}}

\newcommand{\rb}{\rho_\beta}
\newcommand{\bbP}{\mathbb{P}}

\begin{document}

\title{Modular-Annihilator Parent Hamiltonians for Purified Gibbs States: Spectral Design and Controlled Approximation}

\author{Changhao Yi}
\affiliation{Department of Physics, Shanghai University, Shanghai, China}
\email{cyi@shu.edu.cn}

\author{Jun Takahashi}
\affiliation{Institute of Solid State Physics, University of Tokyo, Chiba, Japan}

\author{Cunlu Zhou}
\affiliation{Department of Computer Science, Universit\'e de Sherbrooke, QC, Canada}
\affiliation{Institut quantique, Universit\'e de Sherbrooke, QC, Canada}
\email{Cunlu.Zhou@USherbrooke.ca}

\begin{abstract}
Purified Gibbs states provide a bridge between finite-temperature physics, dissipative dynamics, and ground-state methods. In this work, we study the exact finite sum-of-squares (SoS) construction of their parent Hamiltonians and the associated Lindbladian based on modular annihilators. Given a finite set of Hermitian generators, the corresponding modular annihilators yield a frustration-free SoS representation without continuous time integrals or an explicit decomposition into Bohr-frequency sectors. The purified Gibbs state remains a common zero mode while the freedom to choose and combine the generators can be used to optimize the spectral properties of the parent Hamiltonian. For free-fermion Hamiltonians, modular transformations act linearly on Majorana operators, leading to an analytically solvable family of parent Hamiltonians parameterized by a real symmetric coefficient matrix \(S\). For the scalar-functional subclass $S=f(h)$, we show that, at fixed
operator norm, the choice $S_{\mathrm{opt}}\propto 1/\sqrt{\cosh(2\beta h)}$ has mixing time upper bound $2\log(2N/\epsilon)$ for any $\beta$, which exhibits rapid mixing and is irrelevant to the inverse temperature $\beta$. For interacting systems, where the modularly dressed generators are not available in closed form, we introduce a Krylov--Lanczos approximation scheme and bound the resulting ground-state error in terms of the modular-approximation error and the parent-Hamiltonian gap. Numerical results illustrate the free-fermion spectral advantage and show how the accuracy of the interacting construction depends on temperature, interaction strength, and Krylov dimension.
\end{abstract}

\date{\today}
\maketitle

\tableofcontents

\section{Introduction}

Purified Gibbs states provide a useful bridge between finite-temperature physics, dissipative dynamics, and ground-state methods. For a quantum many-body Hamiltonian \(H\), the Gibbs state
\(\rho_\beta \propto e^{-\beta H}\) encodes equilibrium properties at inverse temperature \(\beta\), including thermal expectation values, phase structure, and response functions
\cite{Poulin2009Sampling,Kastoryano2013Quantum,
Holmes2022QuantumAlgorithms,Hahn2026Efficient,chen2025efficient,
Kastoryano2016Quantum,chen2023an,ding2025efficient,fast2025tong,
Smid2025,rouze2026optimal,rouze2026efficient}.
Its canonical purification, also known as the thermofield-double state \cite{Cottrell2019,khor2026}, represents the mixed Gibbs state as a pure state on a doubled Hilbert space. Constructing a parent Hamiltonian whose ground state is the purified Gibbs state therefore allows finite-temperature questions to be studied using ground-state spectral methods. Furthermore, parent Hamiltonians of purified Gibbs states arise naturally from thermalizing open-system dynamics. Under vectorization and a similarity transformation, a Lindbladian satisfying Kubo--Martin--Schwinger (KMS) detailed balance is mapped to a positive-semidefinite Hermitian operator \cite{Lindblad1976,AlickiLendi2007,Kossakowski1977Quantum,
Scandi2026Thermalization}. The purified Gibbs state is a zero-energy ground state of this operator, and its spectral gap coincides with the corresponding Lindbladian gap. This correspondence connects spectral properties of parent Hamiltonians to both ground-state preparation and dissipative thermalization.

A particularly useful structure is a sum-of-squares (SoS) factorization of the parent Hamiltonian, \(\rmM_{\mathcal H} =\sum_a \Gamma_a^\dagger \Gamma_a/2\). Such a representation makes positivity and frustration freeness explicit and characterizes the ground space as the common kernel of the operators \(\{\Gamma_a\}\). It can also be used by SoS spectral-amplification methods to improve the spectral-gap dependence of ground-state filtering \cite{Low2025Fast,king2026quantum}. These structural and algorithmic advantages motivate the search for explicit SoS representations of thermal parent Hamiltonians. A recent construction obtained such a factorization for parent Hamiltonians associated with a broad class of KMS Lindbladians \cite{leng2026accelerate}. Its natural expression, however, involves a continuous time integral, or equivalently, a decomposition into Bohr-frequency sectors. 
Although this form is well suited to frequency domain analysis, it does not directly expose an exact finite family of algebraically defined annihilators. 

In this work, we study the exact finite SoS construction based on modular transformations, which was first introduced in \cite{Cottrell2019}. Given a finite set of Hermitian generators
\(\{J_a\}\), we define the corresponding modular annihilators by comparing the left and right actions of their modularly dressed forms. Each modular annihilator annihilates the purified Gibbs state, and their squared norms therefore define a frustration-free parent Hamiltonian without continuous time integrals or an explicit enumeration of Bohr frequencies. When the generators form a unital irreducible algebra, their commutant is trivial, which guarantees that the purified Gibbs state is the unique common zero mode. The construction thus reduces the specification of the ground state to an algebraic property of the generators. The finite representation also exposes freedom in the choice and combination of the generators. Invertible changes of generator basis preserve their
common kernel while modifying the positive quadratic form used to construct
the parent Hamiltonian. Consequently, different choices can have the same
target ground state but different excited-state spectra. This makes it
possible to optimize spectral properties such as the gap and operator norm
within analytically or computationally tractable families. 

Free-fermion Hamiltonians provide an exactly solvable realization of this
framework~\cite{Barthel2022,fast2025tong,Smid2025,zhan2026rapid}. Because commutation with a quadratic Hamiltonian preserves the linear span of the Majorana operators, their modular transformations can be evaluated from the single-particle Hamiltonian, which yields a family of SoS parent Hamiltonians parameterized by a real symmetric coefficient matrix \(S\). The Gaussian choice recovers the parent Hamiltonian associated with the Ding--Li--Lin (DLL) Lindbladian~\cite{ding2025efficient}, but it is only one member of this broader family. For the scalar-functional subclass $S=f(h)$, we derive closed-form
expressions for the spectral gap and operator norm and show that
$S_{\mathrm{opt}}\propto 1/\sqrt{\cosh(2\beta h)}$ maximizes the
spectral gap at fixed operator norm within this subclass. The same construction
also defines an exactly solvable family of KMS Lindbladians, allowing its
effect on dissipative relaxation to be analyzed directly. These results are
obtained in the Majorana representation without explicitly invoking the
third-quantization formalism \cite{Prosen2008}.

For interacting Hamiltonians, the modularly dressed generators generally do
not remain in a low-dimensional invariant operator space and are therefore
not available in closed form. The modular-annihilator construction remains
exact, but its practical evaluation requires approximating the
imaginary-time Heisenberg evolution. We address this problem using the
Krylov--Lanczos method \cite{nandy2025quantum} and establish a
stability bound that relates errors in the modularly dressed generators to
the ground-state fidelity of the resulting approximate parent Hamiltonian.
For one-dimensional local Hamiltonians, we combine the Krylov approximation with Araki-type imaginary-time quasi-locality \cite{perez2023locality} and show that at sufficiently high temperature, the modular transformation of a local generator can  be approximated within a finite spatial neighborhood whose size is independent of the total system size.

Our numerical results illustrate both regimes. For free fermions, they show
that the optimized coefficient matrix improves the relevant spectral ratio
and the relaxation of the associated Lindbladian relative to the Gaussian
and identity choices. For interacting fermion models, they show that
increasing the Krylov dimension extends the range of temperatures and
interaction strengths over which the modularly dressed operators, the
purified Gibbs ground state, and the parent-Hamiltonian spectral gap are
accurately reproduced. The loss of accuracy at lower temperature or
stronger interaction is accompanied by increasing modular-approximation
error and increased sensitivity to the parent-Hamiltonian gap.

The remainder of this paper is organized as follows.
Section~\ref{sec:background} reviews modular transformations,
KMS-detailed-balance Lindbladians, parent Hamiltonians, irreducible operator
algebras, and Majorana operators. Section~\ref{sec:main} introduces the
modular-annihilator SoS construction, establishes uniqueness of the purified
Gibbs ground state, and studies the inverse mapping to Lindbladians.
Section~\ref{sec:free_fermion} develops the exactly solvable free-fermion
family and analyzes its spectral optimization. Section~\ref{sec:krylov}
introduces the Krylov--Lanczos approximation and the locality-improved error
bound. Section~\ref{sec:numerical} presents the numerical results, and
Section~\ref{sec:conclusion} concludes.

\section{Background}

\label{sec:background}
\subsection{Modular transformation}

Let $\mathcal{H}$ be a finite-dimensional Hilbert space and let
$\mathcal{B}(\mathcal{H})$ denote the algebra of linear operators on
$\mathcal{H}$. Fix a Hermitian operator $H\in\mathcal{B}(\mathcal{H})$
with spectral decomposition
\begin{equation}
  H=\sum_{E\in\operatorname{spec}(H)} E\Pi_E ,
\end{equation}
where $\Pi_E$ is the orthogonal projection onto the eigenspace of $H$
with eigenvalue $E$.

The {Bohr frequencies} of $H$ are the differences of its
eigenvalues,
\begin{equation}
  \mathcal{F}(H)
  \coloneqq
  \bigl\{E-E' : E,E'\in\operatorname{spec}(H)\bigr\}.
\end{equation}
For $\nu\in\mathcal{F}(H)$ and $X\in\mathcal{B}(\mathcal{H})$, define
the {Bohr-frequency sector}
\begin{equation}
  X_\nu
  \coloneqq
  \sum_{\substack{E,E'\in\operatorname{spec}(H)\\ E-E'=\nu}}
  \Pi_E\, X\, \Pi_{E'} .
\end{equation}
Since $\sum_E \Pi_E= I$, every operator decomposes as
\begin{equation}
  X=\sum_{\nu\in\mathcal{F}(H)} X_\nu ,
  \qquad
  [H,X_\nu]=\nu\,X_\nu ,
\end{equation}
i.e.,\ $X_\nu$ is an eigenoperator of the adjoint action
$\operatorname{ad}_H\coloneqq [H,\cdot\,]$ with eigenvalue $\nu$. This
decomposition is unique and orthogonal with respect to the
Hilbert--Schmidt inner product, and it satisfies
$(X_\nu)^\dagger=(X^\dagger)_{-\nu}$.

For $z\in\mathbb{C}$, define the {modular transformation}
$\Delta_z\colon\mathcal{B}(\mathcal{H})\to\mathcal{B}(\mathcal{H})$ by
\begin{equation}
  \Delta_z(X)
  \coloneqq
  e^{zH}\,X\,e^{-zH}.
\end{equation}
The family $\{\Delta_z\}_{z\in\mathbb{C}}$ is a one-parameter group of
algebra automorphisms, and it
acts diagonally on the Bohr-frequency sectors:
\begin{equation}
  \Delta_z(X_\nu)=e^{z\nu}\,X_\nu ,
  \qquad
  \Delta_z(X)=\sum_{\nu\in\mathcal{F}(H)} e^{z\nu}\,X_\nu .
\end{equation}

\subsection{KMS Lindbladians and parent Hamiltonians}

Given a Hamiltonian $H$ and inverse temperature $\beta$, the Gibbs
state and its purification are
\begin{equation}
  \rho_\beta
  \coloneqq
  \frac{e^{-\beta H}}{\Tr\!\left(e^{-\beta H}\right)},
  \qquad
  |\rho_\beta^{1/2}\rangle
  \coloneqq
  \bigl(\rho_\beta^{1/2}\otimes I\bigr)|\Omega\rangle,
\end{equation}
where $|\Omega\rangle\coloneqq\sum_{i=1}^{d}|ii\rangle$ is the unnormalized
maximally entangled state. Throughout this paper, we assume $\beta > 0$. The pure state $|\rho_\beta^{1/2}\rangle$ is also termed as the thermofield double (TFD) state.

A standard route to preparing $\rho_\beta$ is using Lindbladian dynamics \cite{Lindblad1976}. A
standard Lindbladian takes the form
\begin{equation}
  \mathcal{L}(\cdot)
  =
  -\mathrm{i}[C,\cdot]
  +
  \sum_j
  \Bigl(
    L_j\,(\cdot)\,L_j^\dagger
    -
    \tfrac12\bigl\{L_j^\dagger L_j,(\cdot)\bigr\}
  \Bigr),
\end{equation}where $C$ is a Hermitian operator and $\{L_j\}$ is the set of jump
operators. We say that $\caL$ satisfies the KMS detailed balance
condition with respect to $\rho_\beta$ if
\begin{equation}\label{eq:KMS-DB}
  \caL(\cdot)
  =
  \rho_\beta^{1/2}\,
  \caL^\dagger\!\left[
    \rho_\beta^{-1/2}\,(\cdot)\,\rho_\beta^{-1/2}
  \right]
  \rho_\beta^{1/2},
\end{equation}
Equivalently, $\caL^\dag$ is self-adjoint with respect to the KMS inner
product
$\langle A,B\rangle_{\beta}
 \coloneqq\Tr \bigl(\rho_\beta^{1/2}A^\dagger\rho_\beta^{1/2}B\bigr)$.
 
Since $\caL^\dagger(I)=0$, \eref{eq:KMS-DB} implies
$\caL(\rho_\beta)=0$, and hence
$e^{t\caL}(\rho_\beta)=\rho_\beta$ for all $t\ge0$.
Then $\caL$ is called primitive if
it has a unique stationary state $\rho_\beta$ and
\begin{equation}
\lim_{t\to\infty}e^{t\caL}(\sigma)=\rho_\beta
\end{equation}
for every density operator $\sigma$. The mixing time quantifies how fast the evolution is
\begin{equation}
    t_{\mathrm{mix}}(\varepsilon) \coloneqq \inf\left\{ t : \sup_\rho  \frac{1}{2}\left\|e^{\caL t}[\rho] - \rho_\beta\right\|_1 \le \varepsilon\right\}.
\end{equation}

Under vectorization, $\mathcal{L}$ is mapped to a generally
non-Hermitian matrix $\mathrm{M}_{\mathcal{L}}$. Conjugating by
$\rho_\beta^{1/4}$ on both sides, we get a new superoperator
\begin{equation}\label{eq:superH}
  \mathcal{H}(\cdot)
  \coloneqq
  -\rho_\beta^{-1/4}\,
  \mathcal{L}\!\left[
    \rho_\beta^{1/4}\,(\cdot)\,\rho_\beta^{1/4}
  \right]
  \rho_\beta^{-1/4}.
\end{equation}
The KMS detailed balance condition in \eref{eq:KMS-DB} guarantees that
$\mathcal{H}$ is Hermitian with respect to the Hilbert--Schmidt inner
product, $\mathcal{H}^\dagger=\mathcal{H}$, and stationarity of
$\rho_\beta$ gives $\mathcal{H}(\rho_\beta^{1/2})=0$. Its vectorization
$\mathrm{M}_{\mathcal{H}}$ is therefore a Hermitian matrix, related to
$\mathrm{M}_{\mathcal{L}}$ by the similarity transformation
\begin{equation}
  \mathrm{M}_{\mathcal{H}}
  \coloneqq
  -\bigl(\rho_\beta^{-1/4}\otimes\rho_\beta^{-1/4,\top}\bigr)\,
  \mathrm{M}_{\mathcal{L}}\,
  \bigl(\rho_\beta^{1/4}\otimes\rho_\beta^{1/4,\top}\bigr),
\end{equation}
and annihilates the purified Gibbs state,
$\mathrm{M}_{\mathcal{H}}|\rho_\beta^{1/2}\rangle=0$. In this sense
$\mathrm{M}_{\mathcal{H}}$ is a {parent Hamiltonian} of
$\mathcal{L}$. Because $\caH$ is Hermitian and is related to $-\caL$ by a similarity transformation, its spectrum
is real and nonnegative; the spectrum of a Lindblad generator lies
in the closed left half-plane, while KMS symmetry makes it real.
Hence $\mathrm M_{\caH}\succeq0$. Under primitivity, the spectral gap yields the gap-based mixing-time
bound stated in Appendix~\ref{sec:lindbladian}.

\subsection{Irreducible operator algebras and Majorana operators}

Let $\mathcal W$ be a finite-dimensional Hilbert space.
The unital algebra generated by operators $\{J_a\}_{a\in\caA}$ is denoted by
$\mathrm{Alg}(\{J_a\}_{a\in\caA})$. It is called \emph{irreducible} if the only invariant subspaces are
$\{0\}$ and $\mathcal W$. By Burnside's theorem, irreducibility is equivalent to
$\mathrm{Alg}(\{J_a\}_{a\in\caA})=\mathcal B(\mathcal W)$, the algebra of linear operators on $\caW$. This criterion will guarantee uniqueness of the purified Gibbs state as the ground state of the parent Hamiltonian.

Here is one example of irreducible algebra in the $N$-qubit Hilbert space. Let $\{\omega_a\}_{a=1}^{2N}$ denote Majorana operators satisfying
\begin{equation}
\omega_a^\dagger=\omega_a,
\qquad
\{\omega_a,\omega_b\}=2\delta_{ab}I.
\end{equation}
A free-fermion Hamiltonian takes the form
\begin{equation}
H
=
\sum_{a,b=1}^{2N}
h_{ab}\omega_a\omega_b,
\end{equation}
where $h$ is purely imaginary and Hermitian. The Heisenberg evolution of Majorana operators is exactly solvable:
\begin{equation}
e^{zH}\omega_a e^{-zH}
=
\sum_b
(e^{-4zh})_{ab}\omega_b.
\end{equation}
One concrete realization is the Jordan-Wigner transformation in Majorana form:
\begin{equation}
    \omega_{2a-1}=
    \left(\prod_{b<a}Z_b\right)X_a,
    \qquad
    \omega_{2a}=
    \left(\prod_{b<a}Z_b\right)Y_a .
\end{equation}
The algebra generated by the Majorana operators is irreducible on
$(\mathbb C^2)^{\otimes N}$, since products of Majorana operators generate the entire set of Pauli operators.

Another useful property is the following: suppose $g(h)$ is a function of $h$, and let $G = \sum_{a,b}[g(h)]_{ab}\omega_a \omega_b$. Then $[G,H] = 0$, which implies $\Delta_z(G) = G$ for all $z$.

\section{SoS Factorization from Modular Transformations}
\label{sec:main}

In this section, we present the SoS factorization of parent Hamiltonian using modular transformations. Similar result has been proposed \cite{Cottrell2019}.

\subsection{Exact construction of parent Hamiltonian}

Consider the space of $N$-qubit $(\bbC^2)^{\otimes N}$. Let $H$ be a Hamiltonian on $\caB((\bbC^2)^{\otimes N})$ and $\rb$ its Gibbs state at inverse temperature $\beta$. For any $V\in \caB((\bbC^{2})^{\otimes N})$, define the modular transformation
\begin{equation}
    \Delta(V) \coloneqq \Delta_{\beta/4}\left(V\right) = e^{\beta H/4}V e^{-\beta H/4} = \rb^{-\frac{1}{4}}V \rb^{\frac{1}{4}}.
\end{equation}
Let $\{J_a\}_{a\in\caA}$ be a set of operators in $\caB((\bbC^2)^{\otimes N})$. The purified Gibbs state is a ground state of the following parent Hamiltonian:
\begin{equation}\label{eq:modular_annihilator}
    \mathrm{M}_\caH = \frac{1}{2}\sum_{a\in\caA} \Gamma_a^\dag \Gamma_a,\qquad \Gamma_a \coloneqq \Delta^{-1}(J_a) \otimes I - I \otimes \Delta(J_a)^\top.
\end{equation}
We term each $\Gamma_a$ as a modular annihilator, which satisfies $\Gamma_a|\rb^{1/2}\> = 0$:
\begin{equation}
    \Delta^{-1}(J_a) \rb^{\frac{1}{2}} = \rb^{\frac{1}{2}}\Delta(J_a) \longrightarrow \Delta^{-1}(J_a) \otimes I |\rb^{\frac{1}{2}}\> = I\otimes \Delta(J_a)^\top |\rb^{\frac{1}{2}}\>.
\end{equation}
The full parent Hamiltonian writes
\begin{equation}
    \mathrm{M}_\caH = \frac{1}{2}\sum_{a\in\caA} \left[\Delta(J_a^\dag) \otimes I - I\otimes (\Delta(J_a))^*\right] \left[\Delta^{-1}(J_a) \otimes I - I\otimes (\Delta(J_a))^\top\right].
\end{equation}
Given the parent Hamiltonian, the task of preparing $|\rb^{1/2}\>$ becomes equivalent to preparing the ground state of $\mathrm{M}_\caH$. The next theorem shows when $\mathrm{M}_\caH$ has a unique ground state.
\begin{theorem} \label{thm:algebra}
    Suppose $\mathrm{Alg}(\{J_a\})$ is a unital irreducible algebra on $(\bbC^{2})^{\otimes N}$, and $\{\Gamma_a\}$ is the set of modular annihilators defined in \eref{eq:modular_annihilator}. Then $\Gamma_a|v\> = 0$ for all $a$ iff $|v\> \propto |\rb^{1/2}\>$, and $\rmM_\caH$ has a unique ground state.
\end{theorem}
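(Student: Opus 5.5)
The plan is to translate the kernel condition back from the doubled space to operators on $(\bbC^2)^{\otimes N}$ and then reduce it to a commutant statement. Any vector $|v\>$ in the doubled space can be written uniquely as $|v\> = (X\otimes I)|\Omega\>$ for some $X\in\caB((\bbC^2)^{\otimes N})$. We use the identity $(I\otimes A^\top)|\Omega\> = (A\otimes I)|\Omega\>$, which is already used above to show $\Gamma_a|\rb^{1/2}\>=0$. With it, the condition $\Gamma_a|v\>=0$ becomes the operator equation
\begin{equation}
  \Delta^{-1}(J_a)\,X = X\,\Delta(J_a)\qquad\text{for all } a\in\caA .
\end{equation}
The direction "$\Leftarrow$" is exactly the computation preceding the theorem, with $X=\rb^{1/2}$. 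So only "$\Rightarrow$" needs work.

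For "$\Rightarrow$", I substitute $X = \rb^{1/4} Y \rb^{1/4}$. This is always possible because $\rb$ is invertible, since $\beta$ is finite and $e^{-\beta H}>0$. Writing $\Delta^{-1}(J_a)=\rb^{1/4}J_a\rb^{-1/4}$ and $\Delta(J_a)=\rb^{-1/4}J_a\rb^{1/4}$, the equation becomes
\begin{equation}
  \rb^{1/4}J_a Y\rb^{1/4} = \rb^{1/4} Y J_a \rb^{1/4}.
\end{equation}
Cancelling the invertible factors on both sides leaves $[J_a,Y]=0$ for every $a$.

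Next I extend this to the whole algebra. The set of operators commuting with $Y$ is a unital algebra, so $Y$ commutes with all of $\mathrm{Alg}(\{J_a\})$. By irreducibility and Burnside's theorem, recalled in the background section, this algebra is all of $\caB((\bbC^2)^{\otimes N})$. Hence $Y$ lies in the center of the full matrix algebra, so $Y=cI$. It follows that $X=c\,\rb^{1/2}$, that is, $|v\>=c|\rb^{1/2}\>$.

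For the final claim, note that $\rmM_\caH=\frac12\sum_a\Gamma_a^\dagger\Gamma_a\succeq0$ and that $\<v|\rmM_\caH|v\>=\frac12\sum_a\|\Gamma_a|v\>\|^2$. Therefore $\ker\rmM_\caH=\bigcap_a\ker\Gamma_a$. This kernel is one-dimensional by the previous step and contains $|\rb^{1/2}\>$, so zero is the minimal eigenvalue and it is nondegenerate. This gives the unique ground state.

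The only step needing care is the vectorization bookkeeping: checking the convention in which $(I\otimes A^\top)|\Omega\>=(A\otimes I)|\Omega\>$, so that the transposes in $\Gamma_a$ turn into right multiplication with the correct ordering. Beyond that, the argument is routine.
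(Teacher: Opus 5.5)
Your proposal is correct and follows essentially the same route as the paper: both treat $|v\>$ as an operator $O_v$, rewrite $\Gamma_a|v\>=0$ as $[J_a, e^{\beta H/4}O_v e^{\beta H/4}]=0$ (your $Y$), and use that the commutant of an irreducible algebra is trivial to conclude $O_v\propto e^{-\beta H/2}$. The paper cites Schur's lemma for the trivial commutant where you go through Burnside, and your explicit argument that $\ker\rmM_\caH=\bigcap_a\ker\Gamma_a$ spells out the unique-ground-state claim that the paper's proof leaves implicit.
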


\begin{proof}
    Let $O_v$ be the matrix form of $|v\>$. According to Schur's lemma in operator algebra, if $\mathrm{Alg}(\{J_a\})$ is irreducible, then $[O,J_a] = 0$ for all $a$ iff $O = cI$. Observe that the condition $\Gamma_a|v\> = 0$ for all $a$ is equivalent to
    \begin{equation}
       e^{-\beta H/4} J_a e^{\beta H/4} O_v  = O_v e^{\beta H/4}J_a e^{-\beta H/4} \Longrightarrow [J_a, e^{\beta H/4} O_v e^{\beta H/4}] = 0 \quad \forall a.
    \end{equation}
    Hence, by the irreducibility of $\mathrm{Alg}(\{J_a\})$, we have $e^{\beta H/4}O_v e^{\beta H/4} \propto I$. Hence, $O_v \propto e^{-\beta H/2}$.
\end{proof}

Other types of SoS parent Hamiltonians can also be derived accordingly. Observe that
\begin{equation}
    e^{-\rmi Ht} \otimes (e^{\rmi H t})^\top|\rb^{\frac{1}{2}}\> = |\rb^{\frac{1}{2}}\>.
\end{equation}
Hence, suppose $\Gamma_a$ is a modular annihilator, then 
\begin{equation}
    \Gamma_a(t) = \left(e^{\rmi Ht} \otimes e^{-\rmi H^\top t}\right) \Gamma_a \left(e^{-\rmi Ht} \otimes e^{\rmi H^\top t}\right)
\end{equation}
is also a modular annihilator such that $\Gamma_a(t)|\rb^{1/2}\> = 0$. This observation gives us another type of parent Hamiltonian which was first mentioned in Ref. \cite{leng2026accelerate}:
\begin{equation}
     \mathrm{M}^{\rm int}_\caH = \frac{1}{2}\sum_{a\in\caA} \int_{-\infty}^{\infty}dt\ g(t)\Gamma_a(t)^\dag \Gamma_a(t),
\end{equation}
where $g(t)$ is a non-negative integrable function. Clearly, $\rmM^{\rm int}_\caH \ge 0$ and $|\rb^{1/2}\>$ is its ground state with energy 0.

Another family of SoS parent Hamiltonian is characterized by a positive definite matrix. Consider $G \succ 0$ with $G=R^\dag R$ for
an invertible matrix $R$. Define
\begin{equation}
    \rmM_\caH(G) \coloneqq
    \frac{1}{2}
    \sum_{a,b}
    G_{ab}\Gamma_a^\dag\Gamma_b,\qquad \widetilde{\Gamma}_a
    \coloneqq \sum_b R_{ab}\Gamma_b
\end{equation}
Then,
\begin{equation}
    \rmM_\caH(G) =
    \frac{1}{2}
    \sum_{a,b}
    (R^\dag R)_{ab}\Gamma_a^\dag\Gamma_b =
    \frac{1}{2}
    \sum_{a}\widetilde{\Gamma}_a^\dag\widetilde{\Gamma}_a.
\end{equation}
Note that $\rmM_\caH = \rmM_\caH(I)$. As shown in the following lemma, the spectrum of $\rmM_\caH(G)$ is influenced by $G$.

\begin{lemma}
\label{lemma:preconditioning}
Let $\{\Gamma_a\}_{a\in\caA}$ be a finite family of modular
annihilators with $\Gamma_a|\rb^{1/2}\> = 0$, and let
$G\in\bbC^{|\caA|\times|\caA|}$ be Hermitian and positive definite.
Then $\rmM_\caH(G) \succeq 0$ and $\rmM_\caH(G)|\rb^{1/2}\> = 0$.
Moreover,
\begin{equation}
    \lambda_{\min}(G)\,\rmM_\caH(I)
    \preceq
    \rmM_\caH(G)
    \preceq
    \lambda_{\max}(G)\,\rmM_\caH(I).
\end{equation}
Consequently, if $\rmM_\caH(I)$ has a unique ground state, then
\begin{equation}
    \lambda_{\min}(G)\,
    \mathrm{Gap}\!\left[\rmM_\caH(I)\right]
    \le
    \mathrm{Gap}\!\left[\rmM_\caH(G)\right]
    \le
    \lambda_{\max}(G)\,
    \mathrm{Gap}\!\left[\rmM_\caH(I)\right].
\end{equation}
\end{lemma}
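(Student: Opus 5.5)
The plan is to reduce everything to a quadratic-form computation on an arbitrary vector. Fix $|\psi\rangle$ in the doubled space and define the vector-valued map
\begin{equation}
  \Gamma|\psi\rangle \coloneqq \bigl(\Gamma_a|\psi\rangle\bigr)_{a\in\caA}\in \bbC^{|\caA|}\otimes\caH\otimes\caH ,
\end{equation}
so that $\Gamma$ is a linear operator from the doubled space into $\bbC^{|\caA|}\otimes(\text{doubled space})$. Then
\begin{equation}
  \langle\psi|\rmM_\caH(G)|\psi\rangle=\tfrac12\sum_{a,b}G_{ab}\langle\Gamma_a\psi|\Gamma_b\psi\rangle
  =\tfrac12\,\langle\Gamma\psi|(G\otimes I)|\Gamma\psi\rangle .
\end{equation}
This is the key identity. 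Since $G$ is Hermitian, $\lambda_{\min}(G)\,I\preceq G\preceq\lambda_{\max}(G)\,I$, and tensoring with $I$ preserves the operator inequality. It therefore follows immediately that
\begin{equation}
  \tfrac12\lambda_{\min}(G)\|\Gamma\psi\|^2\le\langle\psi|\rmM_\caH(G)|\psi\rangle\le\tfrac12\lambda_{\max}(G)\|\Gamma\psi\|^2 .
\end{equation}
Moreover, $\tfrac12\|\Gamma\psi\|^2=\langle\psi|\rmM_\caH(I)|\psi\rangle$. Because this holds for every $\psi$, we obtain the sandwich $\lambda_{\min}(G)\rmM_\caH(I)\preceq\rmM_\caH(G)\preceq\lambda_{\max}(G)\rmM_\caH(I)$.

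Positivity follows from the lower bound, since $\lambda_{\min}(G)>0$ and $\rmM_\caH(I)\succeq0$. Alternatively, one can use the factorization $\rmM_\caH(G)=\tfrac12\sum_a\widetilde\Gamma_a^\dagger\widetilde\Gamma_a$ stated above. The annihilation property is direct: each $\Gamma_b|\rb^{1/2}\rangle=0$, so $\rmM_\caH(G)|\rb^{1/2}\rangle=\tfrac12\sum_{a,b}G_{ab}\Gamma_a^\dagger\Gamma_b|\rb^{1/2}\rangle=0$.

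For the gap statement, I would argue via kernels and the variational principle. Both operators are positive semidefinite, so the sandwich shows they have the same kernel: if $\rmM_\caH(G)\psi=0$, then $\lambda_{\min}(G)\langle\psi|\rmM_\caH(I)|\psi\rangle\le0$, which forces $\rmM_\caH(I)\psi=0$, and the converse uses the upper bound. Hence if $\rmM_\caH(I)$ has the unique ground state $|\rb^{1/2}\rangle$ with energy $0$, so does $\rmM_\caH(G)$. Each gap is then the smallest eigenvalue on the orthogonal complement $\mathcal V=\{|\rb^{1/2}\rangle\}^\perp$, which is a common invariant subspace. The gap of $\rmM_\caH(G)$ is
\begin{equation}
  \min_{\psi\in\mathcal V,\ \|\psi\|=1}\langle\psi|\rmM_\caH(G)|\psi\rangle ,
\end{equation}
and the analogous expression holds for $\rmM_\caH(I)$. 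Applying the sandwich inequality pointwise inside this minimization over the same set $\mathcal V$ gives both gap inequalities.

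The only point needing care, which I expect to be the main (minor) obstacle, is justifying that both minimizations run over the same subspace. This is exactly why I establish equality of the kernels first, rather than comparing the second eigenvalues via Courant--Fischer directly. Courant--Fischer would also work, since operator ordering implies ordering of all eigenvalues, and the zero eigenvalue is simple for both operators.
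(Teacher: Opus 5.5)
Your proof is correct and follows essentially the paper's own argument. Your map $\Gamma$ is the paper's block operator $B_\Gamma=\sum_{a}|a\rangle\otimes\Gamma_a$; the operator sandwich comes from conjugating $\lambda_{\min}(G)I\preceq G\preceq\lambda_{\max}(G)I$ by it; the gap bounds follow from the variational characterization on the orthogonal complement of the common kernel, where your explicit kernel-equality step does the job of the paper's remark that $(R\otimes I)B_\Gamma|\psi\rangle=0$ iff $B_\Gamma|\psi\rangle=0$ for invertible $R$ with $G=R^\dagger R$.
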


\begin{proof}
    See \aref{app:lem1}.
\end{proof}

The target ground state of $\rmM_\caH(G)$ depends only on the common kernel of the modular annihilators, whereas the excited-state spectrum depends on the positive quadratic form used to combine them. This freedom defines a parent-Hamiltonian preconditioning problem: choose the generator-space metric to optimize spectral properties subject to normalization or implementability constraints. We will use free-fermion model as a concrete instance in the following sections.

\begin{remarks}
    The same construction applies to arbitrary state of the form $f^2(H) \otimes I|\Omega\>$, where $f(x)$ is positive and invertible on $\operatorname{spec}(H)$. All we need is to change the modular annihilator to
\begin{equation}
    \Gamma_a = f(H)J_a f(H)^{-1} \otimes I - I \otimes \left[f(H)^{-1}J_a f(H)\right]^\top.
\end{equation}
Then the ground state of $\sum_a \Gamma_a^\dag \Gamma_a$ will be proportional to $f^2(H) \otimes I |\Omega\>$. After tracing out the ancilla system, the reduced state becomes 
\begin{equation}
    \frac{1}{\tr(f^4(H))}\sum_{n=0}^{d-1}f^4(E_n)|\psi_n\>\<\psi_n|
\end{equation}
Hence, this procedure can be viewed as an analogue of quantum eigenvalue transformation protocol \cite{gilyen2019quantum}.
This is an algebraic extension of the modular-annihilator construction.
It identifies a parent Hamiltonian for the filtered purification and does
not provide an efficient state-preparation algorithm.
Implementing the filters, block-encoding the resulting annihilators, and
bounding the relevant gap and initial overlap require additional
assumptions.  For the Gaussian choice
\begin{equation}
f(H)=\exp[-(H-E)^2/(4\sigma^2)],
\end{equation}
the reduced target state is
\begin{equation}
\rho_{E,\sigma}\propto
\sum_n e^{-(E_n-E)^2/\sigma^2}|\psi_n\rangle\langle\psi_n|.
\end{equation}

\end{remarks}

\subsection{From parent Hamiltonian to Lindbladian}
\label{sec:map}

We further demonstrate that under additional conditions the constructed parent Hamiltonian can be mapped back to a Lindbladian. Suppose
\begin{equation}
    \rmM_\caH = \sum_\alpha \eta_\alpha A_\alpha \otimes B_\alpha^\top.
\end{equation}
The original superoperator is
\begin{equation}
    \caH(\cdot) = \sum_\alpha \eta_\alpha A_\alpha (\cdot) B_\alpha.
\end{equation}
By virtue of \eref{eq:superH}, the original Lindbladian should be
\begin{gather}
    \caL(\cdot) = -\sum_\alpha \eta_\alpha \rb^{\frac{1}{4}}A_\alpha\rb^{-\frac{1}{4}} (\cdot) \rb^{-\frac{1}{4}} B_\alpha \rb^{\frac{1}{4}} = -\sum_\alpha \eta_\alpha \Delta^{-1}(A_\alpha) (\cdot) \Delta(B_\alpha).
\end{gather}
In the SoS form, for Hermitian generators $J_a = J_a^\dag$,  after mapping $\rmM_\caH $ to the Lindbladian, we obtain 
\begin{equation}\label{eq:sos}
    \caL_{\mathrm{SoS}}(\cdot) \coloneqq \frac{1}{2}\sum_{a\in\caA}\left[J_a (\cdot) \Delta^2(J_a) + \Delta^{-2}(J_a) (\cdot) J_a\right] - \Delta^{-1}(K) (\cdot) - (\cdot) \Delta(K),
\end{equation}
where
\begin{equation}
    K \coloneqq \frac{1}{2}\sum_{a\in\caA}\Delta(J_a)\Delta^{-1}(J_a).
\end{equation}
To ensure that $\caL_{\rm SoS}$ is a Lindbladian, the generator of a quantum Markov semigroup, we need extra conditions.
\begin{theorem}\label{thm:lindbladian}
    Suppose $\mathrm{Alg}\left(\{J_a\}_{a\in\caA}\right) = \caB((\bbC^2)^N)$ and each $J_a$ is Hermitian. Provided that
    \begin{equation} \label{eq:lindblad_condition}
        \sum_{a\in\caA} J_a \otimes \left(\Delta^2(J_a)\right)^\top = \sum_{a\in\caA}\Delta^{-2}(J_a) \otimes J_a^\top. 
    \end{equation}
    Then $\caL_{\mathrm{SoS}}$ is a primitive Lindbladian.
\end{theorem}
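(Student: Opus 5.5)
Our plan is to show that, under \eref{eq:lindblad_condition}, $\caL_{\mathrm{SoS}}$ can be rewritten in the standard form $-\rmi[C,\cdot]+\sum_j(L_j\cdot L_j^\dag-\tfrac12\{L_j^\dag L_j,\cdot\})$, and then to get primitivity from \thref{thm:algebra} together with the similarity transformation \eref{eq:superH}. We organize the argument into three steps.

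\textbf{Step 1: complete positivity of the jump part.} Let $\Phi(X)\coloneqq\frac12\sum_a[J_aX\Delta^2(J_a)+\Delta^{-2}(J_a)XJ_a]$. Its Choi-type operator is $\frac12\sum_a[J_a\otimes\Delta^2(J_a)^\top+\Delta^{-2}(J_a)\otimes J_a^\top]$. The condition \eref{eq:lindblad_condition} says this equals $\sum_a J_a\otimes\Delta^2(J_a)^\top$, i.e.\ $\Phi(X)=\sum_aJ_aX\Delta^2(J_a)=\sum_a\Delta^{-2}(J_a)XJ_a$. To show $\Phi$ is completely positive, we will show the Choi matrix $C_\Phi=\sum_a J_a\otimes \overline{\Delta^2(J_a)}$ (up to the transpose/conjugation convention) is Hermitian positive semidefinite. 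Hermiticity follows because $\Delta^2(J_a)^\dag=\Delta^{-2}(J_a)$ for Hermitian $J_a$, so the adjoint of one expression is the other, and the condition equates them.

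For positivity, we write $\Delta^2(J_a)=\sum_\nu e^{\beta\nu/2}(J_a)_\nu$ in Bohr sectors. We then try to express $C_\Phi$ as $\sum_a(\Delta(J_a)\otimes \mathrm{stuff})$ sandwiched by $\rb^{\pm1/4}$ factors. A cleaner route is to note that $\Phi(X)=\rb^{-1/4}\big[\sum_a\Delta(J_a)\,(\rb^{1/4}X\rb^{1/4})\,\Delta(J_a)^{\dag\prime}\big]\rb^{-1/4}$ type identities hold, reducing positivity to a sum $\sum_a\Delta^{-1}(J_a)Y\Delta^{-1}(J_a)^\dag$ of Kraus form after using the symmetry condition. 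This is the step we expect to be the main obstacle. The symmetry condition must be used to convert the a priori non-Kraus pairing $J_a\,\cdot\,\Delta^2(J_a)$ into a genuinely positive form, for instance by diagonalizing the Hermitian matrix $\sum_a J_a\otimes\Delta^2(J_a)^\top$ and checking that its eigenvalues are nonnegative via the KMS structure. If that fails, we would impose positivity as part of the reading of the condition.

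\textbf{Step 2: dissipative/Hamiltonian split.} We set $\Phi(X)=\sum_jL_jXL_j^\dag$ and verify trace preservation, $\Phi^\dag(I)=\Delta^{-1}(K)+\Delta(K)^\dag$-compatible. Concretely, we compute $\Tr\caL_{\mathrm{SoS}}(X)=0$: we have $\Phi^\dag(I)=\frac12\sum_a[\Delta^2(J_a)J_a+J_a\Delta^{-2}(J_a)]$. Conjugating, $\Delta(K)+\Delta^{-1}(K)$ equals $\frac12\sum_a[\Delta^2(J_a)J_a+J_a\Delta^{-2}(J_a)]$ using the fact that $\Delta$ is an automorphism. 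We then write $\Delta^{-1}(K)=\frac12\Phi^\dag(I)+\rmi C$ and $\Delta(K)=\frac12\Phi^\dag(I)-\rmi C$ with $C$ Hermitian, which requires $\Delta(K)=\Delta^{-1}(K)^\dag$. This follows from $K^\dag$-relations for Hermitian $J_a$, and yields the Lindblad form.

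\textbf{Step 3: primitivity.} By construction the similarity transform \eref{eq:superH} of $\caL_{\mathrm{SoS}}$ is $-\rmM_\caH$. So $\ker\caL_{\mathrm{SoS}}$ is in bijection with $\ker\rmM_\caH$, which by \thref{thm:algebra} is spanned by $|\rb^{1/2}\>$, giving the unique stationary state $\rb$. Since $\rmM_\caH\succeq0$ with a simple zero eigenvalue, all other eigenvalues of $\caL_{\mathrm{SoS}}$ are real and strictly negative, and $\caL_{\mathrm{SoS}}$ is diagonalizable. Hence $e^{t\caL_{\mathrm{SoS}}}(\sigma)\to\Tr(\sigma)\rb=\rb$, which is primitivity.
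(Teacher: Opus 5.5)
\textbf{Where the proposal stands.} Your Steps 2 and 3 are sound and agree with the paper. The identity $\Delta(K)+\Delta^{-1}(K)=\Phi^\dag(I)$ gives trace preservation. The relation $\Delta(K)=\Delta^{-1}(K)^\dag$ produces a Hermitian $C$. Primitivity then follows from the similarity to $-\rmM_\caH$ together with \thref{thm:algebra}.

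The gap is Step 1. It is the only place where \eref{eq:lindblad_condition} is used, you give no argument there, and the partial claims you make are wrong:
\begin{itemize}
\item $\sum_a J_a\otimes\Delta^2(J_a)^\top$ is the natural (vectorized) matrix of $\Phi$, not its Choi matrix. Neither it nor $\sum_a J_a\otimes\overline{\Delta^2(J_a)}$ encodes complete positivity. The Choi matrix of $X\mapsto AXB$ is the realignment $|A\>\!\>\<\!\<B^\dag|$.
\item Your Hermiticity claim fails. The adjoint of $\sum_aJ_a\otimes\Delta^2(J_a)^\top$ is $\sum_aJ_a\otimes\Delta^{-2}(J_a)^\top$, not $\sum_a\Delta^{-2}(J_a)\otimes J_a^\top$.
\item Concretely, the replacement generators of \pref{prop:replacer} satisfy the condition, yet there $\Phi(X)=\Tr(X)e^{-\beta H}$. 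Its natural matrix $|e^{-\beta H}\>\!\>\<\Omega|$ is rank one and non-Hermitian unless $H\propto I$. So your plan to diagonalize that operator and check for nonnegative eigenvalues cannot be carried out.
\item Your fallback of adding positivity as an extra hypothesis would change the theorem.
\end{itemize}

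\textbf{The missing idea.} Work with the Choi matrix and read the condition as a commutation relation. Set $\caJ_P\coloneqq\sum_a|J_a\>\!\>\<\!\<J_a|\succeq0$ and $\caJ_\Delta\coloneqq\rb^{1/2}\otimes\rb^{-1/2,\top}\succ0$, so that $\caJ_\Delta|X\>\!\>=|\Delta^{-2}(X)\>\!\>$.
\begin{itemize}
\item The Choi matrices of $X\mapsto\sum_aJ_aX\Delta^2(J_a)$ and $X\mapsto\sum_a\Delta^{-2}(J_a)XJ_a$ are $\caJ_P\caJ_\Delta$ and $\caJ_\Delta\caJ_P$ respectively.
\item \eref{eq:lindblad_condition} says these two maps coincide, which is exactly $[\caJ_P,\caJ_\Delta]=0$.
\item Since $\caJ_P$ then commutes with $\caJ_\Delta^{1/2}=\rb^{1/4}\otimes\rb^{-1/4,\top}$, and $\caJ_\Delta^{1/2}|X\>\!\>=|\Delta^{-1}(X)\>\!\>$, you get
\[
\caJ(\Phi)=\caJ_P\caJ_\Delta=\caJ_\Delta^{1/2}\caJ_P\caJ_\Delta^{1/2}=\sum_a|\Delta^{-1}(J_a)\>\!\>\<\!\<\Delta^{-1}(J_a)|\succeq0 .
\]
\item Hence $\Phi(X)=\sum_a\Delta^{-1}(J_a)\,X\,\Delta(J_a)$. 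This is precisely the Kraus form you guessed, with $L_a=\Delta^{-1}(J_a)$.
\end{itemize}

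With this step in place, your explicit GKLS route goes through. It is slightly more explicit than the paper's argument. The paper uses the same commuting-positive-operators observation, but only verifies conditional complete positivity $P_\perp\caJ(\caL_{\rm SoS})P_\perp\succeq0$ and then invokes \lref{lem:lindbladian}.
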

\begin{proof}
    See \aref{app:prooflindblad}.
\end{proof}

The condition \eref{eq:lindblad_condition} can also be written as
\begin{equation}
    \left[\sum_{a\in\caA}J_a \otimes J_a^\top, \rb^{\frac{1}{2}}\otimes \rb^{-\frac{1}{2},\top}\right] = 0.
\end{equation}
The condition is used to prove that $\caL_{\rm SoS}$ satisfies conditional complete positivity.  Another way to prove this is through the Bohr frequency decomposition. We can show that $\caL_{\rm SoS}$ is a Davies generator \cite{Davies1974Markovian}.
\begin{corollary} \label{coro:gkls} Suppose $\mathrm{Alg}\left(\{J_a\}_{a\in\caA}\right) = \caB((\bbC^2)^N)$ and each $J_a$ is Hermitian.  If $\caL_{\rm SoS}$ satisfies \eref{eq:lindblad_condition}, then
\begin{equation}
\caL_{\rm SoS}(\cdot) =\sum_{a\in\caA}\sum_{\nu\in\caF(H)} e^{-\beta\nu/2}\left[ J_{a,\nu}(\cdot)J_{a,\nu}^\dag -\frac{1}{2}\bigl\{J_{a,\nu}^\dag J_{a,\nu},\cdot\bigr\} \right]. 
\end{equation} 
\end{corollary}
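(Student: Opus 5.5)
The plan is to expand every term of $\caL_{\rm SoS}$ in \eref{eq:sos} into Bohr-frequency sectors and use \eref{eq:lindblad_condition} to kill all cross terms with mismatched frequencies. Since $\Delta=\Delta_{\beta/4}$, we have $\Delta^{k}(X_\nu)=e^{k\beta\nu/4}X_\nu$. Hermiticity of $J_a$ gives $J_{a,-\nu}=(J_{a,\nu})^\dagger$. Writing $J_a=\sum_\nu J_{a,\nu}$, the two "sandwich" terms become
\begin{equation}
  J_a(\cdot)\Delta^2(J_a)=\sum_{\nu,\nu'}e^{\beta\nu'/2}J_{a,\nu}(\cdot)J_{a,\nu'},\qquad
  \Delta^{-2}(J_a)(\cdot)J_a=\sum_{\nu,\nu'}e^{-\beta\nu/2}J_{a,\nu}(\cdot)J_{a,\nu'} .
\end{equation}

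The key step, which I expect to be the only real obstacle, is to show that \eref{eq:lindblad_condition} forces
\begin{equation}
  T_{\nu\nu'}\coloneqq\sum_{a}J_{a,\nu}\otimes J_{a,\nu'}^\top=0 \qquad\text{whenever } \nu+\nu'\neq0 .
\end{equation}
I would argue as follows. The operator $T_{\nu\nu'}$ is a joint eigenoperator of the commuting superoperators $\operatorname{ad}_{H}\otimes\mathrm{id}$ and $\mathrm{id}\otimes\operatorname{ad}_{H^\top}$, with eigenvalues $(\nu,-\nu')$ (transposition flips the sign). Hence the family $\{T_{\nu\nu'}\}$ is the unique joint-eigenspace decomposition of $\sum_a J_a\otimes J_a^\top$; equivalently, it is Hilbert--Schmidt orthogonal. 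Rewriting \eref{eq:lindblad_condition} in sectors gives
\begin{equation}
  \sum_{\nu,\nu'}\bigl(e^{\beta\nu'/2}-e^{-\beta\nu/2}\bigr)T_{\nu\nu'}=0 .
\end{equation}
By uniqueness of the decomposition, each coefficient times $T_{\nu\nu'}$ vanishes separately. For $\beta>0$ the coefficient is nonzero exactly when $\nu'\neq-\nu$, which proves the claim. Applying the linear map $X\otimes Y^\top\mapsto X(\cdot)Y$ to $T_{\nu\nu'}=0$ then removes all mismatched terms from the sandwich part. The surviving terms give
\begin{equation}
  \frac12\sum_a\bigl[J_a(\cdot)\Delta^2(J_a)+\Delta^{-2}(J_a)(\cdot)J_a\bigr]
  =\sum_a\sum_\nu e^{-\beta\nu/2}J_{a,\nu}(\cdot)J_{a,\nu}^\dagger ,
\end{equation}
because for $\nu'=-\nu$ both exponentials equal $e^{-\beta\nu/2}$.

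For the dissipative part I would apply the multiplication map $X\otimes Y^\top\mapsto XY$ to the same vanishing statement, so that $\sum_a J_{a,\nu}J_{a,\nu'}=0$ for $\nu+\nu'\neq0$. Then
\begin{equation}
  K=\frac12\sum_a\sum_{\nu,\nu'}e^{\beta(\nu-\nu')/4}J_{a,\nu}J_{a,\nu'}
\end{equation}
reduces to the $\nu'=-\nu$ terms. Relabeling $\nu\to-\nu$ yields
\begin{equation}
  K=\frac12\sum_a\sum_\nu e^{-\beta\nu/2}J_{a,\nu}^\dagger J_{a,\nu}.
\end{equation}
Each summand has Bohr frequency zero, so it commutes with $H$, and therefore $\Delta^{\pm1}(K)=K$. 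Consequently
\begin{equation}
  -\Delta^{-1}(K)(\cdot)-(\cdot)\Delta(K)=-\tfrac12\sum_{a,\nu}e^{-\beta\nu/2}\{J_{a,\nu}^\dagger J_{a,\nu},\cdot\} .
\end{equation}
Combining this with the sandwich part gives exactly the stated Davies form.

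The hypothesis $\mathrm{Alg}(\{J_a\})=\caB$ is not needed for the identity itself. It only ensures, via \thref{thm:lindbladian}, that the resulting generator is primitive.
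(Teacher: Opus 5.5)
Your proposal is correct and follows essentially the same route as the paper: both introduce the frequency-pair tensors $T_{\nu\nu'}$ and use the joint-eigenoperator (Hilbert--Schmidt orthogonality) argument to show that the condition forces $T_{\nu\nu'}=0$ unless $\nu'=-\nu$. Both then read off the surviving sectors in the sandwich terms and in $K$; your version is slightly more explicit, since you justify the reduction of $K$ by applying the multiplication map to $T_{\nu\nu'}=0$, note that $\Delta^{\pm1}(K)=K$ because $K$ commutes with $H$, and correctly observe that irreducibility is not used for the identity itself.
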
 
\begin{proof}
    See \aref{app:proof_coro}.
\end{proof}

Next, we give a counterexample to show that the condition \eref{eq:lindblad_condition} is not necessary. We give an exact two-level counterexample. Set $\beta=\log 2$ and $H = -Z$, so that
\begin{equation}
\rho_\beta=\begin{pmatrix}4/5&0\\0&1/5\end{pmatrix},
\end{equation}
Let $J_1=X+Z,J_2 = Y,J_3 = Z$. Then
\begin{equation}
    \sum_{a\in\caA}J_a \otimes J_a^\top = X\otimes X^\top + Y\otimes Y^\top + 2Z \otimes Z^\top + X\otimes Z^\top + Z \otimes X^\top.
\end{equation}
A direct computation shows that the summation does not commute with $\rb^{1/2} \otimes \rb^{-1/2,\top}$. In \aref{app:verification}, we prove that $\caL_{\rm SoS}$ is still a Lindbladian.

\subsection{Example: replacement Lindbladian}

A natural question is how to choose the generators $\{J_a\}$ to maximize the relative spectral gap
\begin{equation}
    r(\{J_a\}) \coloneqq
    \frac{\mathrm{Gap}(\rmM_\caH)}{\|\rmM_\caH\|},
\end{equation}
which is directly related to the simulation cost of quantum Gibbs sampling. In this section, we demonstrate that the answer is trivial without further constraint on $\{J_a\}$.

Let
\begin{equation}
    Z_\beta\coloneqq \Tr(e^{-\beta H}),\qquad \Pi_\beta \coloneqq I - |\rb^{1/2}\>\<\rb^{1/2}|.
\end{equation}
Since $\rmM_\caH\succeq0$ and $\rmM_\caH|\rb^{1/2}\>=0$, one trivially has
$r(\{J_a\})\le1$, with equality iff $\rmM_\caH\propto\Pi_\beta$. The the
next theorem is that the modular-annihilator construction attains this
bound exactly, for every $H$ and every $\beta$.

\begin{theorem}
\label{thm:optimal_generators}
Let $\{W_a\}_{a=1}^{d^2}$ be any Hermitian basis of
$\caB((\bbC^2)^{\otimes N})$ orthonormal in the Hilbert--Schmidt inner
product, $\Tr(W_a W_b)=\delta_{ab}$, and set
\begin{equation}\label{eq:optimal_J}
    J_a = e^{-\beta H/4}W_ae^{-\beta H/4}.
\end{equation}
Then each $J_a$ is Hermitian, $\mathrm{Alg}(\{J_a\})=\caB((\bbC^2)^{\otimes N})$, and
\begin{equation}\label{eq:optimal_parent}
    \rmM_\caH
    =\frac{1}{2}\sum_{a=1}^{d^2}\Gamma_a^\dag\Gamma_a
    = Z_\beta\,\Pi_\beta.
\end{equation}
Consequently $\operatorname{spec}(\rmM_\caH)=\{0,Z_\beta,\ldots,Z_\beta\}$,
\begin{equation}
    \mathrm{Gap}(\rmM_\caH)=\|\rmM_\caH\|=Z_\beta,
    \qquad
    r(\{J_a\})=1.
\end{equation}
\end{theorem}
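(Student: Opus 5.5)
Hermiticity is immediate: $J_a^\dag = e^{-\beta H/4}W_a^\dag e^{-\beta H/4}=J_a$. Irreducibility follows because the map $X\mapsto e^{-\beta H/4}Xe^{-\beta H/4}$ is an invertible linear map of $\caB((\bbC^2)^{\otimes N})$. Hence the $J_a$ form a basis of $\caB$, and $\mathrm{Alg}(\{J_a\})=\caB((\bbC^2)^{\otimes N})$ already as a linear span.

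The core computation is to simplify the modular annihilators. Using $\Delta^{-1}(X)=e^{-\beta H/4}Xe^{\beta H/4}$ and $\Delta(X)=e^{\beta H/4}Xe^{-\beta H/4}$, we get
\begin{equation}
\Delta^{-1}(J_a)=e^{-\beta H/2}W_a,\qquad \Delta(J_a)=W_ae^{-\beta H/2}.
\end{equation}
Write $\sigma\coloneqq e^{-\beta H/2}=Z_\beta^{1/2}\rb^{1/2}$. Then $\Gamma_a=\sigma W_a\otimes I-I\otimes(W_a\sigma)^\top$. I will work at the superoperator level via $A\otimes B^\top\leftrightarrow X\mapsto AXB$, under which
\begin{equation}
\Gamma_a(X)=\sigma W_aX-XW_a\sigma .
\end{equation}
The adjoint with respect to the Hilbert--Schmidt inner product is $\Gamma_a^\dag(Y)=W_a\sigma Y-Y\sigma W_a$.

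Next, expand $\sum_a\Gamma_a^\dag\Gamma_a(X)$ into four terms and apply the completeness relation for a Hilbert--Schmidt-orthonormal basis:
\begin{equation}
\sum_a W_aAW_a=\Tr(A)\,I,\qquad \sum_a \Tr(W_aA)\,W_a=A.
\end{equation}
The first relation follows from $\sum_a W_a\otimes W_a^\top=\mathrm{SWAP}$ in vectorized form, i.e.\ $\sum_a (W_a)_{ij}(W_a)_{kl}=\delta_{il}\delta_{jk}$. The four terms then evaluate as follows.
\begin{itemize}
\item $\sum_a W_a\sigma\sigma W_aX=\Tr(\sigma^2)X=Z_\beta X$.
\item $\sum_a XW_a\sigma\sigma W_a=Z_\beta X$.
\item $-\sum_a W_a\sigma XW_a\sigma=-\Tr(\sigma X)\sigma$.
\item $-\sum_a\sigma W_aX\sigma W_a=-\sigma\Tr(X\sigma)$.
\end{itemize}
Summing and multiplying by $1/2$ gives
\begin{equation}
\rmM_\caH(X)=Z_\beta X-\Tr(\sigma X)\,\sigma .
\end{equation}
Since $\Tr(\sigma^2)=Z_\beta$, the vector $|\rb^{1/2}\>=Z_\beta^{-1/2}|\sigma\>$ is normalized, and $\Tr(\sigma X)\sigma=\<\sigma|X\>\sigma$ because $\sigma$ is Hermitian. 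Therefore $\rmM_\caH=Z_\beta(I-|\rb^{1/2}\>\<\rb^{1/2}|)=Z_\beta\Pi_\beta$.

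The spectral consequences are then immediate: $\Pi_\beta$ is a rank-$(d^2-1)$ projector, so $\operatorname{spec}(\rmM_\caH)=\{0,Z_\beta,\ldots,Z_\beta\}$, and $\mathrm{Gap}=\|\rmM_\caH\|=Z_\beta$, giving $r=1$. Uniqueness of the ground state is also consistent with \thref{thm:algebra}.

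The main obstacle is bookkeeping rather than ideas. The transposes in the vectorization convention $A\otimes B^\top\leftrightarrow A(\cdot)B$ must be tracked consistently, as must the adjoint of $I\otimes B^\top$ being $I\otimes B^{*}$, which matches right-multiplication by $B^\dag$. I would verify the completeness identity $\sum_a W_aAW_a=\Tr(A)I$ explicitly for Hermitian orthonormal bases before using it.
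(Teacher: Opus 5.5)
Your proof is correct and follows essentially the paper's route: compute $\Delta^{-1}(J_a)=e^{-\beta H/2}W_a$ and $\Delta(J_a)=W_ae^{-\beta H/2}$, expand $\sum_a\Gamma_a^\dag\Gamma_a$ into four terms, and collapse each with the completeness relation $\sum_a W_a(\cdot)W_a=\Tr(\cdot)I$. The paper does the same expansion directly on $\bbC^d\otimes\bbC^d$ via $\sum_a W_a\otimes W_a^\top=|\Omega\rangle\langle\Omega|$, and one small slip in your write-up is that this vectorized form is $|\Omega\rangle\langle\Omega|$, not $\mathrm{SWAP}$. Your index identity $\sum_a (W_a)_{ij}(W_a)_{kl}=\delta_{il}\delta_{jk}$ is correct but describes $\sum_a W_a\otimes W_a=\mathrm{SWAP}$, so the argument itself is unaffected.
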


\begin{proof}
    See \aref{app:replace}.
\end{proof}

The construction of \thref{thm:optimal_generators} also satisfies the
hypothesis of \thref{thm:lindbladian}, and the resulting dissipative
generator identifies exactly why it is not an algorithmically useful
optimum.

\begin{proposition}\label{prop:replacer}
For the generators $\{J_a = e^{-\beta H/4}W_a e^{-\beta H/4}\}_{a=1}^{d^2}$, condition
\eref{eq:lindblad_condition} holds, and
\begin{equation}
    \caL_{\rm SoS}(X)=Z_\beta\left(\Tr(X)\rb - X\right).
\end{equation}
\end{proposition}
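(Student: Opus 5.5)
The whole argument rests on one completeness identity for the basis $\{W_a\}$, after which both claims reduce to short conjugation bookkeeping. Since the $W_a$ are Hermitian and Hilbert--Schmidt orthonormal, completeness of the basis gives $\sum_a (W_a)_{ij}\overline{(W_a)_{kl}}=\delta_{ik}\delta_{jl}$. Using $\overline{(W_a)_{kl}}=(W_a)_{lk}$, this becomes
\begin{equation}
  \sum_{a=1}^{d^2} W_a X W_a = \Tr(X)\, I \qquad \text{for all } X,
\end{equation}
or, in vectorized form, $\sum_a W_a\otimes W_a^\top=|\Omega\rangle\langle\Omega|$. I would state this first and use it throughout.

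Write $\sigma\coloneqq e^{-\beta H/4}$, so that $J_a=\sigma W_a\sigma$, $\Delta(Y)=\sigma^{-1}Y\sigma$, and $\sigma^4=e^{-\beta H}=Z_\beta\rho_\beta$. I would compute the needed modular dressings directly:
\begin{equation}
  \Delta(J_a)=W_a\sigma^2,\quad \Delta^{-1}(J_a)=\sigma^2W_a,\quad \Delta^{2}(J_a)=\sigma^{-1}W_a\sigma^3,\quad \Delta^{-2}(J_a)=\sigma^3W_a\sigma^{-1}.
\end{equation}

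\textbf{Condition \eref{eq:lindblad_condition}.} Rather than compare the Kronecker products entrywise, I would compare the superoperators they represent, since $A\otimes B^\top$ corresponds to $X\mapsto AXB$. On the left,
\begin{equation}
  X\mapsto \sum_a J_aX\Delta^2(J_a)=\sigma\Bigl[\sum_a W_a(\sigma X\sigma^{-1})W_a\Bigr]\sigma^3=\Tr(X)\,\sigma^4 .
\end{equation}
On the right,
\begin{equation}
  X\mapsto\sum_a\Delta^{-2}(J_a)XJ_a=\sigma^3\Bigl[\sum_a W_a(\sigma^{-1}X\sigma)W_a\Bigr]\sigma=\Tr(X)\,\sigma^4 .
\end{equation}
In both cases the completeness identity applies, and the trace is invariant under the inner similarity transformation. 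The two maps coincide, so the condition holds.

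\textbf{The formula for $\caL_{\rm SoS}$.} I would substitute into \eref{eq:sos}. The two jump-type terms each give $\tfrac12\Tr(X)\sigma^4$, so together they give $Z_\beta\Tr(X)\rho_\beta$. For the correction term,
\begin{equation}
  K=\tfrac12\sum_a W_a\sigma^4W_a=\tfrac12\Tr(\sigma^4)\,I=\tfrac{Z_\beta}{2}\,I .
\end{equation}
Because $K$ is a multiple of the identity, $\Delta^{\pm1}(K)=K$, and the last two terms contribute $-Z_\beta X$. This yields $\caL_{\rm SoS}(X)=Z_\beta(\Tr(X)\rho_\beta-X)$.

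The only real obstacle is getting the completeness identity right: it needs the transpose/conjugation convention for a Hermitian basis, and the correct ordering of the $\sigma$ factors when passing between $A\otimes B^\top$ and $X\mapsto AXB$. Once that identity is fixed, everything else is mechanical.
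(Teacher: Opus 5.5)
Your proof is correct. It rests on the same key identity as the paper, $\sum_a W_aXW_a=\Tr(X)I$ (equivalently $\sum_a W_a\otimes W_a^\top=|\Omega\rangle\langle\Omega|$), but you organize the computation differently.

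For the condition, the paper writes $\sum_a J_a\otimes J_a^\top=(\sigma\otimes\sigma^\top)|\Omega\rangle\langle\Omega|(\sigma\otimes\sigma^\top)$, with your $\sigma=e^{-\beta H/4}$. It then checks the equivalent commutator form $[\sum_a J_a\otimes J_a^\top,\rho_\beta^{1/2}\otimes\rho_\beta^{-1/2,\top}]=0$ by direct computation. You instead verify \eqref{eq:lindblad_condition} as stated, by showing that both sides represent the superoperator $X\mapsto\Tr(X)e^{-\beta H}$.

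For the formula, the paper reuses $\rmM_\caH=Z_\beta\Pi_\beta$ from \thref{thm:optimal_generators}. It unvectorizes this to $\caH(X)=Z_\beta[X-\Tr(\rho_\beta^{1/2}X)\rho_\beta^{1/2}]$ and then conjugates by $\rho_\beta^{\pm1/4}$. You evaluate the defining expression \eqref{eq:sos} term by term, using $K=\tfrac{Z_\beta}{2}I$.

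Your route is self-contained. It needs neither the preceding theorem nor the equivalence between the two forms of the condition, which the main text states without derivation. It also shows explicitly that the jump terms alone produce $\Tr(X)e^{-\beta H}$, while the $K$-terms reduce to $-Z_\beta X$. The paper's route is shorter once \thref{thm:optimal_generators} is available, and it exhibits $\caL_{\rm SoS}$ directly as the similarity transform of the projector $Z_\beta\Pi_\beta$.
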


\begin{proof}
    See \aref{app:replace}.
\end{proof}

\Pref{prop:replacer} shows that maximal relative energy gap is attained
precisely by the generator family whose associated dynamics is the ideal
sampler. The obstruction to using \thref{thm:optimal_generators} is that $J_a = e^{-\beta H/4}W_a e^{-\beta H/4}$ does not have a closed-form expression in general, and the full set contains exponentially-many terms. Therefore, to make the construction meaningful, we must add constraint on $\{J_a\}$.

\section{Exactly solvable modular transformation}
\label{sec:free_fermion}
Our objective is to construct the parent Hamiltonian without evaluating continuous time integrals or explicitly summing over all Bohr frequencies. This can be achieved whenever one can identify a family of generators $\{J_a:a\in\mathcal A\}$ satisfying three conditions:
\begin{enumerate}
\item \emph{(irreducibility)} the unital algebra generated by $\{J_a\}_{a\in\mathcal A}$ is irreducible;
\item \emph{(efficiency)} the number of generators $|\mathcal A|$ grows at most polylogarithmically with the Hilbert-space dimension;
\item \emph{(tractability)} the modularly dressed operators $\Delta^{\pm1}(J_a)$ can be computed exactly or approximated efficiently.
\end{enumerate}

\subsection{Free fermions}

Free-fermion Hamiltonians provide a natural setting in which all three conditions can be satisfied simultaneously. Consider
$H=\sum_{a,b=1}^{2N}h_{ab}\omega_a\omega_b$,
where $\{\omega_a\}_{a=1}^{2N}$ are Majorana operators and $h$ is the single-particle Hamiltonian matrix in the Majorana basis. Because the commutators $[H,\omega_a]$ close within the linear span of the Majorana operators, imaginary-time evolution maps every Majorana operator to another linear combination of Majorana operators. Consequently, the modular transformation acts linearly on the $2N$-dimensional Majorana space. If each $J_a$ is chosen as a linear combination of Majorana operators, the dressed operators $\Delta^{\pm1}(J_a)$ can therefore be evaluated exactly, establishing tractability. Moreover, only $2N$ generators are required. Since the Hilbert-space dimension is $2^N$, this number is logarithmic in the dimension, establishing efficiency.

In the following paragraphs, we choose
\begin{equation}
    J_a=\sum_{b=1}^{2N}S_{ab}\omega_b,
\end{equation}
where $S$ is an invertible real symmetric matrix satisfying
$[S,h]=0$. Invertibility implies that $\{J_a\}$ spans the same
linear space as the Majorana operators and therefore generates the
full Majorana algebra. 

For the discussion below, we
restrict to the scalar-functional subclass $S=f(h)$, where $f$ satisfies
\begin{equation}
    f: \mathrm{spec}(h) \to \bbR,\qquad f(x = f(-x),\qquad f(x) \neq 0.
\end{equation}
Within this subclass,
the parent Hamiltonian decomposes into independent single-particle
modes, allowing its spectral gap and operator norm to be evaluated
in closed form. We optimize over the admissible family and identify the matrix $S$ minimizing $\|\rmM_\caH\|/\mathrm{Gap}(\rmM_\caH)$,
which provides a natural spectral conditioning measure relevant to SoS-based ground-state preparation and to the relaxation from the maximally mixed state of the associated Lindbladian.

Write the modular annihilator as
\begin{equation}
    \Gamma_a = \sum_{b=1}^{2N} \left(S e^{\beta h}\right)_{ab} \omega_b\otimes I - \sum_{b=1}^{2N} \left(S e^{-\beta h}\right)_{ab} I\otimes \omega_b^\top.
\end{equation}
Its parent Hamiltonian $\rmM_\caH = \sum_{a}\Gamma_a^\dag \Gamma_a/2$ has expansion
\begin{equation}
    \rmM_{\caH} = \frac{1}{2}\sum_{a,b=1}^{2N}\left(S^2 e^{2\beta h}\right)_{ab} (\omega_a \omega_b \otimes I + I \otimes \omega_b^\top \omega_a^\top) - \sum_{a,b=1}^{2N}\left(S^2\right)_{ab}\omega_a \otimes \omega_b^\top.
\end{equation}
By virtue of \thref{thm:lindbladian}, the associated Lindbladian is
\begin{equation}
    \caL_{\mathrm{SoS}}(\cdot) = \sum_{a,b=1}^{2N}\left(S^2 e^{-2\beta h}\right)_{ab} \left(\omega_a (\cdot) \omega_b - \frac{1}{2}\{\omega_b \omega_a, \cdot\}\right).
\end{equation}
The Lindbladian is in the Kossakowski form, and $S^2 e^{-2\beta h}$ is also called the Kossakowski matrix \cite{Gorini1976Completely}. In particular, the DLL parent Hamiltonian belongs to this type.
\begin{theorem}\label{thm:freefermion}
    The DLL parent Hamiltonian is SoS with $S = e^{-2\beta^2 h^2}$.
\end{theorem}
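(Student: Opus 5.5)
The plan is to work at the level of Lindbladians rather than parent Hamiltonians. Both the DLL parent Hamiltonian and $\rmM_\caH$ arise from their Lindbladians through the same similarity transformation in \eref{eq:superH}. It therefore suffices to show that the DLL Lindbladian for the free-fermion Hamiltonian $H=\sum_{ab}h_{ab}\omega_a\omega_b$, with the Majorana operators $\{\omega_a\}$ as jump operators, coincides with $\caL_{\mathrm{SoS}}$ for $S=e^{-2\beta^2h^2}$. By the Kossakowski-form expression derived just above the statement (valid via \thref{thm:lindbladian}), this means showing that the DLL Kossakowski matrix equals $S^2e^{-2\beta h}=e^{-4\beta^2h^2-2\beta h}$, and that the DLL Lindbladian has exactly the Kossakowski form $\omega_a(\cdot)\omega_b-\tfrac12\{\omega_b\omega_a,\cdot\}$ with no residual coherent term. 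We also need $S$ to be admissible: it is real symmetric, even in $h$, invertible, and commutes with $h$. Each of these is immediate because $h$ is Hermitian and purely imaginary, so $h^2$ is real symmetric.

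First, I would compute the Bohr-frequency decomposition of a Majorana operator. Diagonalize $h=\sum_k \lambda_k P_k$ with spectral projectors $P_k$, where the eigenvalues come in pairs $\pm\lambda$. From $e^{zH}\omega_a e^{-zH}=\sum_b(e^{-4zh})_{ab}\omega_b$, the operator $\omega_{a,\nu}=\sum_b (P_k)_{ab}\omega_b$ is an eigenoperator of $\operatorname{ad}_H$ with $\nu=-4\lambda_k$. Hence every jump has only $O(N)$ Bohr components, each a linear combination of Majoranas. Any frequency-space function $\phi(\nu)$ applied to the jumps then becomes the matrix function $\phi(-4h)$ acting on the Majorana vector.

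Next, I would substitute this into the DLL construction. That construction uses a Gaussian filter and a Metropolis/Gaussian-type transition weight, which combine into the frequency weight $\gamma(\nu)|\hat f(\nu)|^2$. Summing over jumps $a$ and using $\sum_a (P_k)_{ab}(P_{k'})_{ac}^*=\delta_{kk'}(P_k)_{bc}$, the dissipative part becomes $\sum_{bc}[\kappa(-4h)]_{bc}\,\omega_b(\cdot)\omega_c$. The target is $\kappa(\nu)=e^{-\beta^2\nu^2/4+\beta\nu/2}$, consistent with $e^{-4\beta^2h^2-2\beta h}$ under the substitution $\nu=-4x$. I would then check that with the DLL parameter choices (filter width tied to $\beta$) this is exactly the weight, up to a constant normalization that can be absorbed into $S$. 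For the coherent term, I would use the fact that in DLL it is fixed uniquely by KMS detailed balance given the dissipative part. Since $\caL_{\mathrm{SoS}}$ is KMS-detailed-balanced with the same jump part, the difference of the two Lindbladians is a KMS-symmetric Hamiltonian commutator $-\rmi[C,\cdot]$. KMS symmetry forces $C$ to commute with $\rho_\beta$ and the Hermiticity relation then forces it to vanish, so the coherent parts agree.

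The main obstacle will be the middle step. One must match conventions precisely: the sign and normalization of the Bohr frequency ($\nu=-4\lambda$), the DLL filter width and its relation to $\beta$, and the factor $e^{\beta\nu/2}$ versus $e^{-\beta\nu/2}$. I would first verify the identity on a single mode ($N=1$, $h$ a $2\times2$ imaginary antisymmetric matrix), where everything reduces to two frequencies $\pm4\lambda$. That fixes the constants, and the general case then follows from the spectral calculus described above.
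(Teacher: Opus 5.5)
Your route is sound: show $\caL_{\mathrm{DLL}}=\caL_{\mathrm{SoS}}$ for $S=e^{-2\beta^2h^2}$, then carry the equality through the similarity map \eqref{eq:superH}, which is a bijection. It differs from the paper only in where the comparison is made. The paper writes out $\rmM_{\caH,\mathrm{DLL}}=K\otimes I+I\otimes K^\top-\sum_{ab}(e^{-4\beta^2h^2})_{ab}\,\omega_a\otimes\omega_b^\top$ and expands $\frac12\sum_a\Gamma_a^\dagger\Gamma_a$ term by term, while you reuse the Kossakowski form of $\caL_{\mathrm{SoS}}$. Both rest on the same computation of $L_a$ as a matrix function of $h$.

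However, the identity in your central step is wrong, and as written the comparison fails. Since $h^\top=\bar h=-h$, the transpose and the complex conjugate of the spectral projector $P_k$ (eigenvalue $\lambda_k$) are both the spectral projector for $-\lambda_k$. Hence
\[
\sum_a (P_k)_{ab}\,\overline{(P_{k'})_{ac}}=\bigl(P_k^\top\,\overline{P_{k'}}\bigr)_{bc}=\delta_{kk'}\,(P_k)_{cb},
\]
not $\delta_{kk'}(P_k)_{bc}$. Equivalently, for $L_a=\sum_b M_{ab}\omega_b$ with $M=\hat f(-4h)$, the Kossakowski matrix is $M^\top\overline{M}=|\hat f|^2(4h)$, not $|\hat f|^2(-4h)$. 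With your version you would need a filter with $|\hat f(\nu)|^2=\kappa(\nu)=e^{\beta\nu/2-\beta^2\nu^2/4}$, and no DLL filter can supply this. The KMS constraint $\Delta^2(L_a)=L_a^\dagger$, i.e. $e^{\beta\nu/2}\hat f(\nu)=\overline{\hat f(-\nu)}$, forces $|\hat f(-\nu)|^2=e^{\beta\nu}|\hat f(\nu)|^2$, which is the opposite asymmetry. So this cannot be fixed by adjusting conventions; it is an index error in your derivation.

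With the corrected identity the argument closes exactly. In the paper's DLL instance each $L_a$ is a single filtered Majorana, with no separate transition weight. The filter is $\hat f(\nu)=\exp[-(1+\beta\nu)^2/8+1/8]=e^{-\beta\nu/4-\beta^2\nu^2/8}$, so $L_a=\sum_b(e^{\beta h-2\beta^2h^2})_{ab}\omega_b$. The Kossakowski matrix is then $|\hat f|^2(4h)=e^{-2\beta h-4\beta^2h^2}=S^2e^{-2\beta h}$, with no leftover constant. None could be absorbed anyway, since the statement fixes $S$.

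Your KMS-uniqueness argument for the coherent term is valid, but it is not needed here. $\sum_aL_a^\dagger L_a=\sum_{bc}(e^{2\beta h-4\beta^2h^2})_{bc}\,\omega_b\omega_c$ commutes with $H$, so the DLL coherent term vanishes outright; this is how the paper handles it. Finally, the Kossakowski form of $\caL_{\mathrm{SoS}}$ is a direct algebraic identity for any real symmetric $S$ commuting with $h$. You therefore do not need the hypotheses of Theorem~\ref{thm:lindbladian} to use it.
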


\begin{proof}
    See \aref{app:proof1}.
\end{proof}

Moreover, we observe that this parent Hamiltonian is a sum of projectors as well.

\begin{proposition}\label{prop:projector}
    Let
    \begin{equation}
        R_{+,a} \coloneqq \frac{1}{2}\{\Delta^{-1}(J_a) + [\Delta^{-1}(J_a)]^\dag\},\qquad R_{-,a} \coloneqq \frac{1}{2\rmi}\{\Delta^{-1}(J_a) - [\Delta^{-1}(J_a)]^\dag\}.
    \end{equation}
    Then both $R_{\pm,a}^2$ are proportional to $I$, and
    \begin{equation}
        \left(\Gamma_a^\dag \Gamma_a\right)^2 = 4 (R_{+,a}^2 + R_{-,a}^2)\left(\Gamma_a^\dag \Gamma_a\right).
    \end{equation}
\end{proposition}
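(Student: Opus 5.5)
The plan is to compute $\Delta^{\pm1}(J_a)$ explicitly in the Majorana basis. Then I would reduce $\Gamma_a$ to the difference of two commuting operators whose squares and anticommutators are scalars. The statement then follows from the cubic identity $\Gamma_a\Gamma_a^\dag\Gamma_a=\lambda_a\Gamma_a$, since $(\Gamma_a^\dag\Gamma_a)^2=\Gamma_a^\dag(\Gamma_a\Gamma_a^\dag\Gamma_a)$. (I read the braces in the definition of $R_{\pm,a}$ as ordinary parentheses, i.e.\ $R_{+,a}$ and $R_{-,a}$ are the Hermitian and anti-Hermitian parts.)

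\emph{Step 1: scalar squares.} Write $A\coloneqq\Delta^{-1}(J_a)=\sum_b c_b\omega_b$ with $c_b=(Se^{\beta h})_{ab}$, and split $c=u+\rmi w$ with $u,w\in\bbR^{2N}$. Then $A^\dag=\sum_b\bar c_b\omega_b$, so $R_{+,a}=\sum_b u_b\omega_b$ and $R_{-,a}=\sum_b w_b\omega_b$. By $\{\omega_a,\omega_b\}=2\delta_{ab}I$ we get $R_{+,a}^2=\|u\|^2I$ and $R_{-,a}^2=\|w\|^2I$. This proves the first claim, and $R_{+,a}^2+R_{-,a}^2=\|c\|^2I$.

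\emph{Step 2: structure of $\Gamma_a$.} Since $h$ is purely imaginary and Hermitian, $h^\top=-h$ and $\overline{e^{\beta h}}=e^{-\beta h}$. With $S$ real, this gives $\Delta(J_a)=\sum_b(Se^{-\beta h})_{ab}\omega_b=A^\dag$, hence $\Gamma_a=X-Y$ with $X=A\otimes I$ and $Y=I\otimes\bar A$. The needed relations are:
\begin{itemize}
\item $X$ and $X^\dag$ commute with $Y$ and $Y^\dag$ (different tensor factors);
\item $XX^\dag+X^\dag X=YY^\dag+Y^\dag Y=2\|c\|^2$, from $\{A,A^\dag\}=2\sum_b|c_b|^2$;
\item $X^2=\alpha$ and $Y^2=\bar\alpha$ with $\alpha=\sum_b c_b^2$.
\end{itemize}
The key observation is that $\alpha$ is real: $\alpha=(Se^{\beta h}(e^{\beta h})^\top S)_{aa}=(Se^{\beta h}e^{-\beta h}S)_{aa}=(S^2)_{aa}$.

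\emph{Step 3: the cubic identity.} Expand $(X-Y)(X^\dag-Y^\dag)(X-Y)$ into eight terms. Use $XX^\dag X=2\|c\|^2X-\alpha X^\dag$ and $XY^\dag X=\alpha Y^\dag$, their mirror images $YY^\dag Y=2\|c\|^2Y-\bar\alpha Y^\dag$ and $YX^\dag Y=\bar\alpha X^\dag$, and group the mixed terms as $-(XX^\dag+X^\dag X)Y+(YY^\dag+Y^\dag Y)X$. The result is
\begin{equation}
\Gamma_a\Gamma_a^\dag\Gamma_a=4\|c\|^2\Gamma_a+(\bar\alpha-\alpha)(X^\dag+Y^\dag).
\end{equation}
The last term vanishes because $\alpha\in\bbR$ by Step 2. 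Multiplying on the left by $\Gamma_a^\dag$ gives $(\Gamma_a^\dag\Gamma_a)^2=4\|c\|^2\Gamma_a^\dag\Gamma_a=4(R_{+,a}^2+R_{-,a}^2)\Gamma_a^\dag\Gamma_a$.

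The main obstacle is the cross term $(\bar\alpha-\alpha)(X^\dag+Y^\dag)$. It disappears only through the realness of $\sum_b c_b^2$, which relies on $S$ real and $h^\top=-h$; the commutation $[S,h]=0$ is not needed here. The remaining work is sign bookkeeping in the eight-term expansion, in particular checking that $\bar A=(A^\dag)^\top$ is the correct right factor and that the Majorana transposes satisfy the same anticommutation relations.
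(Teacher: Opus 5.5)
Your proof is correct, and it organizes the computation differently from the paper. The paper uses the same key input, $\gamma_a^2=(S^2)_{aa}I=(\gamma_a^\dagger)^2$ for $\gamma_a=\Delta^{-1}(J_a)$. It converts this into $\{R_{+,a},R_{-,a}\}=0$; your condition $\alpha\in\mathbb{R}$ is exactly this, since $\mathrm{Im}\,\alpha=2\,u\cdot w$. It then introduces $R_z=\rmi R_+R_-$ and writes $\Gamma_a^\dagger\Gamma_a=2[(r_++r_-)I+Q]$ with $Q=-R_+\otimes R_+^\top+R_-\otimes R_-^\top+R_z\otimes I+I\otimes R_z^\top$. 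Finally it asserts $Q^2=(r_++r_-)^2I$ ``after computation''. Verifying that requires six cross anticommutators to cancel: $\{R_+\otimes R_+^\top,R_-\otimes R_-^\top\}$ cancels against $\{R_z\otimes I,I\otimes R_z^\top\}$, and the rest vanish via $\{R_z,R_\pm\}=0$.

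You instead prove $\Gamma_a\Gamma_a^\dagger\Gamma_a=4\|c\|^2\Gamma_a$ from the relations you list for $X=A\otimes I$ and $Y=I\otimes\bar A$, then left-multiply by $\Gamma_a^\dagger$. Your eight-term expansion and the residual $(\bar\alpha-\alpha)(X^\dagger+Y^\dagger)$ check out.

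The two identities are equivalent: both say the singular values of $\Gamma_a$ lie in $\{0,2\|c\|\}$. So the difference is in bookkeeping. Your route is shorter and fully displayed, uses $R_{\pm,a}$ only for the first claim, and pinpoints the single term removed by the reality of $\sum_b c_b^2$. The paper's route instead yields an explicit Majorana-bilinear formula for the projector $\frac12[I+Q/(r_++r_-)]$. That is the concrete object needed for the detectability-lemma/AGSP direction mentioned in the outlook.

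A small aside: your expression $(Se^{\beta h}(e^{\beta h})^\top S)_{aa}$ uses $S^\top=S$. In general $\sum_b c_b^2=(SS^\top)_{aa}$, so only the reality of $S$ and $h^\top=-h$ are needed, as you say.
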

\begin{proof}
    See \aref{app:AGSP}.
\end{proof}

In the remaining part of this section, we recover some of the analytic results of free-fermion models in Refs. \cite{Smid2025,smid2026rapid} with new notation.
\begin{proposition}\label{prop:cost}
Suppose
   $ H=\sum_{a,b=1}^{2N}h_{ab}\omega_a\omega_b$
is a free-fermion Hamiltonian, where $h$ has spectrum
$\{\pm\lambda_n/2:n=1,\ldots,N\}$. Let $S=f(h)$, where $f$ is a
real-valued function satisfying
    $f(\lambda_n/2)=f(-\lambda_n/2) \neq 0$
for every $n$. Then
\begin{equation}
\begin{aligned}
    \operatorname{Gap}(\rmM_\caH)
    &=
    2\min_n
    \left\{
        f\left(\frac{\lambda_n}{2}\right)^2
        \cosh(\beta\lambda_n)
    \right\},\\
    \|\rmM_\caH\|
    &=
    4\sum_{n=1}^N
    f\left(\frac{\lambda_n}{2}\right)^2
    \cosh(\beta\lambda_n).
\end{aligned}
\end{equation}
\end{proposition}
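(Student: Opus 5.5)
The plan is to reduce $\rmM_\caH$ to a sum of independent fermionic number operators by diagonalizing the single-particle Hamiltonian $h$.

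\textbf{Step 1: mode basis.} Since $h$ is purely imaginary and Hermitian, it is antisymmetric. A real orthogonal $O$ brings $h$ to block form with $2\times 2$ blocks $\tfrac{\lambda_n}{2}\sigma_y$. Because $S=f(h)$ is an even function of $h$, it acts in the $n$-th block as the scalar $f_n\coloneqq f(\lambda_n/2)$. Rotate the Majoranas by $O$, which preserves the Clifford relations. For each $n$ form the complex mode $b_n=(\tilde\omega_{2n-1}+\rmi\tilde\omega_{2n})/2$, the eigencombination of $h$ with eigenvalue $\pm\lambda_n/2$. Then $e^{\pm\beta h}$ acts on this mode as $e^{\pm\beta\lambda_n/2}$ or $e^{\mp\beta\lambda_n/2}$. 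The orthogonal change of generator basis $\tilde J=OJ$ leaves $\sum_a\Gamma_a^\dag\Gamma_a$ invariant, since it is a unitary (indeed orthogonal) recombination of the $\Gamma_a$. In the rotated basis $\rmM_\caH$ therefore splits as $\sum_n \rmM_n$, where $\rmM_n$ involves only the two Majoranas of mode $n$ on each copy.

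\textbf{Step 2: doubled-space fermions.} The main obstacle is that $\omega_b\otimes I$ and $I\otimes\omega_b^\top$ commute rather than anticommute, so the doubled system is not manifestly a free-fermion system. I would fix this with a Klein factor. Let $P$ be the fermion parity on the first copy and set $\bar\omega_b\coloneqq P\otimes\omega_b^\top$; then $\{\omega_a\otimes I,\bar\omega_b\}=0$. The expansion of $\rmM_\caH$ given above contains only the following terms:
\begin{itemize}
\item $\omega_a\omega_b\otimes I$ and $I\otimes\omega_b^\top\omega_a^\top$, which are unchanged when rewritten in terms of $\bar\omega$;
\item $\omega_a\otimes\omega_b^\top=(\omega_a P)\otimes I\cdot\bar\omega_b$.
\end{itemize}
The second type is where care is needed. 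One either restricts to a fixed-parity sector, where $P$ is a sign, or uses the fact that $\rmM_\caH$ commutes with $P\otimes I$. This gives a bona fide quadratic form in $4N$ anticommuting Majoranas; the sign bookkeeping must be checked carefully.

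\textbf{Step 3: spectrum of a single mode.} Within mode $n$, $\Gamma$ restricted to the mode yields two operators of the form
\[
c_n=f_n\bigl(e^{\beta\lambda_n/2}\, b_n\otimes I-e^{-\beta\lambda_n/2}\,\bar b_n^{\dagger}\bigr),
\]
together with a partner $d_n$ obtained by exchanging $b_n\leftrightarrow b_n^\dagger$ and the corresponding exponentials. These satisfy $\{c_n,c_n^\dag\}=\{d_n,d_n^\dag\}=2f_n^2\cosh(\beta\lambda_n)\eqqcolon\kappa_n$, anticommute with each other and with all other modes, and all annihilate $|\rb^{1/2}\>$. Hence $\rmM_n=c_n^\dag c_n+d_n^\dag d_n$ up to the factor $1/2$ and the normalization fixed by $\Gamma_a$. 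The operators $c_n/\sqrt{\kappa_n}$ and $d_n/\sqrt{\kappa_n}$ are canonical annihilators, so
\[
\rmM_\caH=\sum_n\kappa_n\,(\hat n_{c_n}+\hat n_{d_n}),
\]
a sum of $2N$ commuting number operators.

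\textbf{Step 4: read off gap and norm.} The spectrum of $\rmM_\caH$ consists of all sums $\sum_n\kappa_n(m_n+m'_n)$ with $m_n,m'_n\in\{0,1\}$. The unique zero mode is the common vacuum, consistent with \thref{thm:algebra}. The gap is $\min_n\kappa_n=2\min_n f_n^2\cosh(\beta\lambda_n)$, and the norm is attained with all modes occupied, $\sum_n 2\kappa_n=4\sum_n f_n^2\cosh(\beta\lambda_n)$.

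In Step 3, the key verification is the anticommutator computation, which produces $e^{\beta\lambda_n}+e^{-\beta\lambda_n}$. The factor conventions must be checked so that they match exactly the stated constants $2$ and $4$.
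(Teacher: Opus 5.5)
Your outline has the same architecture as the paper's proof: rotate to normal modes, then exhibit two independent excitations of energy $\kappa_n=2f_n^2\cosh(\beta\lambda_n)$ per mode. Your final constants are also right. However, Step~2, on which Steps~3--4 rest, is wrong, not merely in need of sign bookkeeping.

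\emph{The Klein factor $P\otimes I$ does not work.} $\rmM_\caH$ does \emph{not} commute with $P\otimes I$: every cross term $\omega_a\otimes\omega_b^\top$ anticommutes with it. Already for $N=1$, $\rmM_\caH$ couples $|00\rangle$ and $|11\rangle$, which have opposite first-copy parity. So neither escape route you mention exists. The $P\otimes I$ sectors are not invariant, and the parity that is conserved, $P\otimes P^\top$, does not reduce $P\otimes I$ to a number.

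\emph{Consequence for $c_n$.} Every linear combination of your $\bar\omega_b=P\otimes\omega_b^\top$ has the form $P\otimes(\cdot)$. In the rotated basis, however, the relevant unitary combination of $\Gamma_{2n-1},\Gamma_{2n}$ is $\Gamma_{n,+}=\sqrt2 f_n\bigl(e^{\beta\lambda_n/2}b_n\otimes I-e^{-\beta\lambda_n/2}I\otimes b_n^\top\bigr)$. Take even the best choice $\bar b_n^\dagger=P\otimes b_n^\top$. Your $c_n^\dagger c_n$ then reproduces the diagonal terms of $\tfrac12\Gamma_{n,+}^\dagger\Gamma_{n,+}$, but its cross term is $-f_n^2\bigl(b_n^\dagger P\otimes b_n^\top+\mathrm{h.c.}\bigr)$. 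Relative to the correct $-f_n^2\bigl(b_n^\dagger\otimes b_n^\top+\mathrm{h.c.}\bigr)$, this carries an extra factor $P_{\neq n}\otimes I$, the first-copy parity of the other modes (up to your sign convention for $P$). Likewise $c_n|\rb^{1/2}\rangle\propto\mathrm{vec}\bigl(\rb^{1/2}(I-P_{\neq n})b_n\bigr)\neq0$ for $N\ge2$. Since $P_{\neq n}=I$ when $N=1$, a one-mode check hides all of this. Hence $\rmM_\caH=\sum_n\kappa_n(\hat n_{c_n}+\hat n_{d_n})$ is false as an operator identity for $N\ge2$. The two sides happen to be isospectral, but nothing in your argument shows it, so the gap and norm are not established.

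\emph{Repairing the fermionization.} You would have to twist the second copy by the conserved total parity, e.g.\ $\hat R_b=\rmi(P\otimes P^\top)(I\otimes\omega_b^\top)$. This gives $\omega_a\otimes\omega_b^\top=\rmi(P\otimes P^\top)(\omega_a\otimes I)\hat R_b$. Then $\rmM_\caH$ is a quadratic form in each sector $P\otimes P^\top=\pm1$, with a sector-dependent sign on the cross terms. You must still impose the parity constraint in each sector (the usual third-quantization bookkeeping).

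\emph{The paper's route avoids fermionizing the doubled space.} After your Step~1 it uses two facts.
\begin{enumerate}
\item The mode blocks $\rmM_n=\tfrac12(\Gamma_{2n-1}^\dagger\Gamma_{2n-1}+\Gamma_{2n}^\dagger\Gamma_{2n})$ commute pairwise. Each is built from $\omega\omega\otimes I$, $I\otimes\omega^\top\omega^\top$ and $\omega\otimes\omega^\top$ within one mode. For two cross terms from different modes, the anticommutation signs on the two tensor factors cancel.
\item Each block has the single-mode spectrum $\{0,\kappa_n,\kappa_n,2\kappa_n\}$. The two odd-parity states get $\kappa_n$ because the cross term vanishes on them. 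The even-parity pair gives a $2\times2$ block with zero determinant and trace $2\kappa_n$.
\end{enumerate}
Summing the commuting blocks gives $\operatorname{Gap}(\rmM_\caH)=\min_n\kappa_n$ and $\|\rmM_\caH\|=\sum_n 2\kappa_n$. To be fully rigorous, one checks that every combination of block eigenvalues actually occurs, e.g.\ on vectorized ordered products of mode-local operators.
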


\begin{proof}
    See \aref{app:proof_of_gap}.
\end{proof}

Finally, we have the following coefficient matrix that maximizes the relative spectral gap. 
\begin{corollary}\label{coro:opt}
Under the assumptions of Proposition~\ref{prop:cost}, consider the
scalar-functional family $S=f(h)$. If
    $\|\rmM_\caH\|=N$,
then
    $\operatorname{Gap}(\rmM_\caH)\le 1/2$.
This upper bound is attained by
\begin{equation}\label{eq:Sopt}
    S=S_{\mathrm{opt}}
    \coloneqq
    \frac{1}{2}\cosh(2\beta h)^{-1/2}.
\end{equation}
\end{corollary}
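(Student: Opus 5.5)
This follows directly from the closed-form expressions of \pref{prop:cost}; the corollary is a min-versus-average inequality. Set
\begin{equation}
    x_n \coloneqq f\!\left(\tfrac{\lambda_n}{2}\right)^2\cosh(\beta\lambda_n) > 0,\qquad n=1,\ldots,N.
\end{equation}
Then \pref{prop:cost} reads $\operatorname{Gap}(\rmM_\caH)=2\min_n x_n$ and $\|\rmM_\caH\|=4\sum_n x_n$. The normalization $\|\rmM_\caH\|=N$ is therefore equivalent to $\sum_n x_n = N/4$.

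The upper bound uses the fact that a minimum never exceeds the average:
\begin{equation}
    \min_n x_n \le \frac{1}{N}\sum_{n=1}^N x_n = \frac14 .
\end{equation}
This gives $\operatorname{Gap}(\rmM_\caH)\le 1/2$. Equality holds iff all $x_n$ are equal, i.e.\ $f(\lambda_n/2)^2\cosh(\beta\lambda_n)=1/4$ for every $n$.

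For attainment, I will check that $f(x)=\tfrac12\cosh(2\beta x)^{-1/2}$ is admissible. It is real, even in $x$, and nonvanishing on $\operatorname{spec}(h)=\{\pm\lambda_n/2\}$, so $S_{\mathrm{opt}}$ is invertible, real symmetric (since $h$ is purely imaginary Hermitian, $h^2$ and hence $\cosh(2\beta h)$ are real symmetric), and commutes with $h$. Evaluating at $x=\lambda_n/2$ gives
\begin{equation}
    f\!\left(\tfrac{\lambda_n}{2}\right)^2=\frac{1}{4\cosh(\beta\lambda_n)},
\end{equation}
so every $x_n=1/4$. Consequently $\|\rmM_\caH\|=4\cdot N/4=N$ and $\operatorname{Gap}=2\cdot\tfrac14=1/2$, and the bound is saturated.

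Given \pref{prop:cost}, there is no real obstacle. The only point needing care is the matching of scales: the argument of $f$ is an eigenvalue $\lambda_n/2$ of $h$, so $\cosh(2\beta h)$ evaluated there is exactly $\cosh(\beta\lambda_n)$. I will also note that fixing $\|\rmM_\caH\|=N$ loses no generality, since rescaling $f$ scales the gap and the norm by the same factor. Hence the statement is equivalent to saying that $S_{\mathrm{opt}}$ maximizes $\operatorname{Gap}/\|\rmM_\caH\|$ within the class $S=f(h)$, with maximal value $1/(2N)$.
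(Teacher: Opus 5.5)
Your proof is correct and follows the paper's argument: the paper also combines the two closed-form expressions of \pref{prop:cost} into $\|\rmM_\caH\|\ge 2N\,\operatorname{Gap}(\rmM_\caH)$, which is your min-versus-average step, and then checks directly that $f(x)=1/(2\sqrt{\cosh(2\beta x)})$ gives $\|\rmM_\caH\|=N$ and $\operatorname{Gap}(\rmM_\caH)=1/2$. Your extra remarks are correct but not needed for the statement; they concern the equality condition, the admissibility of $S_{\mathrm{opt}}$ (real, symmetric, commuting with $h$), and the scale invariance of the ratio.
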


\begin{proof}
     By \pref{prop:cost}, we have $\|\rmM_\caH\| \ge 2N \mathrm{Gap}(\rmM_\caH)$, hence $\mathrm{Gap}(\rmM_\caH) \le 1/2$. When $f(x) = 1/(2\sqrt{\cosh(2\beta x)})$, it is direct to verify that $f(x) = f(-x)$ for all $x$, $\|\rmM_\caH\| = N$ and $\mathrm{Gap}(\rmM_\caH) = 1/2$. 
\end{proof}

In conjunction with \thref{thm:lindbladian}, the Lindbladian that uses the coefficient matrix in \eref{eq:Sopt} is
\begin{equation}
    \caL_{\mathrm{SoS}}(\cdot) = \frac{1}{4}\sum_{a,b=1}^{2N}\left[\cosh(2\beta h)^{-1} e^{-2\beta h}\right]_{ab} \left(\omega_a (\cdot) \omega_b - \frac{1}{2}\{\omega_b\omega_a, \cdot\}\right).
\end{equation}
We can also analytically solve the mixing time upper bound of this Lindbladian.
\begin{proposition}\label{prop:mixing_time}
Suppose $H = \sum_{a,b=1}^{2N}h_{ab}\omega_a \omega_b$ is a free-fermion model, where $h$ has spectrum $\{\pm \lambda_n/2 : n=1,\ldots,N\}$. Let $S=f(h)$, where $f(x)$ is a real-valued function satisfying $f(\lambda_n/2) = f(-\lambda_n/2)$ for all $n$. Then for every initial state $\rho_0$, 
\begin{equation}
\label{eq:worst_case_ff_simple}
    \frac12
    \left\|
    e^{t\mathcal L_{\rm SoS}}(\rho_0)-\rho_\beta
    \right\|_1
    \le
    2N e^{-g_\star t},\qquad g_{\star} \coloneqq \min_n 2f(\lambda_n/2)^2\cosh(\beta\lambda_n). 
\end{equation}
Therefore,
\begin{equation}\label{eq:mixing}
    t_{\rm mix}(\epsilon)
    \le
    \frac{1}{g_\star}
    \log\left(\frac{2N}{\epsilon}\right).
\end{equation}
\end{proposition}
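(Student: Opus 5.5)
We diagonalize the dissipator in the normal-mode basis, where it decouples into $N$ independent single-mode Lindbladians, and then bound the trace distance mode by mode using a telescoping argument.

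\textbf{Normal-mode reduction.} Since $h$ is purely imaginary Hermitian and antisymmetric, there is a real orthogonal $O$ bringing $h$ to block form $\bigoplus_n \frac{\lambda_n}{2}\begin{pmatrix}0&-\rmi\\ \rmi&0\end{pmatrix}$. Set $\tilde\omega=O\omega$; these are again Majorana operators. Then $H=\sum_n \lambda_n\, \rmi\tilde\omega_{2n-1}\tilde\omega_{2n}$ up to sign convention, i.e.\ a sum of commuting single-mode terms $\lambda_n(2c_n^\dagger c_n-1)$ with $c_n=(\tilde\omega_{2n-1}+\rmi\tilde\omega_{2n})/2$.

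\textbf{Decoupling of $\caL_{\rm SoS}$.} The Kossakowski matrix $S^2e^{-2\beta h}=f(h)^2e^{-2\beta h}$ is a function of $h$, so it is block diagonal in the same basis. Each $2\times2$ block, evaluated on the eigenvectors of $h$ with eigenvalues $\pm\lambda_n/2$, rewrites the mode-$n$ part of $\caL_{\rm SoS}$ as
\[
\gamma^-_n\bigl(c_n\cdot c_n^\dagger-\tfrac12\{c_n^\dagger c_n,\cdot\}\bigr)+\gamma^+_n\bigl(c_n^\dagger\cdot c_n-\tfrac12\{c_nc_n^\dagger,\cdot\}\bigr),
\]
with rates $\gamma^\mp_n\propto f(\lambda_n/2)^2e^{\pm\beta\lambda_n}$. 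The evenness of $f$ is what makes the two rates share the common prefactor $f(\lambda_n/2)^2$. Their sum is $g_n=2f(\lambda_n/2)^2\cosh(\beta\lambda_n)$, consistent with the gap in \pref{prop:cost}. Thus $\caL_{\rm SoS}=\sum_n\caL_n$ with commuting $\caL_n$, and $\caL_n$ acts only on mode $n$ and fixes the single-mode Gibbs factor. The Jordan--Wigner strings need care here; I would handle this by noting that each $\caL_n$ is built from the even-parity products $c_n\cdot c_n^\dagger$, so the strings cancel.

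\textbf{Single-mode contraction.} For one fermionic mode, the adjoint dynamics decays the occupation-number part at rate $g_n$ and the coherences $c_n, c_n^\dagger$ at rate $g_n/2$. I expect the main obstacle to be this rate: coherences in the interacting-mode problem are weighted by parity strings, and with $e^{-g_nt/2}$ the final bound would be weaker than the claimed $e^{-g_\star t}$. I would first check the Heisenberg action on $c_n$ directly. Since $c_n$ is odd, the commutator structure of quadratic Majorana Lindbladians gives rate $\frac12(\gamma^++\gamma^-)$ per Majorana in the convention $\omega\cdot\omega$ with doubled normalization. The paper's normalization ($\omega_a\cdot\omega_b$ rather than $c\cdot c^\dagger$) likely doubles the rates, so the Majorana damping rate is exactly $g_n$. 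I would verify this with the computation $\caL^\dagger(\tilde\omega_a)=-(\mathrm{Re}\,K)\tilde\omega$, where $K=f(h)^2e^{-2\beta h}$ and $\mathrm{Re}\,K=f(h)^2\cosh(2\beta h)$ has eigenvalues $f(\lambda_n/2)^2\cosh(\beta\lambda_n)$, times the factor $2$ coming from $\{\omega_a,\omega_b\}=2\delta_{ab}$. This step fixes every decay rate at $\ge g_\star$.

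\textbf{Assembling the bound.} Write $e^{t\caL}=\bigotimes_n e^{t\caL_n}$ in the parity-graded sense, and let $\caT_n=e^{t\caL_n}$ and $\caP_n$ be the replacement channel onto mode $n$'s Gibbs factor. Each $e^{t\caL_n}=\caP_n+e^{-g_nt}(\text{remainder})$, with the remainder of diamond norm at most $2$. Telescoping,
\[
e^{t\caL}-\prod_n\caP_n=\sum_{k}\Bigl(\prod_{n<k}\caT_n\Bigr)(\caT_k-\caP_k)\Bigl(\prod_{n>k}\caP_n\Bigr),
\]
and since channels are trace-norm contractions, the trace distance is at most $\frac12\sum_k\|\caT_k-\caP_k\|_\diamond\le \frac12\sum_k 2\cdot 2e^{-g_kt}\le 2Ne^{-g_\star t}$. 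Here $\|\caT_k-\caP_k\|_\diamond\le 4e^{-g_kt}$ follows from the explicit single-qubit generalized amplitude-damping form. Setting $2Ne^{-g_\star t}=\epsilon$ gives \eref{eq:mixing}.
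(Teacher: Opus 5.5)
Your route is the paper's: normal modes, then $\caL_{\rm SoS}=\sum_n\caL_n$ with commuting single-mode amplitude-damping pieces, then a per-mode diamond-norm bound, then telescoping. Your final constants also agree. However, the step that reduces each $\caL_n$ to a single-qubit problem is justified incorrectly.

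\textbf{The Jordan--Wigner strings do not cancel.} The proposition is for every $N$-qubit $\rho_0$, and such a state may carry coherences between parity sectors. With $P$ the string, $\tilde a_n\rho\tilde a_n^\dagger=\sigma_n^-(P\rho P)\sigma_n^+$, and $P\rho P\neq\rho$ in general. So in a fixed frame $\caL_n$ is not the generalized amplitude-damping generator on qubit $n$ tensored with the identity. Likewise, $\lim_{t\to\infty}e^{t\caL_n}$ is not the channel that replaces qubit $n$ by its Gibbs factor $\tau_n$.

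For example, take two modes in a frame with $\tilde a_2=Z_1\sigma_2^-$. Then $\caL_2(X_1\otimes I)=-(\gamma_{2,+}+\gamma_{2,-})\,X_1\otimes I$, so $e^{t\caL_2}(X_1\otimes I)\to0$. Replacing qubit $2$ by $\tau_2$ instead returns $2X_1\otimes\tau_2$. With that choice of $\mathcal P_2$, the per-mode bound $\|\mathcal T_2-\mathcal P_2\|_\diamond\le4e^{-g_2t}$ is false, and the telescoping estimate collapses.

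The repair is what the paper does, in three steps.
\begin{enumerate}
\item Define $\mathcal P_n\coloneqq\lim_{t\to\infty}e^{t\caL_n}$.
\item For each $n$ separately, use a mode-dependent unitary: a Jordan--Wigner ordering with mode $n$ first. It exists because the Majorana algebra has a unique $2^N$-dimensional representation up to unitary equivalence. In that frame $\caL_n$ is literally the single-qubit generator tensored with the identity, so unitary invariance of the diamond norm gives $\|\mathcal T_n-\mathcal P_n\|_\diamond\le2e^{-g_nt}+2e^{-2g_nt}$.
\item Obtain $\prod_n\mathcal P_n=\mathcal R_{\rho_\beta}$ from $[\caL_n,\caL_m]=0$, which gives $\prod_n\mathcal P_n=\lim_{t\to\infty}e^{t\caL_{\rm SoS}}$, together with primitivity (one-dimensional kernel of $\rmM_\caH$, Theorem~\ref{thm:algebra}). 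It should not come from a naive tensor factorization.
\end{enumerate}

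\textbf{The rate bookkeeping is off exactly where the exponent is decided.} Your claim $\gamma_n^++\gamma_n^-=g_n$ is wrong by a factor of $2$, and this cannot be left as ``likely''. Write $f_n\coloneqq f(\lambda_n/2)$ and $h_n$ for the $n$-th $2\times2$ block of $h$. The Kossakowski block $f_n^2e^{-2\beta h_n}$ has eigenvalues $f_n^2e^{\pm\beta\lambda_n}$ with unit eigenvectors $(1,\pm\rmi)/\sqrt2$. These give the jump operators $(\tilde\omega_{2n-1}\pm\rmi\tilde\omega_{2n})/\sqrt2$, i.e.\ $\sqrt2\,\tilde a_n^\dagger$ and $\sqrt2\,\tilde a_n$. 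Hence $\gamma_{n,\pm}=2f_n^2e^{\pm\beta\lambda_n}$ and $\gamma_{n,+}+\gamma_{n,-}=2g_n$. Coherences therefore decay at rate $g_n$ and populations at $2g_n$, which yields $e^{-g_\star t}$ rather than $e^{-g_\star t/2}$.

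Your proposed check $\caL^\dagger(\tilde\omega_a)=-2\sum_b(\mathrm{Re}\,K)_{ab}\tilde\omega_b$, with your $K=f(h)^2e^{-2\beta h}$, holds only for the single-mode piece $\caL_n^\dagger$ acting on mode $n$'s own Majoranas. For the full generator, the odd jump operators of every other mode also act on $\tilde\omega_a$; already $\tilde\omega_b\tilde\omega_a\tilde\omega_b=-\tilde\omega_a$. They produce terms proportional to $\tilde\omega_a$ and to $\tilde Z_m\tilde\omega_a$, so the Heisenberg evolution of a single Majorana is not closed on the linear span.
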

\begin{proof}
    See \aref{app:proof_of_mixing}.
\end{proof}
In particular, the choice $f(h) = 1/(2\sqrt{\cosh(2\beta h)})$ has mixing time upper bound $2\log(2N/\epsilon)$, which is independent of the inverse temperature $\beta$. 
    
\section{Approximate modular transformation for Interacting Systems}
\label{sec:krylov}

The modularly dressed operator $\Delta(J_a)$ is tractable whenever $J_a$ belongs to a low-dimensional subspace that is invariant under commutation with $H$, namely,
$[H,J_{a}]\in \operatorname{span}\{J_{a'}:a'\in\caA\}$.
Within such a subspace, the modular transformation reduces to exponentiating the finite-dimensional matrix $c$. For generic interacting many-body systems, however, no closed-form expression for $\Delta(J_a)$ is generally available. The modular-annihilator construction is therefore exact but implicit: an efficient procedure is still required to evaluate the modularly dressed operators entering the parent Hamiltonian. To address this problem, we employ the Krylov--Lanczos recursion method \cite{Hochbruck1997Krylov} to approximate the imaginary-time Heisenberg evolution and thereby obtain efficient approximations to $\Delta(J_a)$.

We first quantify how errors in the modular transformations propagate to the ground state of the resulting approximate parent Hamiltonian. The following proposition bounds the ground-state error in terms of the approximation error of the modularly dressed generators.

    \begin{proposition}\label{prop:spectral_gap}
        Suppose $\mathrm{Alg}(\{J_a\}_{a\in\caA})$ is an unital irreducible algebra on $(\bbC^2)^{\otimes N}$ and $\rmM_\caH$ is an SoS parent Hamiltonian of the form \eref{eq:modular_annihilator} generated from $\{J_a\}_{a\in\caA}$ and each $J_a$ is Hermitian. It has ground state $|\Psi\>$ and spectral gap $\mathrm{Gap}(\rmM_\caH) = g$. Let $\widetilde{\Delta^{\pm 1}}(J_a)$ be the approximation of $\Delta^{\pm 1}(J_a)$ with operator norm error bound $\|\widetilde{\Delta^{\pm 1}}(J_a) - \Delta^{\pm 1}(J_a)\| \le \varepsilon$. Let 
        \begin{equation}
            \widetilde{\rmM}_\caH \coloneqq \frac{1}{2}\sum_{a\in\caA}\left(\widetilde{\Delta^{-1}}(J_a) \otimes I - I\otimes \widetilde{\Delta}(J_a)^\top\right)^\dag \left(\widetilde{\Delta^{-1}}(J_a) \otimes I - I\otimes \widetilde{\Delta}(J_a)^\top\right).
        \end{equation}
        If 
        \begin{equation}
            \sum_{a\in \caA} (\|\Gamma_a\|\varepsilon + \varepsilon^2) \le \frac{g}{8},
        \end{equation}        
        then $\widetilde{\rmM}_\caH$ has a unique ground state $|\widetilde{\Psi}\>$ with spectral gap $\mathrm{Gap}(\widetilde{\rmM}_\caH) \ge g/2$, and
        \begin{equation}
            1 - |\<\Psi|\widetilde{\Psi}\>|^2 \le \frac{4|\caA|\varepsilon^2}{g}.
        \end{equation}
    \end{proposition}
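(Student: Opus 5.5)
The plan is a standard perturbation argument with Weyl's inequality. For the fidelity bound, I will use the fact that the exact ground state is annihilated by every $\Gamma_a$; this gives a bound quadratic in $\varepsilon$ rather than linear.

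\textbf{Step 1: perturbed annihilators.} Write the approximate annihilator as $\widetilde{\Gamma}_a = \Gamma_a + E_a$, where
\begin{equation}
E_a \coloneqq \bigl(\widetilde{\Delta^{-1}}(J_a)-\Delta^{-1}(J_a)\bigr)\otimes I - I\otimes\bigl(\widetilde{\Delta}(J_a)-\Delta(J_a)\bigr)^\top .
\end{equation}
The transpose preserves the operator norm, so $\|E_a\|\le 2\varepsilon$. Then
\begin{equation}
\widetilde{\rmM}_\caH-\rmM_\caH=\frac12\sum_a\bigl(\Gamma_a^\dag E_a+E_a^\dag\Gamma_a+E_a^\dag E_a\bigr).
\end{equation}
Hence
\begin{equation}
\|\widetilde{\rmM}_\caH-\rmM_\caH\|\le\sum_a\bigl(2\|\Gamma_a\|\varepsilon+2\varepsilon^2\bigr)\le g/4,
\end{equation}
where the last inequality is the hypothesis. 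The only thing to be careful about is this constant bookkeeping.

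\textbf{Step 2: gap.} By \thref{thm:algebra}, $\rmM_\caH$ has the unique ground state $|\Psi\>\propto|\rb^{1/2}\>$ with eigenvalue $0$. Its next eigenvalue is therefore $g$. By Weyl's inequality, the two lowest eigenvalues of $\widetilde{\rmM}_\caH$ satisfy
\begin{equation}
\widetilde\lambda_1\le g/4,\qquad \widetilde\lambda_2\ge 3g/4 .
\end{equation}
Also $\widetilde{\rmM}_\caH\succeq0$, since it is a sum of squares, so $\widetilde\lambda_1\ge 0$. Consequently $\widetilde\lambda_1<\widetilde\lambda_2$, which gives a unique ground state $|\widetilde\Psi\>$ and $\mathrm{Gap}(\widetilde{\rmM}_\caH)\ge g/2$.

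\textbf{Step 3: fidelity.} Rather than using Davis--Kahan, which would give an error linear in $\varepsilon\|\Gamma_a\|/g$, evaluate the energy of $|\Psi\>$ directly. Since $\Gamma_a|\Psi\>=0$ for every $a$, we have $\widetilde\Gamma_a|\Psi\>=E_a|\Psi\>$, and so
\begin{equation}
\<\Psi|\widetilde{\rmM}_\caH|\Psi\>=\frac12\sum_a\|E_a|\Psi\>\|^2\le 2|\caA|\varepsilon^2 .
\end{equation}

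Now expand $|\Psi\>$ in the eigenbasis of $\widetilde{\rmM}_\caH$. Because $\widetilde\lambda_1\ge0$, this yields
\begin{equation}
\<\Psi|\widetilde{\rmM}_\caH|\Psi\>\ge\widetilde\lambda_2\bigl(1-|\<\Psi|\widetilde\Psi\>|^2\bigr).
\end{equation}
Combining this with $\widetilde\lambda_2\ge 3g/4$ gives
\begin{equation}
1-|\<\Psi|\widetilde\Psi\>|^2\le\frac{8|\caA|\varepsilon^2}{3g}\le\frac{4|\caA|\varepsilon^2}{g}.
\end{equation}

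The main point needing care is Step 3: one must exploit the exact annihilation $\Gamma_a|\Psi\>=0$ to obtain the quadratic dependence on $\varepsilon$. The Hermiticity of $J_a$ is not essentially needed beyond fixing the form of $\rmM_\caH$.
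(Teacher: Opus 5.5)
Your proposal is correct and follows the paper's proof essentially step for step. Both arguments use the same perturbation bound $\|\widetilde{\rmM}_\caH-\rmM_\caH\|\le g/4$, Weyl's inequality for the gap, and the variational bound $\<\Psi|\widetilde{\rmM}_\caH|\Psi\>\le 2|\caA|\varepsilon^2$ obtained from $\Gamma_a|\Psi\>=0$. The only difference is that you lower-bound the excited energies by $\widetilde\lambda_2\ge 3g/4$ instead of by the gap $\ge g/2$, which gives the slightly sharper constant $8/3$ in place of $4$.
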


    \begin{proof}
        See \aref{app:proof_prop}.
    \end{proof} 

\subsection{Krylov--Lanczos approximation}

Consider the space of linear operators $\caB((\bbC^2)^{\otimes N})$. Define inner product $\<x,y\> = \Tr(x^\dag y)$. Let $V_1\in \caB((\bbC^2)^{\otimes N})$ be a normalized operator such that $\<V_1,V_1\> = \|V_1\|_F^2 = 1$. Let $\mathrm{ad}_H(\cdot)$ be the Hermitian superoperator. Hence, the Krylov subspace is 
\begin{equation}
    \caK_m\left(\mathrm{ad}_H(\cdot),V_1\right) \coloneqq \mathrm{span}\left\{V_1, \mathrm{ad}_H(V_1),\ldots,\mathrm{ad}_H^{m-1}(V_1)\right\}.
\end{equation}
the Lanczos recursion is accomplished by
\begin{enumerate}
    \item $ W_k = \mathrm{ad}_H(V_k) - b_{k-1}V_{k-1},$
    \item $a_k = \Tr(V_k^\dag W_k),$ 
    \item $W_k \leftarrow W_k-a_kV_k,$
    \item $b_k = \|W_k\|_F,$
    \item $V_{k+1} = W_k/b_k,$
\end{enumerate}
with initial condition $b_0 = 0$. We use the following notation to denote the result of Lanczos recursion:
\begin{equation}
    e^{x H} V_1 e^{-xH} = e^{ x\cdot \mathrm{ad}_H}(V_1) \approx  L_V\left(e^{x T_m }t_1\right),\qquad L_V(t_j) \coloneqq V_j \quad \forall j.
\end{equation}
where $T_m$ is the matrix representation of $\mathrm{ad}_H(\cdot)$ obtained from $m$ steps of Lanczos recursion, and $\{t_j\}_{j=1}^m$ is the vector basis of $T_m$. 

 Let $\caK(m,\operatorname{ad}_H, x,V) \coloneq \sqrt{d}L_V(e^{x T_m}t_1)$ be the approximated modular transformation obtained by the $m$ step Lanczos recursion.  Up to this step, we derive the algorithmic error by Krylov subspace method.
\begin{theorem}[Worst-case error bound]\label{thm:worst}
    Suppose $H$ is an $N$-qubit Hamiltonian. Let $V$ be an $1$-local Pauli operator. Then in order to have error bound
\begin{equation}
    \|\Delta(V) - \caK(m,\operatorname{ad}_H,\beta/4,V) \|_F \le \sqrt{2^N}\varepsilon_F,
\end{equation}
it suffices to take
\begin{equation}
    m = O\left(\|H\|\beta + \log(\varepsilon_F^{-1})\right).
\end{equation}
\end{theorem}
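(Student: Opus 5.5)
This is the standard Krylov approximation of a matrix exponential, applied to the Hermitian superoperator $\mathrm{ad}_H$ on $\caB((\bbC^2)^{\otimes N})$ with the Hilbert--Schmidt inner product. The plan is to reduce everything to a polynomial-approximation problem for the scalar function $e^{x\lambda}$, with $x=\beta/4$, on the spectrum of $\mathrm{ad}_H$.

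\emph{Normalization.} Since $V$ is a 1-local Pauli operator, $\|V\|_F=\sqrt{d}$ with $d=2^N$. Hence $V_1=V/\sqrt d$ has unit Frobenius norm, and
\[
\Delta(V)=\sqrt d\, e^{x\,\mathrm{ad}_H}(V_1),\qquad \caK(m,\mathrm{ad}_H,x,V)=\sqrt d\,L_V(e^{xT_m}t_1).
\]
The target bound therefore reduces to showing $\|e^{x\,\mathrm{ad}_H}(V_1)-L_V(e^{xT_m}t_1)\|_F\le\varepsilon_F$.

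\emph{Spectral bound.} $\mathrm{ad}_H$ is Hermitian with respect to $\<x,y\>=\Tr(x^\dag y)$. Its eigenvalues are the Bohr frequencies $E-E'$, so they lie in $[-2\|H\|,2\|H\|]$. The Lanczos matrix $T_m$ is the compression of $\mathrm{ad}_H$ onto $\caK_m$, and its eigenvalues lie in the same interval.

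\emph{Polynomial exactness.} The Lanczos map reproduces polynomials exactly: for every polynomial $p$ of degree at most $m-1$, $p(\mathrm{ad}_H)V_1=L_V(p(T_m)t_1)$. This is the standard tridiagonal-recursion argument, proved by induction on the degree. Subtracting $p$ from both exponentials then gives
\[
\|e^{x\mathrm{ad}_H}V_1-L_V(e^{xT_m}t_1)\|_F\le 2\min_{\deg p\le m-1}\ \max_{\lambda\in[-2\|H\|,2\|H\|]}|e^{x\lambda}-p(\lambda)|.
\]
Here I use that $L_V$ is an isometry, since the $V_j$ are orthonormal, and that $V_1$ and $t_1$ are unit vectors.

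\emph{Approximation of $e^{x\lambda}$.} This is the main step. Rescale $\lambda=2\|H\|u$ with $u\in[-1,1]$, and approximate $e^{\tau u}$, where $\tau=2x\|H\|=\beta\|H\|/2$. The Chebyshev expansion $e^{\tau u}=I_0(\tau)+2\sum_k I_k(\tau)T_k(u)$ has modified-Bessel coefficients with $I_k(\tau)\le (\tau/2)^k e^{\tau}/k!$. There is a real issue here: a direct truncation bound carries the factor $e^{\tau}$, which seems to cost $m\gtrsim\tau$ plus the extra $\log$ needed to beat $e^\tau$. Bounding the absolute error naively therefore gives $m=O(\tau+\log(e^\tau/\varepsilon))=O(\beta\|H\|+\log\varepsilon_F^{-1})$, because $\log e^{\tau}=\tau$ is itself absorbed into the $O(\beta\|H\|)$ term. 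So even the crude bound suffices.

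For a cleaner argument I would instead cite the Hochbruck--Lubich or Saad-type bound: for $m\ge 2\tau$ the error is at most $C e^{-m^2/(5\tau)}\cdot(\text{const})$, and for $m\ge\tau$ it is $\le C e^{\tau}(e\tau/m)^m$. Choosing $m=\lceil e^2\tau\rceil+\lceil\log(C/\varepsilon_F)\rceil+\tau$ makes the right-hand side at most $\varepsilon_F$. The main obstacle is this bookkeeping of the $e^{\tau}$ prefactor; taking $m\ge c\tau$ with $c$ large enough makes $(e\tau/m)^m e^{\tau}$ decay like $e^{-m}$, which yields the stated $m=O(\|H\|\beta+\log\varepsilon_F^{-1})$. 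Multiplying back by $\sqrt d$ gives the claim.
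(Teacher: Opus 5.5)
Your proposal is correct and follows essentially the paper's route: normalize $V_1=V/\sqrt{d}$, apply the Lanczos near-optimality bound (twice the best uniform polynomial error on the spectral interval, with $\|\mathrm{ad}_H\|\le 2\|H\|$), bound the truncation error of $e^{x\lambda}$ by $C e^{\tau}(e\tau/m)^m$, and absorb $\log e^{\tau}=\tau$ into the $O(\beta\|H\|)$ term, just as the paper's polynomial-approximation lemma does. The one slip is your quoted Hochbruck--Lubich bound $Ce^{-m^2/(5\tau)}$: for a growing exponential on a symmetric interval it needs an $e^{\tau}$ prefactor and holds only for an intermediate range of $m$, but your final count does not use it.
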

\begin{proof}
    Because $V$ is a Pauli operator, we have $\|V\|_F = \sqrt{d}$. By virtue of \lref{lem:polyapprox}, in order to have $\|e^{xH}V_1 e^{-xH} - L_V( e^{xT_m}t_1)\|_F \le \varepsilon$, it suffices to take
    \begin{equation}
        m = O\left(x\|\mathrm{ad}_H(\cdot)\|_{F \to F} + \log(\varepsilon^{-1})\right).
    \end{equation}
    Note that the induced metric satisfies 
    \begin{equation}
        \|\mathrm{ad}_H(\cdot)\|_{F \to F} = \max_{X\neq 0}\frac{\|[H,X]\|_F}{\|X\|_F} \le 2\|H\|.
    \end{equation}
    Then the theorem is proved by setting $V_1 = V/\sqrt{d}, x = \beta/4,$ and $\varepsilon = \varepsilon_F$. 
\end{proof}
Note that the bound in \thref{thm:worst} is in terms of the Frobenius 2-norm, but we need the operator norm upper bound in \pref{prop:spectral_gap}. In the next section, we will use the locality argument to provide an operator norm upper bound that does not depend on the full system dimension $d$.

\subsection{Locality-improved bound}

The worst-case estimate obtained in the previous subsection is generally too pessimistic for local quantum many-body systems. When the initial operator is local, one can combine imaginary-time quasi-locality with the Krylov approximation to obtain substantially improved bounds.

Let $V$ be a single-site Pauli operator supported on lattice site \(i\), and consider a one-dimensional Hamiltonian with interaction range $r$ and local interaction strength bounded by $J$:
\begin{equation} \label{eq:finiterange}
H = \sum_j h_j,
\qquad \mathrm{diam}\left(\operatorname{supp}(h_j)\right) \le r,\qquad 
J\coloneqq\sup_{\ell\in\Lambda}
\sum_{j:\ell\in \operatorname{supp}(h_j)}\|h_j\|,
\end{equation}
where $J$ is independent of the total system size. We assume $\operatorname{supp}(h_j) \neq \operatorname{supp}(h_k)$ for all $j \neq k$. 
Let \(\Lambda_R(i)\) denote the interval of radius \(R\) centered at site \(i\), and define the truncated Hamiltonian 
\begin{equation}
H_{R,i} \coloneqq \sum_{\operatorname{supp}(h_j)\subseteq \Lambda_R(i)} h_j .
\end{equation} 
The Araki-type locality estimate
of Ref.~\cite{perez2023locality} tells how large $R$ should be to keep
\begin{equation}
\left\|
e^{xH}Ve^{-xH}-e^{xH_{R,i}}Ve^{-xH_{R,i}}
\right\| \le \varepsilon_{\rm LR}.
\end{equation}
Hence, we can directly apply the Lanczos approximation to the finite-volume evolution generated by \(H_{R,i}\) instead of $H$. The core idea is to replace $H$ with $H_{R,i}$ and limit attention to the subspace $\caH_{R} \coloneq \bigotimes_{j\in \Lambda_R(i)}\bbC^2$, which contains only $2R+1$ qubits. We can define an effective Frobenius norm $\|X\|_{R,F}$ on this subspace. Since $\|V\|_{R,F} = \sqrt{2^{2R+1}}$ for any Pauli operator $V$ on $\caH_R$, the Krylov approximation of modularly dressed operator becomes
\begin{equation}
    \caK(m,\operatorname{ad}_{H_{R,i}},\beta/4,V) \coloneq \sqrt{2^{2R+1}}L_V(e^{\beta T_m/4}t_1).
\end{equation}
Hence, the dependence on the total system size $N$ is removed. In conjunction with \lref{lem:polyapprox} in \aref{app:lanczos}, we obtain the following result.

\begin{theorem}[Locality-improved error bound]\label{thm:local}
Suppose $H$ is an $N$-qubit Hamiltonian of the form \eref{eq:finiterange}, and $V$ is a Pauli operator supported on the $i$th site, and $\beta < 1/(Jr+J)$. Then for error tolerance \(0 < \varepsilon < e^{-1}\), there exists a truncated Hamiltonian \(H_{R,i}\) supported on a neighborhood of the $i$th site of radius
\begin{equation}
    R = O(\beta Jr^3 + r\log(\varepsilon^{-1})),
\end{equation}
such that the \(m\) step Lanczos approximation satisfies
\begin{equation}
    \|\Delta(V)-\caK(m,\operatorname{ad}_{H_{R,i}},\beta/4,V)\| \le \varepsilon.
\end{equation}
provided that
\begin{equation}
m=O\left(R+ \log(\varepsilon^{-1})\right) = O(\beta J r^3 + r\log(\varepsilon^{-1})).
\end{equation}
\end{theorem}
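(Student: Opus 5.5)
The proof splits the error into two pieces with the triangle inequality,
\begin{equation}
\|\Delta(V)-\caK(m,\operatorname{ad}_{H_{R,i}},\beta/4,V)\|
\le
\|\Delta(V)-e^{\beta H_{R,i}/4}Ve^{-\beta H_{R,i}/4}\|
+
\|e^{\beta H_{R,i}/4}Ve^{-\beta H_{R,i}/4}-\caK(m,\operatorname{ad}_{H_{R,i}},\beta/4,V)\|,
\end{equation}
and allots each piece $\varepsilon/2$. The first term is a truncation error, handled by the Araki-type imaginary-time locality estimate of Ref.~\cite{perez2023locality}. The second is a pure Krylov error on the finite region $\Lambda_R(i)$, handled by \lref{lem:polyapprox} exactly as in \thref{thm:worst}, but with $H$ replaced by $H_{R,i}$ and $d$ replaced by $2^{2R+1}$.

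For the truncation step, I would invoke the locality result in the form: for one-dimensional finite-range interactions with $x=\beta/4$ below the convergence radius $\sim 1/(Jr+J)$, the series $\sum_k x^k\operatorname{ad}_H^k(V)/k!$ converges. The commutator expansion shows that $\operatorname{ad}_H^k(V)$ and $\operatorname{ad}_{H_{R,i}}^k(V)$ coincide for $k\lesssim R/r$, because each commutator extends the support by at most $r$. The difference is therefore bounded by the tail $\sum_{k>R/r}$ of the series. Counting the nested commutators by the number of terms touching the growing support (at most $\sim r$ per site, each of norm controlled by $J$) gives a bound of the form $\|\operatorname{ad}^k_H(V)\|\le k!\,(c Jr)^k$ up to polynomial factors in $r$. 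Hence the tail decays like $(c\beta J r)^{R/r}$ times $e^{O(\beta J r^2)}$-type prefactors. Setting this below $\varepsilon/2$ gives $R=O(\beta J r^3+r\log\varepsilon^{-1})$, where the $r^3$ absorbs the prefactor and the conversion from commutator depth to radius. The hypothesis $\beta<1/(Jr+J)$ is exactly what makes the ratio less than one, and $\varepsilon<e^{-1}$ lets constants be absorbed into $\log\varepsilon^{-1}$.

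For the Krylov step, I would work in $\caB(\caH_R)$ with the normalized start $V_1=V/\sqrt{2^{2R+1}}$ and apply \lref{lem:polyapprox} with $x=\beta/4$ and $\|\operatorname{ad}_{H_{R,i}}\|_{F\to F}\le 2\|H_{R,i}\|\le 2(2R+1)J$. This yields a normalized Frobenius error $\varepsilon'$ with $m=O(\beta JR+\log\varepsilon'^{-1})$. To convert to operator norm, I would use $\|X\|\le\|X\|_{R,F}$ on the $(2R+1)$-qubit space, so the unnormalized error is at most $\sqrt{2^{2R+1}}\varepsilon'$. Choosing $\varepsilon'=\varepsilon 2^{-R-1}/2$ costs only $O(R+\log\varepsilon^{-1})$ in $m$. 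Since $\beta J<1$, the $\beta JR$ term is $O(R)$, giving $m=O(R+\log\varepsilon^{-1})$, and substituting $R$ gives the stated bound.

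The main obstacle is the truncation step: extracting the precise $r^3$ dependence and verifying that the convergence condition in Ref.~\cite{perez2023locality} matches $\beta<1/(Jr+J)$ for $x=\beta/4$. I would first try to quote their theorem directly and only redo the cluster/commutator counting if the constants do not match. The dimension loss $\sqrt{2^{2R+1}}$ in the Krylov step is benign, because it only adds $O(R)$ to $m$.
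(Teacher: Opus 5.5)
Your proposal is essentially the paper's proof. It uses the same triangle-inequality split with $\varepsilon/2$ per term, the finite-range Araki bound of Ref.~\cite{perez2023locality} for the truncation (the paper's \thref{thm:araki}), and \lref{lem:polyapprox} on the $(2R+1)$-qubit region with $\|X\|\le\|X\|_{R,F}$, $\|H_{R,i}\|\le(2R+1)J$, $\beta J<1$, and Krylov tolerance $\varepsilon_{\mathrm{K}}=\varepsilon/2^{R+3/2}$. The one point to fix is how you read that bound: it decays factorially for every $x>0$, as $e^{2xJ+a}a^{q}/q!$ with $a=4xJ\bar r^{2}e^{1+4xJ\bar r}$ and $q=\lfloor R/\bar r\rfloor+1$, and the hypothesis $\beta J(r+1)<1$ only keeps $a=O(\beta J r^{2})$; your fallback Taylor-tail count instead gives a geometric ratio of order $\beta J r$, which the hypothesis bounds only by $r/(r+1)$, so it would not reproduce the stated $r$-dependence and you should quote the theorem directly.
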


\begin{proof}
    See \aref{app:local}. 
\end{proof}
Note that the implicit constants (except $\varepsilon$) depend only on the locality structure of the Hamiltonian and are independent of the system size $N$. At a cost, the approximation is only valid in high temperature regime $\beta < 1/(Jr+J)$. 

\subsection{Krylov-approximated dissipative generator}

The Krylov approximation of the modular transformation can also be
propagated to the corresponding dissipative generator. We emphasize,
however, that a finite Krylov truncation does not in general preserve
conditional complete positivity. Therefore, without an additional
positivity condition, the resulting object should be regarded as an
approximate dissipative generator rather than an exact Lindbladian.

For each $J_a$, there exists a sequence of operators $\{J_{a}^{(i)}, i=0,1,\ldots,m\}$, such that $J_a^{(0)}  =J_a$, and
\begin{equation}
    \caK(m, \mathrm{ad}_H, \beta/4, J_a) = \sum_{i=0}^m  d_i J_a^{(i)}. 
\end{equation}
Therefore, consider the operators spanned by $\{J_a^{(i)}:a\in\caA, i = 0,\ldots,m\}$, denote its basis as $\{L_\alpha\}$. Then the SoS superoperator can be written in the form of 
\begin{equation}
    \caL_{\rm SoS}(\cdot) = \sum_{\alpha,\beta} \Xi_{\alpha,\beta}\left[L_\alpha (\cdot)L_\beta^\dag - \frac{1}{2}\{L_\beta^\dag L_\alpha,\cdot\}\right].
\end{equation}
This superoperator is a Lindbaldian when $\Xi \ge 0$. Hence, we numerically find the optimal approximation by
\begin{equation}
    \Xi_* \coloneqq \arg\min_{X: X \ge 0}\|X - \Xi\|_F.
\end{equation}
Then we can approximate the Lindbladian with matrix $\Xi_*$. However, without further assumptions, there is no quantitative result on the error bound of this approximation. We leave it for future work.

\section{Numerical Validation}
\label{sec:numerical}

In \sref{sec:num_ff}, we numerically verify the observations in Propositions~\ref{prop:cost} and \ref{prop:mixing_time} with a free-fermion model with $N = 100$ modes. Numerical results illustrate that for a random free-fermion model, the coefficient matrix $S_{\rm opt}$ has a smaller reduced QSVT cost proxy and faster relaxation from the maximally mixed state of Lindbladian evolution. In \sref{sec:num_krylov}, we apply the Krylov-Lanczos approximation to an interacting fermionic model, and illustrate finite-size accuracy in weak interaction or high temperature regime.

\subsection{Free-fermion models with different coefficient matrices}
\label{sec:num_ff}

Recall that with a fixed coefficient matrix $S$, the parent Hamiltonian is constructed by
\begin{equation}
    J_a = \sum_b S_{ab}\omega_b, \qquad \Gamma_a = \Delta^{-1}(J_a)\otimes I - I \otimes \Delta(J_a)^\top,\qquad \rmM_\caH = \frac{1}{2}\sum_a \Gamma_a^\dag \Gamma_a.
\end{equation}
The Lindbladian $\caL$ is the unvectorization of the matrix form
\begin{equation}
    \rmM_\caL = -\left(\rb^{\frac{1}{4}}\otimes \rb^{\frac{1}{4},\top}\right)\rmM_\caH\left(\rb^{-\frac{1}{4}}\otimes \rb^{-\frac{1}{4},\top}\right).
\end{equation}

In this subsection, we compare the performance of three different coefficient matrices:
\begin{equation}
    S_{\mathrm{Gaussian}} = e^{-2\beta^2 h^2},\qquad S_{\mathrm{id}} = I,\qquad S_{\mathrm{opt}} = \frac{1}{2\sqrt{\cosh(2\beta h)}}.
\end{equation}
For each $S$, there are two ways to obtain the Gibbs state: the first is to prepare the ground state of $\rmM_\caH$ using quantum singular value transformation (QSVT) \cite{gilyen2019quantum}, the second is to simulate the Lindbladian evolution. We discuss how to quantify their performance separately. In particular, \fref{fig:costcompareN100}a is about the QSVT cost of ground state preparation, and \fref{fig:costcompareN100}b is about the relaxation from the maximally mixed state of the Lindbladian.

Thanks to the SoS factorization \cite{king2026quantum}, the cost of state preparation by QSVT can be made more efficient than the vanilla method \cite{Lin2020near}. Observe that the parent Hamiltonian can be written as
\begin{equation}
\rmM_{\caH}
=
\frac{1}{2}\sum_{a\in\caA}\Gamma_a^\dagger\Gamma_a
=
\rmB_{\caH}^\dagger\rmB_{\caH}
\ge 0,
\qquad
\rmB_{\caH}
\coloneqq
\frac{1}{\sqrt{2}}\sum_{a\in\caA}\ket{a}\otimes\Gamma_a .
\end{equation}
Since $\Gamma_a|\rb^{1/2}\>=0$ for all $a$, the purified Gibbs state lies in $\ker(\rmM_{\caH})=\ker(\rmB_{\caH})$. Under the irreducibility assumption, this kernel is one-dimensional, so $|\rb^{1/2}\>$ is the unique ground state.

Suppose an $\alpha_B$-normalized block encoding of $\rmB_{\caH}$ is available, with $\alpha_B
\ge
\|\rmB_{\caH}\|
=
\sqrt{\|\rmM_{\caH}\|}$.
The normalized singular-value gap separating the kernel of $\rmB_{\caH}$ from the remaining singular subspaces is
$\sqrt{\operatorname{Gap}(\rmM_{\caH})}\alpha_B^{-1}$.
Hence, QSVT can approximate the projector onto the ground state to accuracy $\varepsilon$ using
\begin{equation}
O\left(
\frac{\alpha_B}{
\sqrt{\operatorname{Gap}(\rmM_{\caH})}
}
\log\frac{1}{\gamma\varepsilon}
\right)
\end{equation}
queries to the block encoding of $\rmB_{\caH}$.

If one block-encoding query costs $C_{\rm BE}(\beta)$ and the initial state has ground-state overlap $\gamma = |\<\rb^{1/2}|\psi_{\rm input}\>|$, the total preparation cost is
\begin{equation}
O\left(
\frac{
C_{\rm BE}(\beta)\alpha_B
}{
\gamma\sqrt{\operatorname{Gap}(\rmM_{\caH})}
}
\log\frac{1}{\gamma\varepsilon}
\right).
\end{equation}
In the favorable case $C_{\rm BE}(\beta)=O(\beta)$, $\gamma=\Theta(1)$, and $\alpha_B=\Theta(\sqrt{\|\rmM_{\caH}\|})$, the total cost becomes
\begin{equation}
O\left(
\beta
\sqrt{
\frac{
\|\rmM_{\caH}\|
}{
\operatorname{Gap}(\rmM_{\caH})
}
}
\log\frac{1}{\varepsilon}
\right).
\end{equation}
At fixed precision, we therefore use
\begin{equation}\label{eq:cost}
\operatorname{cost}(\beta)
\coloneqq
\beta
\sqrt{
\frac{
\|\rmM_{\caH}\|
}{
\operatorname{Gap}(\rmM_{\caH})
}
}
\end{equation}
as a reduced cost proxy.

As to the performance of $e^{\caL_{\rm SoS} t}$, we directly compute the upper bound of trace-distance obtained in \eref{eq:trdistance_bound} where the initial state is fixed as $\rho(0) = I/2^N$. Note that the true trace-distance $\|\rho(t) - \rb\|_1/2$ should never exceed 1. However, we do not include this constant upper bound in the illustration.

In \fref{fig:costcompareN100}, we compare the reduced simulation proxy of the first method, and the relaxation from the maximally mixed state of the second method. The model we choose is a random free-fermion model with $N=100$ modes sampled as follows. Let $g$ be $2N \times 2N$ matrix with $g_{ab} \sim \caN(0,1)$. Set $h = \rmi (g - g^\top)$ and normalize it to $\|h\|_1 = 1$. The random free-fermion model is thus $H = \sum_{a,b=1}^{2N}h_{ab}\omega_a \omega_b$. As shown by the numerical tests in \fref{fig:costcompareN100}, the performance of $S_{\mathrm{opt}}$ is better than the other two in both scenarios. In terms of the cost of QSVT state preparation, the function cost$(\beta)$ of $S_{\rm opt}$ increases slower than those of $S_{\rm id}$ and $S_{\rm Gaussian}$. As to the relaxation from the maximally mixed state, the trace-distance upper bound decreases much faster with $\|\rmM_\caH\|t$ when we choose $I/2^N$ as the initial state. 

\begin{figure}[htbp]
    \centering
    \includegraphics[width=0.6\linewidth]{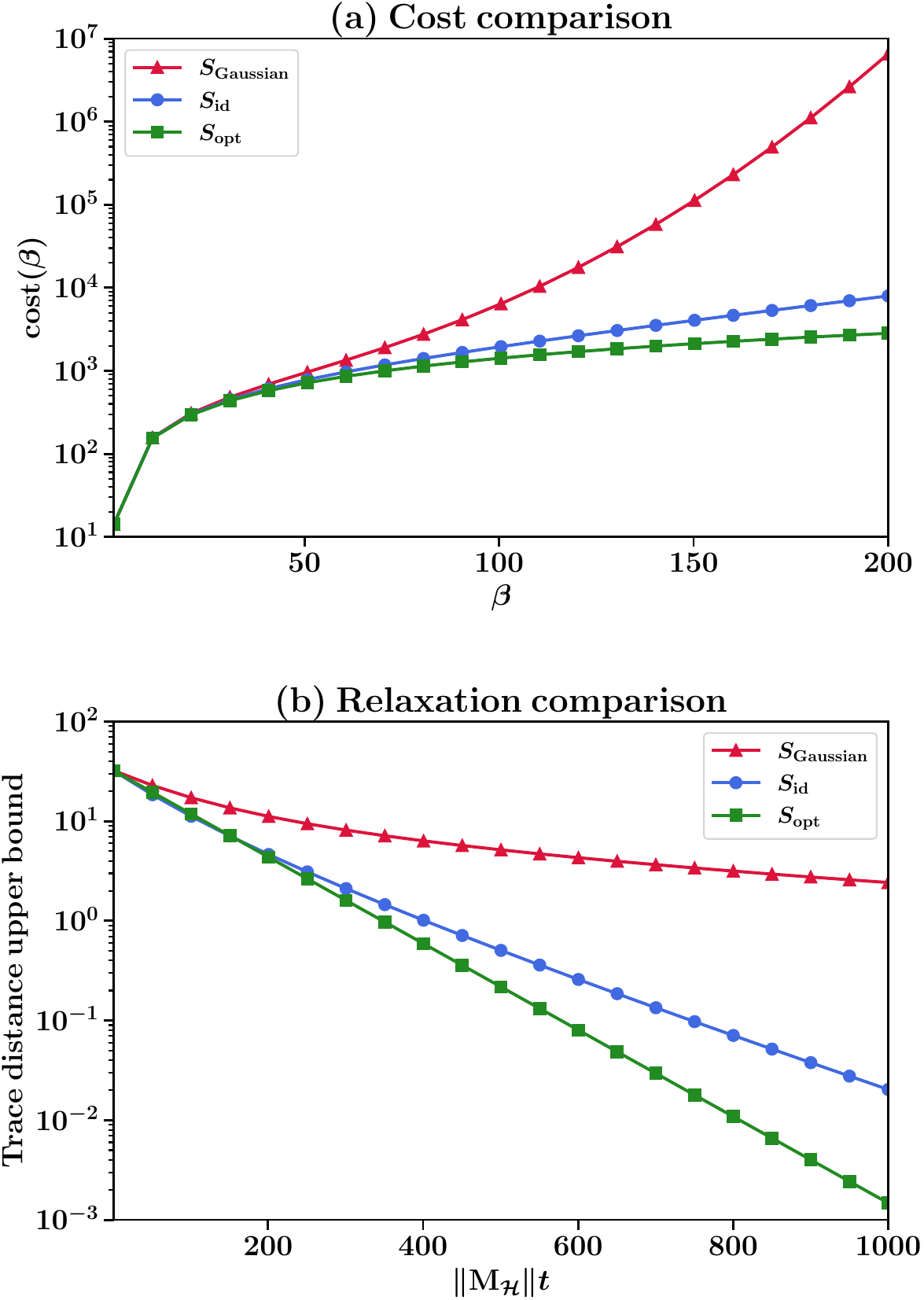}
    \caption{The comparison between the three coefficient matrices, where the free-fermion model is generated by random coefficients and renormalization $\|h\|_1 = 1$. There are $N = 100$ modes in the model. In plot (a), the horizontal axis is the inverse temperature $\beta\in [1,200]$. The vertical axis is cost$(\beta)$ defined in \eref{eq:cost}, which reflects the reduced query-cost proxy using SoS spectral amplification. In plot (b), we compare the relaxation from the maximally mixed state of the corresponding Lindbladian, and fix the inverse temperature to be $\beta = 100$. The horizontal axis is the evolution time multiplied by $\|\rmM_\caH\|$, the vertical axis is the analytical upper bound of the trace distance derived in \eref{eq:mixing}.}
    \label{fig:costcompareN100}
\end{figure}

\subsection{Krylov-Lanczos approximation of interacting fermion models}
\label{sec:num_krylov}

In this subsection, we numerically evaluate the efficiency of the Krylov--Lanczos approximation by examining the operator-norm error of the modularly dressed operators, the fidelity between the exact and approximate ground states, and the spectral gaps of the corresponding parent Hamiltonians.
 The model we use is
\begin{equation}\label{eq:ff_int}
    H = \sum_{a,b=1}^{2N}h_{ab}\omega_a \omega_b + U\sum_{a=1}^{N-1} \left(n_a - \frac{I}{2}\right)\left(n_{a+1} - \frac{I}{2}\right)
\end{equation}
where
\begin{equation}
    c_a \coloneqq \frac{1}{2}(\omega_{2a-1} + \rmi \omega_{2a}),\qquad n_a \coloneqq c_a^\dag c_a.
\end{equation}
We set $N = 4$ and $U = 10^{-2}, 10^{-1}, 1, 10$. The single-particle matrix $h$ is sampled in the same way as in \fref{fig:costcompareN100}. In the Krylov-Lanczos approximation, we set $m = 8,16,24$, and compute the estimation of $\Delta^{\pm 1}(\cdot)$ separately, instead of treating $\Delta^{-1}(\cdot)$ as the Hermitian conjugate of $\Delta(\cdot)$. The set of generators is chosen as $\{J_a\} = \{X_n, Z_n\}_{n=1}^{N}$. We quantify the performance of the approximation from three aspects: the maximal operator norm distance of modularly dressed operators 
\begin{equation}\label{eq:max_eps}
    \epsilon_{\max} \coloneqq \max_a \|\Delta^{\pm 1}(J_a) - \caK(m, \mathrm{ad}_H, \pm\beta/4, J_a)\|,
\end{equation}
the ground state fidelity, and the spectral gap of the parent Hamiltonian. We verified that both the exact and Krylov-approximated parent Hamiltonians have numerically nondegenerate ground states throughout the reported parameter range, with the first excitation gap exceeding the prescribed degeneracy tolerance. Figures~\ref{fig:krylov}, \ref{fig:gap}, and \ref{fig:fidelity} consistently show that the Krylov approximation remains highly accurate at moderate inverse temperature $\beta$, but eventually breaks down as $\beta$ or the coupling strength $U$ increases. Enlarging the Krylov dimension $m$ substantially postpones this breakdown. 

\begin{figure}[htbp]
    \centering
    \includegraphics[width=0.75\linewidth]{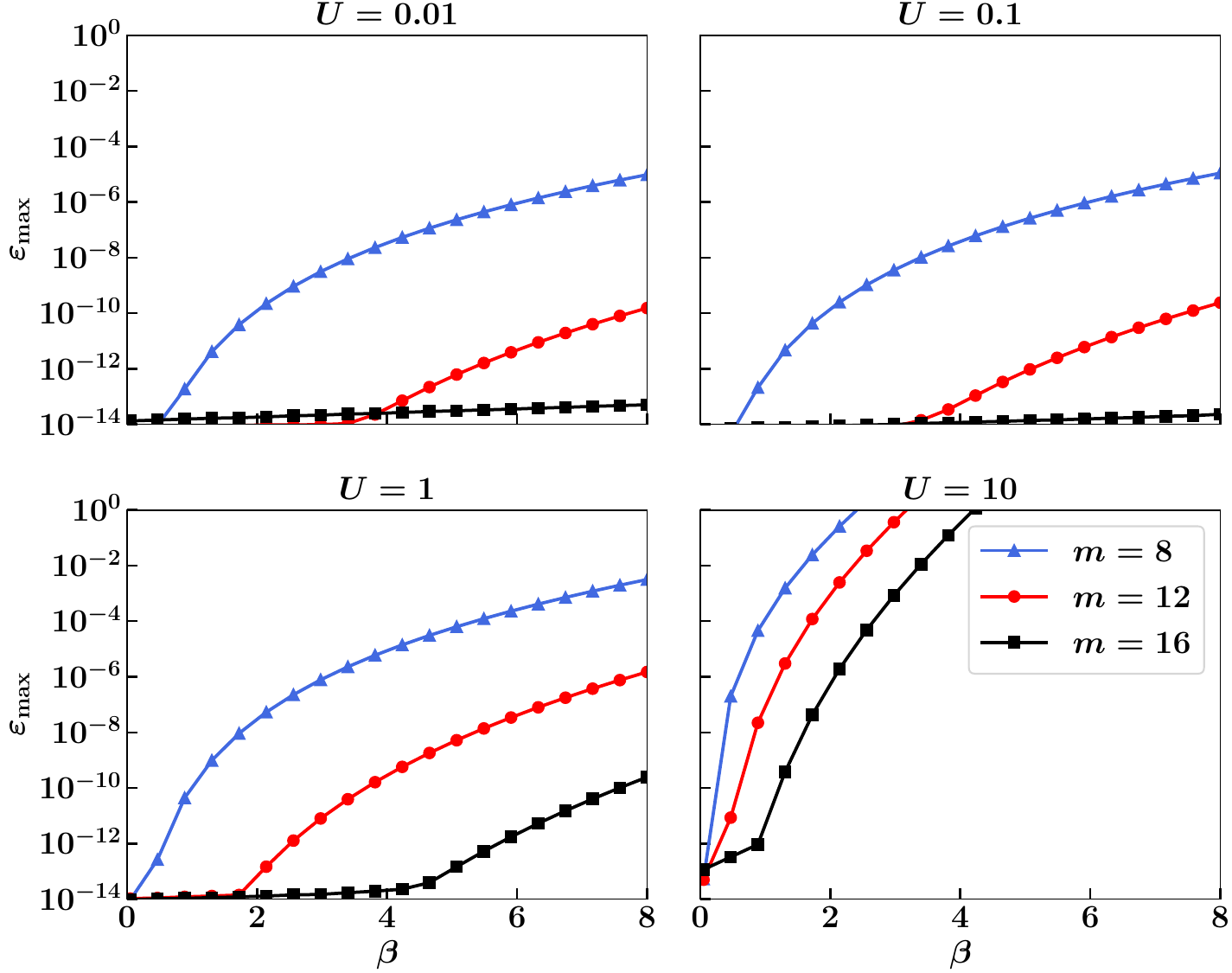}
    \caption{The accuracy of approximate modular transformation of interacting fermionic model in \eref{eq:ff_int} with system size $N = 4$ and interaction strength $U = 10^{-2}, 10^{-1}, 1, 10$. The horizontal axis corresponds to the inverse temperature $\beta$. The vertical axis corresponds to $\epsilon_{\max}$ in \eref{eq:max_eps} for $m = 8,16,24$ separately. The set of generators is 1-local Pauli operators $\{X_n,Z_n\}_{n=1}^N$.}
    \label{fig:krylov}
\end{figure}

\begin{figure}[htbp]
    \centering
    \includegraphics[width=0.75\linewidth]{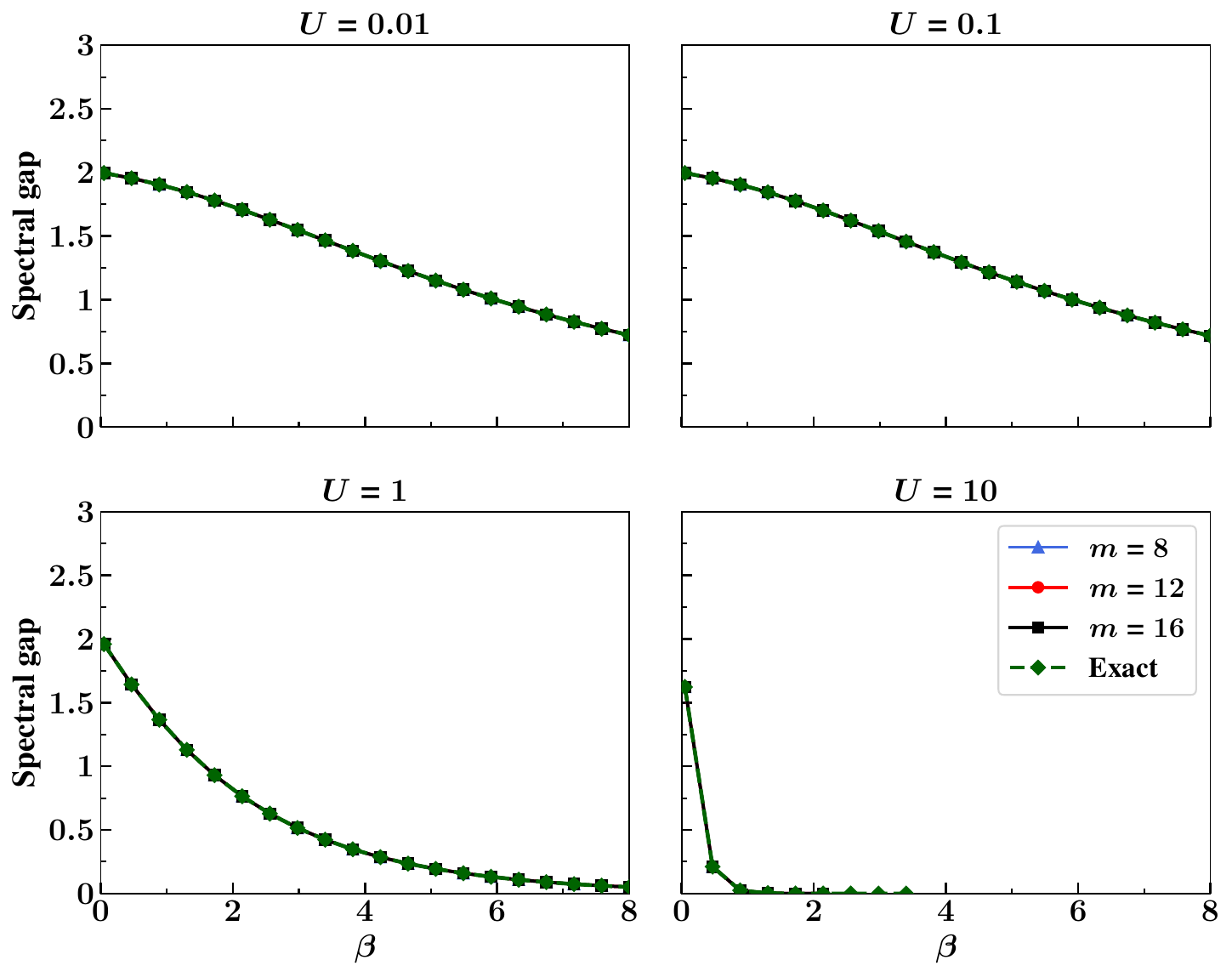}
    \caption{Spectral gap of the exact parent Hamiltonian and Krylov approximated parent Hamiltonian with system size $N = 4$ and interaction strength $U = 10^{-2}, 10^{-1}, 1, 10$. The horizontal axis corresponds to the inverse temperature $\beta$. The vertical axis corresponds to the spectral gap of $\rmM_\caH$ and $\widetilde{\rmM}_\caH$ for $m = 8,16,24$ separately. The set of generators is 1-local Pauli operators $\{X_n,Z_n\}_{n=1}^N$. We cut off the computation when the spectral gap of the Krylov approximated parent Hamiltonian at $m = 24$ is less than or equal to $10^{-8}$, which is considered as having degenerate ground space in the numerical sense. Note that the exact curve is visually indistinguishable from $m=24$ curve over the displayed range.}
    \label{fig:gap}
\end{figure}

\begin{figure}[htbp]
    \centering
    \includegraphics[width=0.75\linewidth]{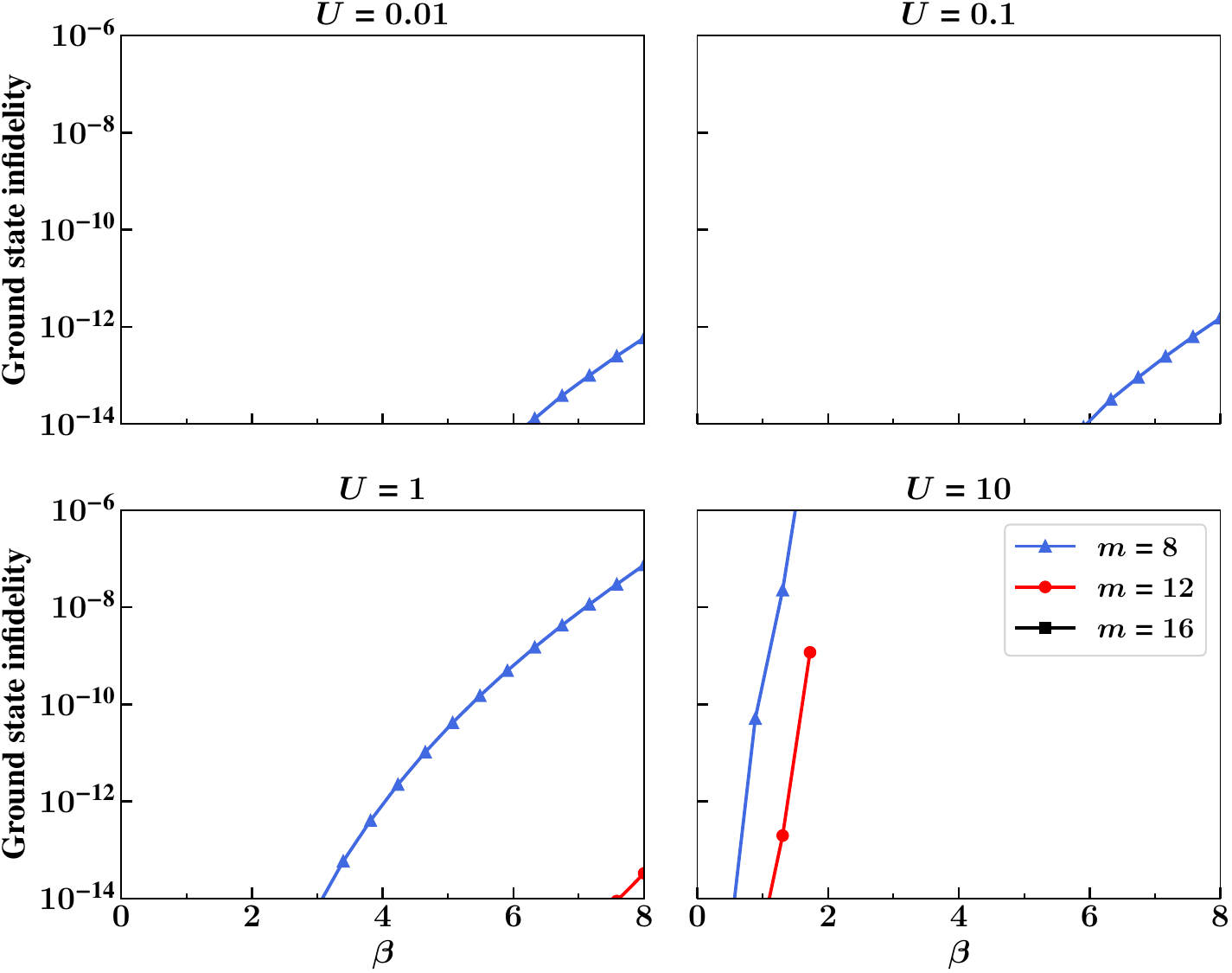}
    \caption{Ground state fidelity between the ground state of the exact parent Hamiltonian and Krylov approximated parent Hamiltonian with system size $N = 4$ and interaction strength $U = 10^{-2}, 10^{-1}, 1, 10$. The horizontal axis corresponds to the inverse temperature $\beta$. The vertical axis corresponds to the fidelity between $|\rb^{1/2}\>$ and the ground state of $\widetilde{\rmM}_\caH$ for $m = 8,16,24$ separately. The set of generators is 1-local Pauli operators $\{X_n,Z_n\}_{n=1}^N$. We cut off the computation when the spectral gap of the Krylov approximated parent Hamiltonian at $m = 24$ is less than or equal to $10^{-8}$, which is considered as having degenerate ground space in the numerical sense.}
    \label{fig:fidelity}
\end{figure}

\section{Conclusion and outlook}
\label{sec:conclusion}

In this work, we developed a unified framework based on modular transformations for constructing sum-of-squares parent Hamiltonians of purified Gibbs states. The resulting factorization identifies the purified Gibbs state as the common zero mode of a discrete family of modular annihilators, thereby making positivity and frustration freeness explicit while avoiding continuous-time integrals and full Bohr-frequency decompositions. Because the same set of generators determines both the parent Hamiltonian and the associated dissipative dynamics, this construction establishes a direct connection between ground-state preparation and Lindbladian thermalization. We further proved that the purified Gibbs state is the unique ground state whenever the generators form a unital irreducible algebra (\thref{thm:algebra}). In addition, positive definite weightings of the modular annihilators provide a natural preconditioning freedom that modifies the excited-state spectrum and spectral gap without changing the target ground state (\lref{lemma:preconditioning}).

For free-fermion systems, we showed that the modular transformation is exactly solvable and that our framework produces an entire family of SoS parent Hamiltonians parameterized by a real symmetric coefficient matrix $S$. In particular, the choice $S_{\mathrm{Gaussian}} \propto e^{-2\beta^2 h^2}$ recovers the canonical DLL parent Hamiltonian. Exploiting this structure, we derived closed-form expressions for the
spectral gap and operator norm for the scalar-functional subclass
$S=f(h)$, and proved that, within this subclass, the spectral gap at
fixed operator norm is maximized by
$S_{\mathrm{opt}}\propto 1/\sqrt{\cosh(2\beta h)}$ (\coref{coro:opt}). Numerical experiments on random free-fermion models supported this observation, in terms of both the QSVT-based ground state preparation cost and the relaxation from the maximally mixed state of the associated Lindbladian evolution.

To address generic interacting systems, where the modularly dressed operators admit no closed form, we developed a Krylov--Lanczos approximation of the imaginary-time evolution. We proved a perturbation lemma bounding the ground state infidelity of the approximate parent Hamiltonian in terms of the modular transformation approximation error and the spectral gap (\pref{prop:spectral_gap}). To keep the dimension-normalized Frobenius error bounded by $\varepsilon$, the required Krylov dimension scales linearly as $O(\|H\|\beta + \log(\varepsilon^{-1}))$ (\thref{thm:worst}); for a one-dimensional Hamiltonian with interaction range $r$ and local interaction strength $J$, combining the Krylov method with Araki-type imaginary-time quasi-locality improves the requirement to $O(\beta J r^3 + r\log(\varepsilon^{-1}))$ provided that $\beta J(r+1) < 1$ (\thref{thm:local}). Numerical simulations of interacting fermion models support the predicted dependence on the inverse temperature, the interaction strength, and the Krylov dimension.

Several directions remain open. First, our locality-improved Krylov bound relies on relatively coarse quasi-locality estimates and still grows polynomially with $\beta$. A more refined understanding of operator growth in the Krylov basis may lead to sharper bounds. Second, the sufficient condition in \thref{thm:lindbladian} has so far been verified only for the replacement Lindbladian and the free-fermion family. Characterizing the generator sets $\{J_a\}$ that satisfy this condition beyond exactly solvable models is therefore an important direction for future work. Third, we observe that all SoS parent Hamiltonians in the free-fermion setting are sum of weighted projectors, which suggests a potential construction of approximate ground-state projectors (AGSP) through the detectability lemma \cite{anshu2016simple,fang2026quantum}. The implications of this structure for the entanglement properties of Gibbs states \cite{kuwahara2021improved,kim2025thermal} deserve further investigation. More broadly, the modular-annihilator framework provides a unified perspective connecting Gibbs state preparation, Lindbladian dynamics, and operator algebras, and opens new avenues for studying finite-temperature quantum many-body systems.

\section{Data Availability}

The code and data used to generate the results reported
in this manuscript are publicly available at Ref.~\cite{github2026}. 

\section{Acknowledgement}

C.Y. acknowledges support from the Shanghai Municipal Science and Technology Major Project (Grant No. 2019SHZDZX01-ZX04) and the institutional research funding of Shanghai University under Project No. N.14-G204-26-603, and J.T. was supported by the JSPS KAKENHI Grant Numbers JP25K17310, JP25H01391, JP25H01388, JP25K24845, and JP25K24848. C.Z. acknowledges support from the Natural Sciences and Engineering Research Council of Canada (NSERC) through Discovery Grant RGPIN-2026-06304; from the Faculté des sciences and the Institut quantique at Université de Sherbrooke; and from the Institut transdisciplinaire d’information quantique (INTRIQ), a strategic cluster funded by the Fonds de recherche du Québec – Nature et technologies. The authors used GPT-5.6 Sol and Fable 5 to assist with language editing and with checking and refining mathematical arguments. All AI-assisted outputs were critically reviewed and, where appropriate, revised by the authors, who take full responsibility for the scientific content and integrity of the manuscript.

\newpage

\let\oldaddcontentsline\addcontentsline
\renewcommand{\addcontentsline}[3]{}

\bibliography{main}

\let\addcontentsline\oldaddcontentsline

\clearpage
\newpage

\clearpage
\onecolumngrid

% ============================================================
% Supplemental Material numbering
% ============================================================

% Reset every independent counter.
\setcounter{section}{0}
\setcounter{equation}{0}
\setcounter{figure}{0}
\setcounter{table}{0}

\setcounter{theorem}{0}
\setcounter{lemma}{0}
\setcounter{proposition}{0}
\setcounter{corollary}{0}
\setcounter{definition}{0}
\setcounter{problem}{0}

% Printed numbering in the Supplemental Material.
\renewcommand{\thesection}{S\arabic{section}}
\renewcommand{\theequation}{S\arabic{equation}}
\renewcommand{\thefigure}{S\arabic{figure}}
\renewcommand{\thetable}{S\arabic{table}}

\renewcommand{\thetheorem}{S\arabic{theorem}}
\renewcommand{\thelemma}{S\arabic{lemma}}
\renewcommand{\theproposition}{S\arabic{proposition}}
\renewcommand{\thecorollary}{S\arabic{corollary}}
\renewcommand{\thedefinition}{S\arabic{definition}}
\renewcommand{\theproblem}{S\arabic{problem}}

% Unique PDF hyperlink destinations.
% These prevent conflicts with counters used in the main text.
\renewcommand{\theHsection}{supp.section.\arabic{section}}
\renewcommand{\theHequation}{supp.equation.\arabic{equation}}
\renewcommand{\theHfigure}{supp.figure.\arabic{figure}}
\renewcommand{\theHtable}{supp.table.\arabic{table}}

\renewcommand{\theHtheorem}{supp.theorem.\arabic{theorem}}
\renewcommand{\theHlemma}{supp.lemma.\arabic{lemma}}
\renewcommand{\theHproposition}{supp.proposition.\arabic{proposition}}
\renewcommand{\theHcorollary}{supp.corollary.\arabic{corollary}}
\renewcommand{\theHdefinition}{supp.definition.\arabic{definition}}
\renewcommand{\theHproblem}{supp.problem.\arabic{problem}}

\begin{center}
  \textbf{\large
  Modular-Annihilator Parent Hamiltonians for Purified Gibbs States: Spectral Design
  and Controlled Approximation: Supplemental Material}
\end{center}

\bigskip

\section{Literature Review}
\label{sec:gibbs_sampler_review}

Preparing Gibbs states is a fundamental task in quantum simulation, quantum many-body physics, and quantum algorithms. A natural approach is to engineer an open-system dynamics whose stationary state is the desired Gibbs state. If this dynamics converges sufficiently rapidly, simulating it provides an algorithm for thermal-state preparation. Such methods are commonly referred to as dissipative quantum Gibbs samplers.

The study of dissipative Gibbs sampling originates from the theory of open quantum systems. In the weak system-bath coupling limit, Davies derived a Markovian master equation that relaxes a quantum system toward thermal equilibrium \cite{Davies1974Markovian}. Davies generators therefore provide one of the canonical physical models of quantum Gibbs sampling. Their equilibrium properties follow from the thermal properties of the bath and an appropriate quantum detailed-balance condition.

Quantum detailed balance plays a role analogous to detailed balance in classical Markov-chain Monte Carlo. Unlike in the classical case, however, several inequivalent notions of quantum detailed balance exist. Two particularly important notions are GNS and KMS detailed balance \cite{Alicki1976DetailedBalance,Kossakowski1977Quantum}. GNS detailed balance is the stronger condition and is satisfied by Davies generators, while KMS detailed balance defines a broader class of thermalizing dynamics. These symmetry properties make it possible to analyze convergence using spectral gaps, Poincar'e inequalities, and quantum logarithmic Sobolev inequalities \cite{Kastoryano2013Quantum}. 

For commuting Hamiltonians, dissipative Gibbs sampling is relatively well understood. Kastoryano and Brandao established a connection between spatial clustering of Gibbs states and rapid mixing of the corresponding Gibbs samplers \cite{Kastoryano2016Quantum}. For generic noncommuting many-body Hamiltonians, the Davies construction faces an important difficulty. It relies on resolving individual transition frequencies of the Hamiltonian. In a many-body system, energy levels may become exponentially dense as the system size increases, so resolving all relevant frequencies can require extremely long evolution times. Consequently, although Davies dynamics has a clear physical interpretation and exact thermal fixed point, it is generally not an efficient algorithmic construction for large interacting quantum systems.

An early attempt to overcome this difficulty was the quantum Metropolis algorithm proposed by Temme \emph{et al.} \cite{Temme2011QuantumMetropolis}. This algorithm quantizes the classical Metropolis acceptance-rejection procedure using phase estimation and quantum rewinding. It established an important conceptual connection between quantum Gibbs preparation and Markov-chain Monte Carlo. However, its reliance on coherent energy estimation leads to significant implementation difficulties. This motivated the development of continuous-time dissipative approaches that avoid precise measurement of the many-body spectrum.

A major recent development is the construction of Gibbs samplers based on smoothly filtered jump operators \cite{chen2023an,chen2025efficient}. 
Chen \emph{et al.} developed an exact and efficiently implementable KMS-balanced Gibbs sampler for general noncommuting Hamiltonians. Their construction can be interpreted as a continuous-time noncommutative analogue of classical Metropolis dynamics. For geometrically local Hamiltonians, Lieb--Robinson bounds imply that the resulting dissipative terms are quasi-local. Thus, at fixed temperature and accuracy, the sampler can be approximated using operators supported only on finite neighborhoods rather than requiring access to the full many-body spectrum.

The general structure of KMS-balanced quantum Gibbs samplers was further studied by Ding \emph{et al.} \cite{ding2025efficient}. They identified a broad family of filtered jump operators that satisfy KMS detailed balance and showed that only a finite number of jump operators is required in principle. Their framework makes clear that imposing detailed balance and obtaining rapid convergence are conceptually different tasks. Many different Lindbladians can share the same Gibbs stationary state, but their mixing properties may be very different.

Discrete-time versions of these ideas have also been developed. Gilyen \emph{et al.} introduced quantum generalizations of Glauber and Metropolis dynamics \cite{Gilyen2024QuantumGlauberMetropolis}. These constructions translate continuous-time Gibbs samplers into efficiently implementable quantum Markov chains while retaining detailed balance, locality, and useful spectral properties. This further strengthens the analogy between dissipative quantum Gibbs sampling and classical Markov-chain Monte Carlo.

An important recent question concerns the physical origin of KMS-balanced Gibbs samplers. Although they were initially motivated primarily as algorithms, Scandi and Alhambra showed that KMS-balanced dynamics can also emerge from microscopic system-bath interactions \cite{Scandi2026Thermalization}. Their derivation avoids the full rotating-wave approximation used in the Davies limit and remains applicable when the many-body energy spectrum is very dense. In this picture, KMS detailed balance naturally appears when nearby transition frequencies are not resolved completely. The resulting dynamics remains quasi-local and can be efficiently simulated. In an appropriate limit, it reduces to the standard Davies generator. This provides a physical interpretation of KMS Gibbs samplers beyond their algorithmic usefulness.

The remaining central challenge is the mixing time. Efficiently simulating a Lindbladian does not automatically imply efficient Gibbs-state preparation. One must also show that the dissipative dynamics approaches its stationary Gibbs state sufficiently quickly. For general noncommuting Hamiltonians, proving strong lower bounds on the Lindbladian spectral gap or on suitable logarithmic Sobolev constants remains difficult. Recent work has nevertheless established rapid mixing in several important regimes. Rouze \emph{et al.} proved efficient thermalization for a broad family of local and sufficiently rapidly decaying Hamiltonians at high temperature \cite{rouze2026efficient}. Their analysis maps the KMS-symmetric Lindbladian to a Hermitian parent operator and treats finite temperature as a perturbation of the infinite-temperature problem. They also showed that low-temperature dissipative dynamics can be computationally complicated, demonstrating that rapid mixing cannot be expected universally.

Overall, dissipative Gibbs sampling has developed from the traditional theory of Davies thermalization into a quantum-algorithmic analogue of classical Markov-chain Monte Carlo. A central recent insight is that exact Gibbs fixed points, efficient implementation, and quasi-locality can coexist even for generic noncommuting Hamiltonians. With the properties of such samplers becoming increasingly well understood, dissipative Gibbs sampling is emerging as one of the most physically-meaningful applications of quantum devices.

\section{Preliminaries}

\subsection{Matrix vectorization}

Suppose $\caH$ is a finite dimensional Hilbert space. Let $\{|e_i\>\}$ denote the standard basis for $\caH$, and $\caB(\caH)$ be the set of bounded operators on the space. We introduce a linear superoperator vec : $\caB(\caH) \to \caH \otimes \caH$ as follows: 
\begin{equation}
    \mathrm{vec}(|e_i\>\<e_j|) = |e_i\>\otimes |e_j\>\qquad \forall |e_i\>, |e_j\>.
\end{equation}
Consider $|l\> = \sum_i l_i |e_i\>, |r\> = \sum_{i} r_i|e_i\>$. By definition, we have
\begin{equation}
    \mathrm{vec}(|l\>\<r|) = \sum_{i,j}l_i r_j^*\mathrm{vec}(|e_i\>\<e_j|) = \sum_{i,j}l_i r_j^*|e_i\>\otimes|e_j\> = |l\>\otimes |r^*\>.
\end{equation}
Hence, for any Hermitian operator with decomposition $H = \sum_i \lambda_i|\psi_i\>\<\psi_i|$, its vectorization is
\begin{equation}
    \mathrm{vec}(H) = \sum_i \lambda_i|\psi_i\>\otimes|\psi_i^*\>.
\end{equation}
Under this framework, for any $A,B,C\in \caB(\caH)$, we have
\begin{equation}
    \mathrm{vec}(ABC) = (A \otimes C^\top)\mathrm{vec}(B).
\end{equation}
Here is a quick proof:
\begin{equation}
    \mathrm{vec}(ABC) = \sum_{ij}B_{ij}\mathrm{vec}(A|e_i\>\<e_j|C) = \sum_{ij}B_{ij}(A\otimes C^\top)|e_i\>|e_j\> = (A\otimes C^\top)\mathrm{vec}(B).
\end{equation}
Using this principle, we can map any superoperator to a matrix.

\subsection{Bohr frequency decomposition}

Suppose $H$ is a finite dimensional Hamiltonian. Denote its spectrum and spectral decomposition as
\begin{equation}
    \mathrm{spec}(H) \coloneqq \{E : \det(H - EI) = 0\},\qquad H = \sum_n E_n|\psi_n\>\<\psi_n|.
\end{equation}
Then the set of Bohr frequencies is
\begin{equation}
    \caF(H) \coloneqq \{\nu = E_1 - E_2: E_1,E_2\in \mathrm{spec}(H)\}.
\end{equation}
For each $\nu\in\caF(H)$, we define a superoperator:
\begin{equation}
    \Pi_\nu(X) \coloneqq \sum_{(m,n): E_m - E_n = \nu}\<\psi_m|X|\psi_n\>|\psi_m\>\<\psi_n| = \lim_{T\to\infty}\frac{1}{2T}\int_{-T}^T dt e^{\rmi H t}X e^{-\rmi Ht} e^{-\rmi\nu t}.
\end{equation}
Every operator $X$ can then be decomposed as
\begin{equation}
    X = \sum_{\nu\in\caF(H)}\Pi_\nu(X).
\end{equation}
We abbreviate $\Pi_\nu(X)$ as $X_\nu$ henceforth. Observe that
    $\Pi_\nu(X)^\dag = \Pi_{-\nu}(X^\dag)$.
Therefore, when $X$ is Hermitian, we have $X_\nu^\dag = X_{-\nu}$.

In the following sections, we will frequently use the modular transformation:
\begin{equation}
    \Delta(z,X) \coloneqq e^{z H}X e^{-z H}.
\end{equation}
By definition, we have
\begin{equation}
    [H,X_\nu] = \nu X_\nu,\qquad \Delta(z,X_\nu) = e^{z \nu}X_\nu,\qquad \Delta(z,X) = \sum_{\nu\in \caF(H)}e^{z\nu}X_\nu.
\end{equation}

\subsection{Quantum Gibbs sampling via Lindbladian}
\label{sec:lindbladian}

Given a Hamiltonian $H$ and an inverse temperature $\beta$, the purpose of quantum Gibbs sampling is to prepare the Gibbs state $\rho_\beta$ defined by:
\begin{equation}
    \rho_\beta \coloneqq \frac{e^{-\beta H}}{\Tr(e^{-\beta H})}.
\end{equation}
Currently, a well-known method for quantum Gibbs sampling is through the simulation of Lindbladian evolution. Let $\caL$ be a superoperator, the solution of
    $d\rho(t)/dt = \caL[\rho(t)]$
is denoted as $e^{\caL t}[\rho(0)]$. In Lindbladian evolution, the superoperator writes
\begin{equation}
    \caL(\cdot) = -\rmi[C,\cdot] + \sum_j \left[L_j (\cdot) L_j^\dag  - \frac{1}{2}\{L_j^\dag L_j,\cdot\}\right],
\end{equation}
where $-\rmi[C,\cdot]$ is called the coherent term, and $\{L_j\}$ is the set of jump operators. If the Lindbladian $\caL$ generates a primitive quantum Markov semigroup, then there exists a unique full-rank steady state ($\rho_\beta$ in our context) such that
\begin{equation}
    \lim_{t\to\infty} e^{\caL t}(\rho) = \rho_\beta\qquad \forall \rho.
\end{equation}
The Lindbladian evolution can be exactly solvable by matrix vectorization. The matrix form of the Lindbladian writes
\begin{equation}
    \mathrm{M}_\caL = -\rmi C \otimes I + \rmi I \otimes C^\top + \sum_j \left(L_j \otimes L_j^* - \frac{1}{2}L_j^\dag L_j \otimes I - \frac{1}{2}I \otimes L_j^\top L_j^*\right).
\end{equation}

The matrix itself is not Hermitian. However, with the KMS detailed-balanced condition is fulfilled, we can construct a Hermitian superoperator through similarity transformation
\begin{equation}
    \caH(\cdot) \coloneqq -\rho_\beta^{-\frac{1}{4}}\caL\left[\rho_\beta^{\frac{1}{4}} (\cdot) \rho_\beta^{\frac{1}{4}}\right]\rho_\beta^{-\frac{1}{4}}.
\end{equation}
The Hamiltonian satisfies $\rmM_\caH|\rb^{1/2}\> = 0$.

Let
\begin{equation}
    \mathrm{M}_\rho \coloneqq \rho_\beta^{\frac{1}{4}} \otimes \rho_\beta^{\frac{1}{4},\top},\qquad \Delta(\cdot) \coloneqq \rho_\beta^{-\frac{1}{4}}(\cdot) \rho_\beta^{\frac{1}{4}},
\end{equation}
then the matrix form of $\caH$ becomes
\begin{align}\label{eq:parentH}
    \mathrm{M}_\caH &= -\mathrm{M}_\rho^{-1} \mathrm{M}_\caL \mathrm{M}_\rho.
\end{align}
After simplification, 
\begin{gather}
\rmi \Delta(C) \otimes I - \rmi I \otimes\Delta^{-1}(C)^\top - \sum_j \left\{\Delta(L_j) \otimes \Delta^{-1}(L_j^\dag)^\top - \frac{1}{2}\Delta(L_j^\dag L_j) \otimes I - \frac{1}{2}I \otimes \Delta^{-1}(L_j^\dag L_j)^\top\right\}.
\end{gather}
The Lindbladian spectral gap is defined by
\begin{equation}
    \mathrm{Gap}(\caL) \coloneqq \inf_{\Tr(A^\dag \rb) = 0} \frac{-\Tr(A^\dag \sqrt{\rb} \caL^\dag(A)\sqrt{\rb})}{\Tr(A^\dag \sqrt{\rb} A \sqrt{\rb})}.
\end{equation}
In matrix vectorization, we obtain
\begin{gather}
    \Tr(A^\dag \sqrt{\rb} A \sqrt{\rb}) = \<A|\sqrt{\rb} \otimes \sqrt{\rb^\top}|A\> = \<A|\rmM_\rho^2|A\>,\\
    \Tr(A^\dag \sqrt{\rb} \caL^\dag(A)\sqrt{\rb}) = \<A|\sqrt{\rb} \otimes \sqrt{\rb^\top}\rmM_\caL^\dag|A\> = \<A|\rmM_\rho^2 \rmM_\caL^\dag|A\>.
\end{gather}
Since $\rmM_\caH$ is Hermitian,
\begin{equation}
    \rmM_\caH = -\rmM_\rho \rmM_\caL^\dag \rmM_\rho^{-1}.
\end{equation}
Therefore,
\begin{equation}
    \<A|\rmM_\rho^2 \rmM_\caL^\dag|A\> = -\<A|\rmM_\rho  \rmM_\caH \rmM_\rho |A\>.
\end{equation}
Let $|A'\> = \rmM_\rho |A\>$, then the Lindbladian spectral gap becomes
\begin{equation}
    \mathrm{Gap}(\caL) = \inf_{\<A'|\sqrt{\rb}\> = 0}\frac{\<A'|\rmM_\caH|A'\>}{\<A'|A'\>},
\end{equation}
which is exactly the spectral gap of the parent Hamiltonian $\rmM_\caH$ (more details). 

The spectral gap is directly related to the mixing time, as shown in the next lemma.
\begin{lemma}Suppose $\caL$ has a unique full rank steady state and satisfies the KMS detailed-balance condition. Then the mixing time has upper bound:
\begin{equation}
    t_{\mathrm{mix}}(\varepsilon) = O\left(\frac{\log(\varepsilon^{-1}) + \log(\|\rb^{-1}\|)}{\mathrm{Gap}(\caL)}\right).
\end{equation}
\end{lemma}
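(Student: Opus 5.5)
The plan is to use the standard $\chi^2$ (KMS-weighted $L^2$) contraction argument and then convert to trace distance. For an initial state $\rho$, write $\rho(t)=e^{t\caL}(\rho)$ and define the relative density $X(t)\coloneqq \rb^{-1/4}\rho(t)\rb^{-1/4}$. Under vectorization this is $\rmM_\rho^{-1}|\rho(t)\>$. By \eref{eq:parentH}, its evolution is $\frac{d}{dt}|X(t)\>=-\rmM_\caH|X(t)\>$. The stationary component is $\<\rb^{1/2}|X(t)\>=\Tr(\rho(t))=1$, which is conserved, so $|X(t)\>-|\rb^{1/2}\>$ stays orthogonal to $|\rb^{1/2}\>$. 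Since $\rmM_\caH$ is Hermitian, positive semidefinite, and has $|\rb^{1/2}\>$ as its unique ground state (uniqueness comes from primitivity), we get
\[
\bigl\||X(t)\>-|\rb^{1/2}\>\bigr\|_2\le e^{-\mathrm{Gap}(\caL)t}\,\bigl\||X(0)\>-|\rb^{1/2}\>\bigr\|_2 .
\]

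Next, convert the weighted $2$-norm to the trace norm. Write $\rho(t)-\rb=\rb^{1/4}(X(t)-\rb^{1/2})\rb^{1/4}$. By Hölder's inequality,
\[
\|\rho(t)-\rb\|_1\le\|\rb^{1/4}\|_4^2\,\|X(t)-\rb^{1/2}\|_2=\|X(t)-\rb^{1/2}\|_2 ,
\]
because $\|\rb^{1/4}\|_4^4=\Tr\rb=1$. This is just Cauchy--Schwarz in the KMS inner product.

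It remains to bound the initial term. Since $\|X(0)-\rb^{1/2}\|_2^2=\|X(0)\|_2^2-1$ and
\[
\|X(0)\|_2^2=\Tr\bigl(\rb^{-1/4}\rho\rb^{-1/2}\rho\rb^{-1/4}\bigr)\le\|\rb^{-1}\|\,\Tr(\rho^2)\le\|\rb^{-1}\| ,
\]
using $\rb^{-1/2}\le\|\rb^{-1}\|^{1/2}I$ twice, we obtain
\[
\tfrac12\|\rho(t)-\rb\|_1\le\tfrac12\|\rb^{-1}\|^{1/2}e^{-\mathrm{Gap}(\caL)t}.
\]
Setting the right-hand side to $\varepsilon$ gives $t_{\rm mix}(\varepsilon)\le\mathrm{Gap}(\caL)^{-1}\bigl(\log\varepsilon^{-1}+\tfrac12\log\|\rb^{-1}\|\bigr)$, which is the claim.

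The main obstacle is justifying the first step carefully. One must check that the similarity transform $\rmM_\rho$ conjugates $\rmM_\caL$ into exactly $-\rmM_\caH$ acting on $\rmM_\rho^{-1}|\rho\>$, with the correct side of the transform. One must also confirm that the gap defined via the infimum equals the gap of $\rmM_\caH$ on the orthogonal complement of $|\rb^{1/2}\>$; the preceding derivation gives this. Uniqueness of the kernel, which makes $\mathrm{Gap}>0$, follows from the assumed unique full-rank steady state.
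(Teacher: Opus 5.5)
Your proof is correct and is essentially the paper's argument in a different parametrization. Your $X(t)-\rb^{1/2}$ is exactly $\rb^{1/4}A(t)\rb^{1/4}$ for the paper's $A(t)=e^{t\caL^\dagger}(\rb^{-1/2}\sigma\rb^{-1/2})-I$, so your Hilbert--Schmidt contraction under $e^{-t\rmM_\caH}$ is the paper's KMS-norm contraction, followed by the same H\"older step with $\|\rb^{1/4}\|_4^2=1$; your exact identity $\|X(0)-\rb^{1/2}\|_2^2=\|X(0)\|_2^2-1\le\|\rb^{-1}\|$ merely sharpens the paper's cruder estimate $\<A(0),A(0)\>_\beta\le 4\|\rb^{-1}\|$ by a constant factor, which is immaterial for the $O(\cdot)$ claim.
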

\begin{proof}
    Recall that the KMS inner product is defined by inner product
    \begin{equation}
        \<A,B\>_\beta = \tr(A^\dag \rho_\beta^{\frac{1}{2}}B \rho_\beta^{\frac{1}{2}}).
    \end{equation}
    Suppose the initial state is $\sigma$. Introduce
    \begin{equation}
    A(t)\coloneqq
    e^{\caL^\dagger t}
    \left(\rb^{-1/2}\sigma\rb^{-1/2}\right)-I.
    \end{equation}
    The KMS detailed-balance relation implies
    \begin{equation}
    \rb^{1/2}A(t)\rb^{1/2}
    =e^{\caL t}(\sigma)-\rb .
    \end{equation}
    Since $\caL^\dagger(I)=0$, we have
    $dA(t)/dt=\caL^\dagger(A(t))$ and
    $\tr[A(t)\rb]=0$.
    
    By virtue of Holder's inequality, we have
    \begin{equation}
        \|e^{\caL t}(\sigma) - \rb\|_1 = \|\rb^{\frac{1}{2}}A(t)\rb^{\frac{1}{2}}\|_1 \le \|\rb^{\frac{1}{4}}\|_4^2 \cdot \|\rb^{\frac{1}{4}}A(t)\rb^{\frac{1}{4}}\|_2 = \<A(t),A(t)\>_\beta^{\frac{1}{2}} \le \<A(0),A(0)\>_\beta^{\frac{1}{2}} e^{-\mathrm{Gap}(\caL)t}.
    \end{equation}
    Observe that $\<A(0),A(0)\>_\beta \le 4\|\rb^{-1}\|$. Finally,
    \begin{equation}
        \|e^{\caL t}(\sigma) - \rb\|_1 \le 2\|\rb^{-1}\|^{\frac{1}{2}} e^{-\mathrm{Gap}(\caL)t}.
    \end{equation}
    The lemma is proved by requiring $2\|\rb^{-1}\|^{\frac{1}{2}} e^{-\mathrm{Gap}(\caL)t} \le \varepsilon$.
\end{proof}

\subsubsection{Davies generator}

Davies generator \cite{Davies1974Markovian} is an important type of quantum Markov semigroup. The construction starts with the Bohr frequency decomposition of an operator $J$:
\begin{equation}
    J = \sum_{\nu\in\caF(H)}J_\nu.
\end{equation}
The Gibbs state $\rb$ is the steady state of the following Lindbladian:
\begin{equation}
    \caL'_{\mathrm{Davies}}(\cdot) \coloneqq \sum_{\nu\in\caF(H)}\eta_\nu \left[J_\nu (\cdot) J_\nu^\dag - \frac{1}{2}\{J_\nu^\dag J_\nu,\cdot\}\right],\qquad \eta_\nu e^{\beta \nu} = \eta_{-\nu}.
\end{equation}
Let $\caA$ denote a set of such operators. Then a general Davies generator is in the form of
\begin{equation}
    \caL_{\mathrm{Davies}}(\cdot) = \sum_{J_a\in\caA}\sum_{\nu\in\caF(H)}\eta_\nu \left[J_{a,\nu} (\cdot) J_{a,\nu}^\dag - \frac{1}{2}\{J_{a,\nu}^\dag J_{a,\nu},\cdot\}\right].
\end{equation}
Under the notion of \sref{sec:lindbladian}, the coherent term is $C = 0$, and the set of jump operators is $\{\sqrt{\eta_\nu} J_{a,\nu}\}$. Hence, the parent Hamiltonian is
\begin{equation}
    \rmM_{\caH,\mathrm{Davies}} = \frac{1}{2}\sum_{a\in\caA,\nu\in\caF(H)}\left(\sqrt{\eta_\nu} J_{a,\nu} \otimes I - I \otimes \sqrt{\eta_{-\nu}} J_{a,\nu}^\top\right)^\dag\left(\sqrt{\eta_\nu} J_{a,\nu} \otimes I - I \otimes \sqrt{\eta_{-\nu}} J_{a,\nu}^\top\right).
\end{equation}

\subsubsection{DLL Lindbladian}

The DLL Lindbladian is a general framework that describes a large class of Lindbladian with known steady states, where the number of dissipative terms is irrelevant to the size of $\caF(H)$. Note that the Chen-Kastoryano-Gilyen (CKG) Lindbladian \cite{chen2023an,chen2025efficient} belongs to this category.

The construction starts with a set of jump operators $L_a$, which satisfies
\begin{equation}
    \Delta_{\beta/2}(L_a) = L_a^\dag.
\end{equation}
The full DLL Lindbladian is
\begin{equation}
    \caL(\cdot) = -\rmi[C,\cdot] + \sum_a \left[L_a(\cdot) L_a^\dag  - \frac{1}{2}\{L_a^\dag L_a, \cdot\}\right],
\end{equation}
where
\begin{equation}
    C \coloneqq \frac{\rmi}{2}\sum_a \sum_{\nu\in\caF(H)}\tanh(\beta \nu/4)(L_a^\dag L_a)_\nu.
\end{equation}
The purpose of introducing $C$ is to ensure that 
\begin{equation}
    K = -\frac{1}{2}\sum_a L_a^\dag L_a - \rmi C,\qquad \Delta_{\beta/2}(K) = K^\dag.
\end{equation}
so that the Lindbladian satisfies the KMS detailed-balance condition. For the case of $[H,\sum_a L_a^\dag L_a] = 0$, we have $C = 0$.

The full parent Hamiltonian writes 
\begin{equation}
    \rmM_\caH = -\Delta_{\beta/4}(K) \otimes I - I \otimes \left(\Delta_{-\beta/4}(K^\dag)\right)^\top - \sum_a \left\{\Delta_{\beta/4}(L_a) \otimes \left(\Delta_{-\beta/4}(L_a^\dag)\right)^\top\right\}.
\end{equation}
In a recent paper \cite{leng2026accelerate}, the authors proved that this Hamiltonian can be written in the form of
\begin{equation}
    \rmM_\caH = \frac{1}{2}\sum_{a}\int_{-\infty}^{\infty} dt\, g(t)  \Gamma_a(t)^\dag \Gamma_a(t). 
\end{equation}
However, the integration over time makes the Hamiltonian difficult to encode. In this paper, we will show that in specific situations, we can write it as a discrete summation which only contains $O(N)$ number of terms, where $N$ is total number of qubits. 

\subsection{Majorana operators and free-fermion models}

The Majorana operators $\{\omega_a\}_{a=1}^{2N}$ satisfy
\begin{equation}
    \omega_a^\dag=\omega_a,\qquad
    \{\omega_a,\omega_b\}=2\delta_{ab}I .
\end{equation}

An $N$-mode free-fermion model can be written as
\begin{equation}
    H = \sum_{a,b=1}^{2N}h_{ab}\omega_a \omega_b,
\end{equation}
where $h$ is a purely imaginary Hermitian matrix (hence $h^* = h^\top = -h$), and $\{\omega_a\}$ is the set of Majorana operators. Note that
\begin{equation}
    [H,\omega_a] = -4\sum_{b=1}^{2N}h_{ab}\omega_b,\qquad \|H\| = \|h\|_1.
\end{equation}
For a fixed complex number $z$, we have
\begin{equation}
   e^{zH}\omega_a e^{-zH} = \sum_{b=1}^{2N}\left(e^{-4zh}\right)_{ab}\omega_b,
\end{equation}
and
\begin{equation}
    \Delta(\omega_a) = \sum_{b=1}^{2N}\left(e^{-\beta h}\right)_{ab}\omega_b.
\end{equation}

Finally, the Gibbs state of free-fermion model is exactly solvable. Since $h$ is purely imaginary Hermitian, there must exist an orthogonal matrix $R$, such that
\begin{equation}
    R^\top h R = \bigoplus_{n=1}^N \frac{\rmi}{2}\begin{pmatrix}
        0 & \lambda_n \\
        -\lambda_n & 0
    \end{pmatrix}.
\end{equation}
Let $\tilde{\omega}_a \coloneqq \sum_b R_{ba}\omega_b$. Then $\{\tilde{\omega}_a\}_{a=1}^{2N}$ is also a set of Majorana operators, and
\begin{equation}
    H = \rmi \sum_{n=1}^N \lambda_n \tilde{\omega}_{2n-1}\tilde{\omega}_{2n}.
\end{equation}
Let $\tilde{Z}_n \coloneqq -\rmi \tilde{\omega}_{2n-1}\tilde{\omega}_{2n}$. Then $[\tilde{Z}_n, \tilde{Z}_m] = 0$ for all $m,n$. The Gibbs state is 
\begin{equation}
    \rb \propto \prod_{n=1}^N e^{\beta \lambda_n \tilde{Z}_n} = \prod_{n=1}^N \left[\cosh(\beta \lambda_n)I + \sinh(\beta \lambda_n)\tilde{Z}_n\right].
\end{equation}
After normalization, we obtain
\begin{equation}\label{eq:gibbs_ff}
    \rb = \prod_{n=1}^N \frac{I + \tanh(\beta \lambda_n)\tilde{Z}_n}{2}.
\end{equation}

\subsection{Irreducible unital algebra}

Let $\caW$ be a finite-dimensional complex Hilbert space, and let
$\caB(\caW)$ denote the algebra of linear operators on $\caW$. Let $\{J_a\}$ be a finite set of linear operators in $\caB(\caW)$, then the unital algebra generated by $\{J_a\}$ is the smallest set of linear operators that contains $I \cup\{J_a\}$, closed under linear combinations and multiplications. The unital algebra can be denoted by $\mathrm{Alg}(\{J_a\})$.

A unital algebra $\caA\subseteq \caB(\caW)$ is called irreducible if the only subspaces
$\caV\subseteq \caW$ satisfying
\begin{equation}
    O_a\caV\subseteq \caV,\qquad \forall O_a\in\caA,
\end{equation}
are $\caV=\{0\}$ and $\caV=\caW$.

By Burnside's theorem, in finite dimension over $\mathbb C$, a unital algebra
$\caA\subseteq \caB(\caW)$ is irreducible iff $\caA=\caB(\caW). $
Therefore, if the linear span of all finite products of generators $\{J_a\}$ equals
$\caB(\caW)$, then the algebra generated by $\{J_a\}$ is irreducible.

A prominent example on the $N$-qubit Hilbert space (denoted by $(\bbC^2)^{\otimes N}$) is given by the Majorana operators.
\begin{lemma}
The algebra generated by the Majorana operators
$\{\omega_a\}_{a=1}^{2N}$ is irreducible on $(\bbC^2)^{\otimes N}$.
\end{lemma}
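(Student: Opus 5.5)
The plan is to show directly that the span of all finite products of the $\omega_a$ is all of $\caB((\bbC^2)^{\otimes N})$, and then invoke Burnside's theorem as recalled above. Since the statement does not depend on the realization, I will work in the Jordan--Wigner realization given in the background section,
\begin{equation}
\omega_{2a-1}=\Bigl(\prod_{b<a}Z_b\Bigr)X_a,\qquad \omega_{2a}=\Bigl(\prod_{b<a}Z_b\Bigr)Y_a .
\end{equation}
Any other realization of $2N$ Majoranas on $N$ qubits is unitarily equivalent to this one, for instance by the uniqueness of the irreducible representation of the Clifford algebra. Alternatively, the dimension count below works for any realization.

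The first step is to produce local Paulis from the generators. Because $X_aY_a=\rmi Z_a$, we have $\omega_{2a-1}\omega_{2a}=\rmi Z_a$, so every $Z_a$ lies in $\mathrm{Alg}(\{\omega_a\})$. Multiplying $\omega_{2a-1}$ and $\omega_{2a}$ by the string $\prod_{b<a}Z_b$, which is now available, strips off the Jordan--Wigner tail and yields $X_a$ and $Y_a$. Hence every single-qubit Pauli on every site belongs to the algebra. Products of these give every Pauli string $P_1\otimes\cdots\otimes P_N$. The $4^N$ Pauli strings form a basis of $\caB((\bbC^2)^{\otimes N})$, being Hilbert--Schmidt orthogonal, so the algebra equals the full matrix algebra and is therefore irreducible.

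As a cross-check that avoids the concrete realization, I would consider the $4^N$ ordered monomials $\omega_{a_1}\cdots\omega_{a_k}$ with $a_1<\cdots<a_k$. Pairwise they are Hilbert--Schmidt orthogonal, since the product of two distinct monomials is, up to a phase, a nontrivial monomial. A nontrivial monomial is traceless: if it has even degree it anticommutes with one of its own factors, and if it has odd degree it anticommutes with a generator outside it when $k<2N$. The remaining case $k=2N$ has even degree, so it is covered by the first case. This gives $4^N$ linearly independent elements in a space of dimension $4^N$.

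There is no real obstacle here. The only point needing care is the tracelessness argument in this abstract route, including the parity cases just discussed, and that is why I would present the explicit Jordan--Wigner computation as the main proof.
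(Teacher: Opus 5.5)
Your proposal is correct and follows essentially the same route as the paper: in the Jordan--Wigner realization you get $Z_a=-\rmi\,\omega_{2a-1}\omega_{2a}$, strip the string $\prod_{b<a}Z_b$ to recover $X_a$ and $Y_a$, and conclude that the algebra contains every Pauli string and therefore equals $\caB((\bbC^2)^{\otimes N})$. Your extra monomial-counting argument is also valid: the $4^N$ ordered Majorana monomials are nonzero, pairwise Hilbert--Schmidt orthogonal, and hence a basis, and this argument has the advantage of not depending on the realization, which the paper's proof tacitly assumes.
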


\begin{proof}
The products of Majorana operators generate all Pauli strings. Indeed,
\begin{equation}
    Z_a = -\rmi\omega_{2a-1}\omega_{2a},
\end{equation}
and we can recover $X_a$ and $Y_a$ from products of
$\omega_b$ and $Z_b$. Hence the algebra generated by
$\{\omega_a\}_{a=1}^{2N}$ contains the full Pauli operators. Since Pauli strings form a basis of $\caB((\bbC^2)^{\otimes N})$, by Burnside's theorem, it is irreducible.
\end{proof}

\subsection{Krylov subspace approximation}
\label{app:lanczos}

Let $G$ be a Hermitian linear operator acting on a finite-dimensional vector space equipped with inner product $\<x,y\>$, and the vector, matrix norm are defined by
\begin{equation}
    \|x\| \coloneqq \sqrt{\<x,x\>},\qquad \|M\| \coloneqq \max_{\|x\| = 1}\|Mx\|.
\end{equation}
Suppose $v_1$ is a unit vector. Define the $m$-th order Krylov space as
\begin{equation}
    \caK_m(G,v_1) \coloneqq \mathrm{span}\{v_1, G v_1, G^2v_1,\ldots, G^{m-1} v_1\}.
\end{equation}
The Lanczos recursion procedure creates an orthogonal basis $\{v_1,v_2,\ldots,v_m\}$
of $\caK_m(G,v_1)$ together with a tridiagonal representation of $G$ on this subspace. 

Starting with $b_0 = 0$ and $n = 1$, the Lanczos recursion is performed by
\begin{enumerate}
    \item $w_n = G v_n - b_{n-1}v_{n-1}$,
    \item $a_n = \langle v_n,w_n\rangle$,
    \item $w_n \leftarrow w_n-a_nv_n$,
    \item $b_n = \|w_n\|$,
    \item $v_{n+1} = w_n/b_n$.
\end{enumerate}
The procedure terminates when either $b_n=0$ or $n = m$. We assume $b_n \neq 0$ for all $0 < n < m$ for simplicity.

In the Lanczos basis, the action $G$ is represented by an $m\times m$ tridiagonal matrix
\begin{equation}
    T_m
    =
    \begin{pmatrix}
        a_1 & b_1 & 0 & \cdots \\
        b_1 & a_2 & b_2 & \cdots \\
        0 & b_2 & a_3 & \cdots \\
        \vdots & \vdots & \vdots & \ddots
    \end{pmatrix}.
\end{equation}
Let $\{t_1,t_2,\ldots,t_m\}$ denote a trivial vector basis, then $T_m$ can be written as
\begin{equation}
    T_m = \sum_{i=1}^m a_i t_i t_i^\top + \sum_{i=1}^{m-1}b_i (t_i t_{i+1}^\top + t_{i+1} t_i^\top).
\end{equation}
For any matrix function $f$, the Krylov approximation of $f(G)v_1$ is given by
\begin{equation}
    L_v(t_j) \coloneqq v_j,\quad L_v \left[f(T_m) t_1\right] = \sum_{j=1}^m \left[ f(T_m)\right]_{j1}v_j .
\end{equation}
The error bound $\|f(G)v_1 - L_v[f(T_m)e_1]\|$ can be obtained by the polynomial approximation method. In particular, when $f(G) = e^{xG}$, we have the following result.
\begin{lemma}\label{lem:polyapprox}
    Suppose $x > 0$. In order to have $\|e^{xG}v_1 - L_v( e^{xT_m}t_1)\| \le \varepsilon$, it suffices to take
    \begin{equation}
        m = O\left(x\|G\| + \log(\varepsilon^{-1})\right).
    \end{equation}
\end{lemma}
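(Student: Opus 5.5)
The plan is to reduce the Krylov error to a best-polynomial-approximation problem for the exponential on the spectral interval of $G$, and then use Chebyshev bounds. First I will use the standard exactness property of Lanczos: for every polynomial $p$ of degree at most $m-1$, $p(G)v_1 = L_v(p(T_m)t_1)$. This follows by induction on the degree, since $G v_j$ for $j<m$ lies in the span of $v_{j-1},v_j,v_{j+1}$ with exactly the tridiagonal coefficients of $T_m$. Hence, for any such $p$,
\begin{equation}
\|e^{xG}v_1 - L_v(e^{xT_m}t_1)\| \le \|e^{xG}v_1 - p(G)v_1\| + \|L_v[(p(T_m)-e^{xT_m})t_1]\| .
\end{equation}
Because $L_v$ is an isometry (the $v_j$ are orthonormal) and $G$, $T_m$ are Hermitian with $\mathrm{spec}(T_m)\subseteq[-\|G\|,\|G\|]$ (Cauchy interlacing, since $T_m$ is a compression of $G$), both terms are bounded by $E_{m-1}\coloneqq\min_{\deg p\le m-1}\max_{y\in[-\|G\|,\|G\|]}|e^{xy}-p(y)|$. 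The error is therefore at most $2E_{m-1}$.

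Next I will bound $E_{m-1}$. Rescaling $y=\|G\|s$ with $s\in[-1,1]$, set $\tau\coloneqq x\|G\|$, so the task is to approximate $e^{\tau s}$. I will use the Chebyshev expansion $e^{\tau s}=I_0(\tau)+2\sum_{k\ge1}I_k(\tau)T_k(s)$ with modified Bessel coefficients, and truncate at degree $m-1$. Since $|T_k|\le1$ on $[-1,1]$, the truncation error is at most $2\sum_{k\ge m}I_k(\tau)$. The bound $I_k(\tau)\le (\tau/2)^k e^{\tau}/k!$ (or the sharper $I_k(\tau)\le (\tau/2)^k/k!\cdot e^{\tau^2/(4(k+1))}$) reduces this to a tail of the form $e^{\tau}(e\tau/(2m))^m$ up to constants.

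The main obstacle is the last step: turning this tail into $m=O(\tau+\log\varepsilon^{-1})$ without an extra logarithmic factor. My approach is to choose $m\ge c\,\tau$ with $c\ge e^2$, say. Then $(e\tau/(2m))^m\le e^{-m}\cdot(e^2\tau/(2m))^m \le e^{-m}$, while the prefactor $e^{\tau}\le e^{m/c}$ is absorbed, giving an error of order $e^{-(1-1/c)m}$. Imposing $(1-1/c)m\ge\log(2C/\varepsilon)$ then gives $m=O(x\|G\|+\log\varepsilon^{-1})$, which proves the lemma. If the crude $e^{\tau}$ factor gave trouble, I would fall back on the Hochbruck–Lubich estimate \cite{Hochbruck1997Krylov}, which yields the same linear scaling directly.
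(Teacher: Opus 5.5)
Your proposal is correct and follows the paper's route. You reduce, via exactness of Lanczos on $\mathbb{P}_{m-1}$ (which the paper simply cites as the factor-$2$ near-optimality bound), to uniform approximation of $e^{x\lambda}$ on the spectral interval, bound the tail of an explicit expansion by roughly $e^{\tau}(C\tau/m)^m$, and take $m\ge c\tau$ to absorb the prefactor. The only difference is that the paper truncates the Taylor series at $0$, getting $2e^{x\|G\|}(ex\|G\|/m)^m$, whereas your Chebyshev--Bessel truncation gains a factor $1/2$ inside the ratio; this is harmless but not needed for the linear bound.
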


\begin{proof}
For any matrix function $f$, the Lanczos recursion is nearly-optimal in that
\begin{equation} \label{eq:polynomial_approximation}
    \|f(G)v_1 - L_v\left[f(T_m)e_1\right]\| \le 2
    \min_{p\in \bbP_{m-1}}
    \max_{\lambda\in[\lambda_{\min}(G),\lambda_{\max}(G)]}
    |f(\lambda)-p(\lambda)|,
\end{equation}
where $\bbP_{m-1}$ denotes the set of polynomials of degree at most $m-1$. Thus, each $p\in \bbP_{m-1}$ leads to an error bound. If we choose $p(\lambda)$ as the $(m-1)$-th order Taylor expansion of $f(\lambda)$, then
\begin{equation}
    \|f(G)v_1 - L_v\left[f(T_m)t_1\right]\| \le 2\max_{\lambda\in [\lambda_{\min}(G),\lambda_{\max}(G)]}\left|\sum_{k=m}^{\infty} \frac{1}{k!}f^{(k)}(0)\lambda^k\right|.
\end{equation}
With $f(\lambda) = e^{x\lambda}$, we further obtain
\begin{equation}
    \max_{\lambda\in[\lambda_{\min}(G),\lambda_{\max}(G)]}\left|\sum_{k=m}^{\infty}\frac{(x\lambda)^k}{k!}\right|
    \le
    \sum_{k=m}^{\infty}
    \frac{(x\|G\|)^k}{k!} \le e^{x\|G\|}
    \left(
        \frac{ex\|G\|}{m}
    \right)^m.
\end{equation}
Here we used the fact that $|\lambda| \le \|G\|$. Finally, the error in bounded by
\begin{equation}
    \left\|
        e^{xG}v_1 - L_v \left(e^{xT_m}e_1\right)
    \right\|
    \le
    2e^{x\|G\|}
    \left(
        \frac{ex\|G\|}{m}
    \right)^m.
\end{equation}
The RHS has upper bound $\varepsilon$ provided that
\begin{equation}
    m \ln\left(\frac{m}{ex\|G\|}\right) \ge \ln(2\varepsilon^{-1}) + x\|G\|.
\end{equation}
Let $C_1,C_2$ be two positive numbers. If $m \ge C_1\log(2\varepsilon^{-1}) + C_2 ex\|G\|$, then we get
\begin{equation}
    m \ln\left(\frac{m}{ex\|G\|}\right) \ge m \ln C_2 \ge C_1\ln (C_2)\log(2\varepsilon^{-1}) + C_2\ln (C_2) ex\|G\|.
\end{equation}
The lemma is proved by choosing $C_1 \ge 1, C_2 \ge e$.
\end{proof} 

\section{Proof of \lref{lemma:preconditioning}}
\label{app:lem1}

\begin{lemma}
[Restatement of \lref{lemma:preconditioning}]
Let $\{\Gamma_a\}_{a\in\caA}$ be a finite family of modular
annihilators with $\Gamma_a|\rb^{1/2}\> = 0$, and let
$G\in\bbC^{|\caA|\times|\caA|}$ be Hermitian and positive definite.
Then $\rmM_\caH(G) \succeq 0$ and $\rmM_\caH(G)|\rb^{1/2}\> = 0$.
Moreover,
\begin{equation}\label{eq:weighted_parent_bounds}
    \lambda_{\min}(G)\,\rmM_\caH(I)
    \preceq
    \rmM_\caH(G)
    \preceq
    \lambda_{\max}(G)\,\rmM_\caH(I).
\end{equation}
Consequently, if $\rmM_\caH(I)$ has a unique ground state, then
\begin{equation}\label{eq:weighted_gap_bounds}
    \lambda_{\min}(G)\,
    \mathrm{Gap}\!\left[\rmM_\caH(I)\right]
    \le
    \mathrm{Gap}\!\left[\rmM_\caH(G)\right]
    \le
    \lambda_{\max}(G)\,
    \mathrm{Gap}\!\left[\rmM_\caH(I)\right].
\end{equation}
\end{lemma}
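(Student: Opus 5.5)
The plan is to rewrite everything through the stacked operator $\rmB \coloneqq \sum_{a\in\caA}|a\>\otimes\Gamma_a$, which maps the doubled Hilbert space into $\bbC^{|\caA|}\otimes(\caH\otimes\caH)$. A direct expansion gives
\begin{equation*}
    \rmM_\caH(G)=\tfrac12\,\rmB^\dag (G\otimes I)\rmB ,
\end{equation*}
since $\rmB^\dag(G\otimes I)\rmB=\sum_{a,b}G_{ab}\Gamma_a^\dag\Gamma_b$. For any vector $|\psi\>$ we then have $\<\psi|\rmM_\caH(G)|\psi\>=\tfrac12\<\rmB\psi|(G\otimes I)|\rmB\psi\>$. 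Because $G\otimes I$ has the same spectrum as $G$, this quantity lies between $\tfrac12\lambda_{\min}(G)\|\rmB\psi\|^2$ and $\tfrac12\lambda_{\max}(G)\|\rmB\psi\|^2$. Since $\tfrac12\|\rmB\psi\|^2=\<\psi|\rmM_\caH(I)|\psi\>$, this proves \eref{eq:weighted_parent_bounds}.

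Positivity of $\rmM_\caH(G)$ follows at once from the lower bound, because $\lambda_{\min}(G)>0$ and $\rmM_\caH(I)\succeq0$. The statement $\rmM_\caH(G)|\rb^{1/2}\>=0$ follows from $\rmB|\rb^{1/2}\>=0$, since every $\Gamma_a$ annihilates the purified Gibbs state. Alternatively, one can use the factorization $G=R^\dag R$ already written in the main text.

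For the gap, first compare kernels. By the sandwich, $\ker\rmM_\caH(G)=\ker\rmM_\caH(I)=\ker\rmB$, which is one-dimensional and spanned by $|\rb^{1/2}\>$ under the uniqueness assumption. Both operators have ground energy $0$, so each gap is the minimum of $\<\psi|\cdot|\psi\>/\<\psi|\psi\>$ over the same subspace $\{|\rb^{1/2}\>\}^\perp$. Applying the operator inequalities to the Rayleigh quotients on this common subspace, and taking minima, yields \eref{eq:weighted_gap_bounds}.

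The only point needing care is this last step. It requires both operators to share the same ground state, so that the variational characterization restricts to the same orthogonal complement. The kernel equality above supplies exactly that; everything else is routine.
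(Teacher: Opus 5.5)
Your proposal is correct and follows the paper's proof essentially step for step: the paper also writes $\rmM_\caH(G)=\tfrac12 B_\Gamma^\dag(G\otimes I)B_\Gamma$ with $B_\Gamma=\sum_a|a\>\otimes\Gamma_a$, conjugates $\lambda_{\min}(G)I\preceq G\preceq\lambda_{\max}(G)I$ by $B_\Gamma$, and reads off the gap bounds variationally on the orthogonal complement of the one-dimensional common kernel. The only cosmetic difference is how the kernel equality is obtained: the paper identifies $\ker\rmM_\caH(G)$ with $\bigcap_a\ker\Gamma_a$ via $G=R^\dag R$ and the invertibility of $R\otimes I$, whereas you get it directly from the two-sided operator inequality; both are valid.
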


\begin{proof}
Define the block operator
\begin{equation}
    B_\Gamma
    :=
    \sum_{a\in\caA}|a\>\otimes\Gamma_a .
\end{equation}
Then
\begin{equation}
    \rmM_\caH(G)
    =
    \frac{1}{2}
    B_\Gamma^\dag
    (G\otimes I)
    B_\Gamma ,\qquad \rmM_\caH(I)
    =
    \frac{1}{2}
    B_\Gamma^\dag
    B_\Gamma.
\end{equation}
Since $G$ is positive definite, $\rmM_\caH(G)$ is positive
semidefinite. Moreover, we can decompose $G$ as $R^\dag R$, so that for any state $|\psi\>$,
\begin{equation}
\begin{aligned}
    \<\psi|\rmM_\caH(G)|\psi\>
    &=
    \frac{1}{2}
    \left\|
        (R\otimes I)B_\Gamma|\psi\>
    \right\|^2 .
\end{aligned}
\end{equation}
Because $G \succ 0$, the matrix $R\otimes I$ is invertible, and $(R\otimes I)B_\Gamma|\psi\>$ vanishes iff
$B_\Gamma|\psi\>=0$, which is equivalent to
$\Gamma_a|\psi\>=0$ for every $a\in\caA$. 

Finally, note that
\begin{equation}
    \lambda_{\min}(G)I
    \le
    G
    \le
    \lambda_{\max}(G)I.
\end{equation}
Conjugating the inequality by $B_\Gamma$ gives
\eref{eq:weighted_parent_bounds}. When the common kernel of $\Gamma_a$ is
one-dimensional, \eref{eq:weighted_gap_bounds} follows from the
variational characterization of the spectral gap on the orthogonal
complement of the ground state.
\end{proof}

\section{Proof of \thref{thm:lindbladian}}
\label{app:prooflindblad}

\begin{theorem}[Restatement of \thref{thm:lindbladian}]
    Suppose $\mathrm{Alg}\left(\{J_a\}_{a\in\caA}\right) = \caB((\bbC^2)^N)$ and each $J_a$ is Hermitian. If
    \begin{equation} \label{eq:condition}
        \sum_{a\in\caA} J_a \otimes \left(\Delta^2(J_a)\right)^\top = \sum_{a\in\caA}\Delta^{-2}(J_a) \otimes J_a^\top. 
    \end{equation}
    Then
\begin{gather}
    \caL_{\mathrm{SoS}}(\cdot) \coloneqq \frac{1}{2}\sum_{a\in\caA}[J_a (\cdot) \Delta^{2}(J_a) + \Delta^{-2}(J_a) (\cdot) J_a] - \Delta^{-1}(K) (\cdot) - (\cdot) \Delta(K), \\
    K \coloneqq \frac{1}{2}\sum_{a\in\caA} \Delta(J_a)\Delta^{-1}(J_a)
\end{gather}
    is a primitive Lindbladian.
\end{theorem}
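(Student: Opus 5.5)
The plan is to verify the three ingredients of a primitive GKLS generator in turn: complete positivity of the "jump" part, the correct form of the "dissipative/anticommutator" part, and uniqueness of the stationary state together with convergence. Write
\begin{equation}
\Phi(X)\coloneqq \frac{1}{2}\sum_{a\in\caA}\left[J_a X \Delta^{2}(J_a)+\Delta^{-2}(J_a) X J_a\right],
\end{equation}
so that $\caL_{\rm SoS}(X)=\Phi(X)-\Delta^{-1}(K)X-X\Delta(K)$.

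\textbf{Step 1: the condition collapses $\Phi$.} The linear map $A\otimes B^\top\mapsto A(\cdot)B$ turns \eref{eq:condition} into $\sum_a J_a(\cdot)\Delta^2(J_a)=\sum_a \Delta^{-2}(J_a)(\cdot)J_a$. Hence $\Phi(X)=\sum_a J_aX\Delta^2(J_a)$.

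\textbf{Step 2: complete positivity of $\Phi$ (main obstacle).} I will pass to Bohr sectors. Since $\Delta=\Delta_{\beta/4}$, we have
\begin{equation}
\Delta^{2}(J_a)=\sum_\mu e^{\beta\mu/2}J_{a,\mu},\qquad \Delta^{-2}(J_a)=\sum_\nu e^{-\beta\nu/2}J_{a,\nu}.
\end{equation}
Substituting into \eref{eq:condition} gives
\begin{equation}
\sum_{a,\nu,\mu}\bigl(e^{\beta\mu/2}-e^{-\beta\nu/2}\bigr)J_{a,\nu}\otimes J_{a,\mu}^\top=0.
\end{equation}
Applying the Bohr projectors $\Pi_\nu\otimes\Pi_{\mu}$ factorwise, which are well defined on the tensor product and separate the components, each $(\nu,\mu)$ component must vanish individually. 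For $\mu\neq-\nu$ the prefactor is nonzero, so $\sum_a J_{a,\nu}\otimes J_{a,\mu}^\top=0$.

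Consequently only the terms with $\mu=-\nu$ survive in $\Phi$. Using Hermiticity of $J_a$, $J_{a,-\nu}=J_{a,\nu}^\dagger$, so
\begin{equation}
\Phi(X)=\sum_{a,\nu}e^{-\beta\nu/2}J_{a,\nu}XJ_{a,\nu}^\dagger .
\end{equation}
This is manifestly completely positive, and it is also the route to \coref{coro:gkls}. The delicate point is justifying the sector-by-sector vanishing. I would do this by writing the condition as the equality of two superoperators and applying the group $e^{\rmi Hs}(\cdot)e^{-\rmi Hs}$ on the left and the right independently, then time-averaging.

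\textbf{Step 3: the anticommutator part.} Apply the map $A\otimes C\mapsto C^\top A$ to \eref{eq:condition}. This gives
\begin{equation}
\sum_a\Delta^2(J_a)J_a=\sum_aJ_a\Delta^{-2}(J_a),
\end{equation}
that is, $\Delta(K)=\Delta^{-1}(K)$ after unwinding the definition of $K$. On the other hand,
\begin{equation}
\Phi^\dagger(I)=\sum_a\Delta^2(J_a)J_a=2\Delta(K).
\end{equation}
From Step 2, $\Phi^\dagger(I)=\sum_{a,\nu}e^{-\beta\nu/2}J_{a,\nu}^\dagger J_{a,\nu}$, which is Hermitian. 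Therefore
\begin{equation}
\caL_{\rm SoS}(X)=\Phi(X)-\tfrac12\{\Phi^\dagger(I),X\}.
\end{equation}
This is a GKLS generator with no coherent term, and $\caL_{\rm SoS}^\dagger(I)=0$, so it is trace preserving.

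\textbf{Step 4: primitivity.} By construction, $\caL_{\rm SoS}$ is the image of $-\rmM_\caH$ under the similarity transformation by $\rb^{1/4}\otimes\rb^{1/4,\top}$ of \sref{sec:map}. Hence its spectrum equals $-\operatorname{spec}(\rmM_\caH)\subseteq(-\infty,0]$, and it is diagonalizable. By \thref{thm:algebra} the zero eigenvalue is simple, with eigenvector $\rb^{1/2}$ mapping to $\rb$. All other modes therefore decay exponentially at rate at least $\mathrm{Gap}(\rmM_\caH)>0$. Combined with trace preservation, this gives $e^{t\caL_{\rm SoS}}(\sigma)\to\Tr(\sigma)\rb=\rb$ for every density operator $\sigma$, so $\caL_{\rm SoS}$ is primitive with full-rank stationary state $\rb$.
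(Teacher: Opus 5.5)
Your proof is correct, but the positivity step takes a different route from the paper's proof of this theorem. The paper never decomposes into Bohr sectors here. It computes the Choi matrix and writes $P_\perp\caJ(\caL_{\rm SoS})P_\perp=\tfrac12 P_\perp(\caJ_\Delta\caJ_P+\caJ_P\caJ_\Delta)P_\perp$, with $\caJ_P=\sum_a|J_a\rangle\!\rangle\langle\!\langle J_a|\succeq0$ and $\caJ_\Delta=\rb^{1/2}\otimes\rb^{-1/2,\top}\succ0$. It observes that the hypothesis is exactly $[\caJ_P,\caJ_\Delta]=0$, so the symmetrized product of two commuting positive operators is positive. It then invokes Lindblad's characterization: Hermiticity preservation, $\caL^\dagger(I)=0$, and conditional complete positivity.

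Your Steps 2--3 instead reproduce, in effect, the Bohr-sector lemma and \coref{coro:gkls} that the paper proves separately. Sector separation forces $\sum_a J_{a,\nu}\otimes J_{a,\mu}^\top=0$ for $\mu\neq-\nu$, which gives the explicit Davies form. Your contraction map $A\otimes C\mapsto C^\top A$ is a neat shortcut: it yields $\Delta(K)=\Delta^{-1}(K)=\tfrac12\Phi^\dagger(I)$, equivalently $[K,H]=0$. So you exhibit the GKLS standard form with zero coherent term directly, and need no abstract characterization lemma.

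Your route gives the theorem and its corollary in one pass, with explicit jump operators $e^{-\beta\nu/4}J_{a,\nu}$. The paper's route gives a basis-free, essentially one-line positivity argument with no frequency separation.

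One cosmetic fix: on the second tensor factor the projector must be the transposed one, $Y\mapsto(\Pi_\mu(Y^\top))^\top$ (the $-\mu$ sector for $H^\top$), not $\Pi_\mu$ itself. Your Step 4 coincides with the paper's primitivity argument.
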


To prove it, we first introduce a standard result in quantum dynamical semigroup \cite{Lindblad1976} and an auxiliary lemma.
\begin{lemma}\label{lem:lindbladian}
    A finite dimensional superoperator $\caL$ is a Lindbladian iff it satisfies
    \begin{enumerate}
        \item $\caL$ is Hermitian-preserving;
        \item $\caL^\dag(I) = 0$;
        \item let $P_{\perp} \coloneqq I - |\Omega\>\<\Omega|/d$, and $\caJ(\caL) \coloneqq \sum_{i,j}\caL(|i\>\<j|)\otimes |i\>\<j|$ is the Choi matrix of $\caL$, then 
        \begin{equation}
            P_\perp \caJ(\caL)P_\perp \ge 0.
        \end{equation}
    \end{enumerate}
\end{lemma}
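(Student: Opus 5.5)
This is the standard Gorini--Kossakowski--Sudarshan / Lindblad characterization of generators of completely positive trace-preserving semigroups. I would prove it in the "if" and "only if" directions separately, working with the Choi matrix $\caJ(\caL)$ and the unnormalized maximally entangled vector $|\Omega\rangle$.

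\emph{Only if.} Suppose $\caL(X)=-\rmi[C,X]+\sum_j\bigl(L_jXL_j^\dagger-\tfrac12\{L_j^\dagger L_j,X\}\bigr)$. Conditions 1 and 2 follow by inspection:
\begin{itemize}
\item the adjoint of each term is Hermitian-preserving;
\item $\caL^\dagger(I)=\rmi[C,I]+\sum_j(L_j^\dagger L_j-L_j^\dagger L_j)=0$.
\end{itemize}
For condition 3, I would split $\caL=\Phi+\mathcal{K}$. Here $\Phi(X)=\sum_jL_jXL_j^\dagger$ is completely positive, so its Choi matrix $\sum_j|L_j\rangle\!\rangle\langle\!\langle L_j|$ is $\succeq0$. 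The remainder has the form $\mathcal{K}(X)=GX+XG^\dagger$, with $G=-\rmi C-\tfrac12\sum_jL_j^\dagger L_j$. Its Choi matrix equals $(G\otimes I)|\Omega\rangle\langle\Omega|+|\Omega\rangle\langle\Omega|(G^\dagger\otimes I)$, up to the transpose convention in the ordering. Every term of this contains $|\Omega\rangle$ on one side, so it is annihilated by $P_\perp$ on that side. Hence $P_\perp\caJ(\caL)P_\perp=P_\perp\caJ(\Phi)P_\perp\succeq0$.

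\emph{If.} This is the main direction. First, Hermitian preservation is equivalent to $\caJ(\caL)$ being Hermitian. Decompose $\mathbb{C}^d\otimes\mathbb{C}^d$ into $\mathrm{span}\{|\Omega\rangle\}$ and its complement, and write
\[
\caJ(\caL)=P_\perp\caJ P_\perp+\bigl(|\psi\rangle\langle\Omega|+|\Omega\rangle\langle\psi|\bigr)
\]
for a suitable vector $|\psi\rangle$. The diagonal $|\Omega\rangle\langle\Omega|$ component can be absorbed into $|\psi\rangle$ with a real coefficient.

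Next, diagonalize the positive part: $P_\perp\caJ P_\perp=\sum_j|L_j\rangle\!\rangle\langle\!\langle L_j|$. This identifies $\Phi(X)=\sum_jL_jXL_j^\dagger$. The remaining piece $|\psi\rangle\langle\Omega|+|\Omega\rangle\langle\psi|$ is the Choi matrix of $X\mapsto GX+XG^\dagger$, where $G$ is the operator with $|\psi\rangle=(G\otimes I)|\Omega\rangle$, again modulo transpose conventions. So $\caL(X)=\Phi(X)+GX+XG^\dagger$.

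Now split $G=-\rmi C-\tfrac12 P$ with $C,P$ Hermitian. Condition 2 gives
\[
0=\caL^\dagger(I)=\sum_jL_j^\dagger L_j+G+G^\dagger=\sum_jL_j^\dagger L_j-P,
\]
so $P=\sum_jL_j^\dagger L_j$, which is exactly the Lindblad form. Since every operator of this form generates a CPTP semigroup, the "if" direction is done.

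The main obstacle is bookkeeping rather than concept. One must check two things carefully. First, the off-diagonal part relative to $|\Omega\rangle$ really corresponds to a map of the form $X\mapsto GX+XG^\dagger$, using the fact that the Choi matrix of $X\mapsto AXB$ is $|A\rangle\!\rangle\langle\!\langle B^\dagger|$. Second, the Hermiticity of $\caJ$ forces the two sides to be adjoint to each other. I would also use that the $L_j$ can be chosen orthogonal to $I$, since $P_\perp$ kills the $I$-component; this makes the decomposition well defined.
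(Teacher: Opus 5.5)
The paper gives no proof of this lemma. It quotes it as the standard Gorini--Kossakowski--Sudarshan--Lindblad characterization, citing Lindblad, and uses only the ``if'' direction, to conclude that $\caL_{\rm SoS}$ is a Lindbladian. Your proposal is the standard Choi-matrix proof of that result, and it is correct with the paper's conventions.

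For $\caJ(\caL)=\sum_{i,j}\caL(|i\rangle\langle j|)\otimes|i\rangle\langle j|$ and $|X\rangle\!\rangle=(X\otimes I)|\Omega\rangle$, the identity $\caJ(X\mapsto AXB)=|A\rangle\!\rangle\langle\!\langle B^\dagger|$ holds exactly, so your transpose caveats are unnecessary. The remaining steps go through as you describe:
\begin{itemize}
\item the block decomposition relative to $|\Omega\rangle$, with the real coefficient $\langle\Omega|\caJ|\Omega\rangle/d^2$ absorbed into $|\psi\rangle$;
\item the identification of $\Phi$ from the positive block $P_\perp\caJ P_\perp$;
\item the relation $P=\sum_jL_j^\dagger L_j$, obtained from $\caL^\dagger(I)=0$.
\end{itemize}
Your self-contained route has the advantage of exposing exactly the mechanism the paper uses for $\caL_{\rm SoS}$. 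Terms of the form $X\mapsto GX+XG^\dagger$ contribute only blocks adjacent to $|\Omega\rangle$ in the Choi matrix, and $P_\perp$ annihilates these.

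One point to tighten: the paper uses ``Lindbladian'' both for the GKSL form and for the generator of a quantum Markov semigroup. Your equivalence is with the form.
\begin{itemize}
\item Under the semigroup reading, your ``if'' direction rests on the unproved (though standard) claim that GKSL generators yield CPTP semigroups. One way to prove it is a Lie--Trotter product of the completely positive maps $X\mapsto e^{tG}Xe^{tG^\dagger}$ and $e^{t\Phi}$.
\item The ``only if'' direction should then start from the semigroup itself. For $|v\rangle\perp|\Omega\rangle$ you have $\langle v|\caJ(e^{t\caL})|v\rangle\ge0$ and $\langle v|\caJ(\mathrm{id})|v\rangle=0$. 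Dividing by $t$ and letting $t\to0^+$ gives condition~3. Conditions~1--2 follow by differentiating Hermiticity and trace preservation at $t=0$.
\end{itemize}
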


Then \thref{thm:lindbladian} is proved as follows.
\begin{proof}

 Introduce $|X\>\!\> \coloneqq (X\otimes I)|\Omega\>$. Due to the Hermiticity of $J_a$, we get $K = K^\dag$, and
 \begin{equation}
     \caJ(\caL_{\mathrm{SoS}}) = \frac{1}{2}\sum_{a} |J_a\>\!\> \<\!\<\Delta^{-2}(J_a)| + \frac{1}{2}\sum_a|\Delta^{-2}(J_a)\>\!\>\<\!\<J_a| - |\Delta^{-1}(K)\>\!\>\<\!\<I| - |I\>\!\>\<\!\<\Delta^{-1}(K)|.
 \end{equation}
 Then $\caL_{\mathrm{SoS}}$ is Hermitian preserving as $\caJ(\caL_{\mathrm{SoS}})^\dag = \caJ(\caL_{\mathrm{SoS}})$.
 
 For any operator $X$, we have
\begin{align}
    \tr(\caL_{\mathrm{SoS}}(X)) &= \frac{1}{2}\sum_{a;s=\pm 1}\tr(X\Delta^{s+1}(J_a)\Delta^{s-1}(J_a)) - \tr(X\Delta^{-1}(K)) - \tr(X\Delta(K)) = 0.
\end{align}
Thus, $\tr(\caL_{\rm SoS}^\dag(I)X) = 0$ for all $X$, which implies $\caL^\dag_{\rm SoS}(I) = 0$, and condition 2 is fulfilled as well.

 As to condition 3, since $|I\>\!\> = |\Omega\>$ by definition, the last two terms vanishes after projecting to $P_{\perp}$. Hence,
 \begin{equation}
     P_\perp\caJ(\caL_{\mathrm{SoS}})P_\perp = \frac{1}{2}\sum_{a} P_\perp|J_a\>\!\> \<\!\<\Delta^{-2}(J_a)|P_\perp + \frac{1}{2}\sum_a P_\perp|\Delta^{-2}(J_a)\>\!\>\<\!\<J_a|P_\perp.
 \end{equation}
 Let
 \begin{equation}
     \caJ_P \coloneqq \sum_a |J_a\>\!\>\<\!\<J_a|,\qquad \caJ_\Delta \coloneqq \rb^{\frac{1}{2}} \otimes \rb^{-\frac{1}{2},\top}.
 \end{equation}
 Note that $\caJ_\Delta|X\>\!\> = |\Delta^{-2}(X)\>\!\>$ for all $X$. Hence,
 \begin{equation} \label{eq:commute}
     \sum_{a}|J_a\>\!\>\<\!\<\Delta^{-2}(J_a)| = \caJ_P \caJ_\Delta^\dag = \caJ_P \caJ_\Delta,\qquad \sum_a|\Delta^{-2}(J_a)\>\!\>\<\!\<J_a| = \caJ_\Delta \caJ_P,
 \end{equation}
 and
 \begin{equation}
    P_\perp \caJ(\caL_{\mathrm{SoS}})P_\perp = \frac{1}{2}P_\perp (\caJ_\Delta \caJ_P + \caJ_P\caJ_\Delta)P_\perp.
\end{equation}
 By virtue of \eref{eq:condition} and (\ref{eq:commute}), the two operators commute $[\caJ_P,\caJ_\Delta] = 0$. Since $\caJ_P \ge 0, \caJ_\Delta \ge 0$, we have $P_\perp \caJ(\caL_{\mathrm{SoS}})P_\perp \ge 0$ as well. This confirms condition 3 and completes the proof of the theorem.

Now we prove that because
$\mathrm{Alg}(\{J_a\}_{a\in\caA})$ is irreducible, 
$\caL_{\rm SoS}$ is primitive. That is, $\rb$ is its unique stationary state and
$\lim_{t\to\infty}e^{t\caL_{\rm SoS}}(\sigma)=\rb$ for every density
operator $\sigma$. 

By \eref{eq:parentH}, $\rmM_\caL=-\rmM_\rho\rmM_\caH\rmM_\rho^{-1}$ with
$\rmM_\rho=\rb^{1/4}\otimes\rb^{1/4,\top}\succ0$. Hence $\rmM_\caL$ is
similar to the Hermitian negative-semidefinite matrix $-\rmM_\caH$, so it
is diagonalizable with real nonpositive spectrum; in particular it has no
nonzero peripheral eigenvalue and no Jordan block. Since
$\<v|\rmM_\caH|v\>=\frac12\sum_a\|\Gamma_a|v\>\|^2$, we have
$\ker\rmM_\caH=\bigcap_a\ker\Gamma_a=\mathrm{span}\{|\rb^{1/2}\>\}$ by
\thref{thm:algebra}, and $\rmM_\rho|\rb^{1/2}\>=|\rb\>$. Therefore
$\ker\rmM_\caL=\mathrm{span}\{|\rb\>\}$ and the zero eigenvalue is simple.
Consequently $\lim_{t\to\infty}e^{t\rmM_\caL} = \Pi_0$ with $\mathrm{rank}\,\Pi_0=1$. For a
density operator $\sigma$, $\Pi_0(\sigma)=c\,\rb$, and trace preservation
forces $c=1$.
\end{proof}

\section{Proof of \coref{coro:gkls}}
\label{app:proof_coro}
Recall that 
\begin{equation}
    J_a = \sum_{\nu\in\caF(H)}J_{a,\nu},\qquad \Delta(J_a) = \sum_{\nu\in\caF(H)}e^{\beta \nu/4}J_{a,\nu}.
\end{equation}
Introduce the frequency-pair tensors \begin{equation}\label{eq:Tnumu} 
    T_{\nu\mu} \coloneqq \sum_{a\in\caA}J_{a,\nu}\otimes\bigl(J_{a,\mu}\bigr)^\top \qquad \nu,\mu\in \caF(H). 
\end{equation}  
Hence, the commutation condition can be translated into the following constraint.
\begin{lemma}\label{lemma:bohr-condition} Suppose condition \eref{eq:condition} hold. Then $T_{\nu\mu} = 0$ for all $\nu \neq -\mu$.
\end{lemma}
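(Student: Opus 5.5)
Expand both sides of condition \eref{eq:condition} in the Bohr-frequency sectors and compare sector by sector. Since $\Delta^{\pm 2}=\Delta_{\pm\beta/2}$ acts diagonally on the sectors, we have $\Delta^{2}(J_a)=\sum_\mu e^{\beta\mu/2}J_{a,\mu}$ and $\Delta^{-2}(J_a)=\sum_\nu e^{-\beta\nu/2}J_{a,\nu}$. Substituting into \eref{eq:condition} and using the tensors $T_{\nu\mu}$ of \eref{eq:Tnumu} gives
\begin{equation}
    \sum_{\nu,\mu\in\caF(H)} e^{\beta\mu/2}\,T_{\nu\mu}
    =
    \sum_{\nu,\mu\in\caF(H)} e^{-\beta\nu/2}\,T_{\nu\mu},
\end{equation}
that is, $\sum_{\nu,\mu}\bigl(e^{\beta\mu/2}-e^{-\beta\nu/2}\bigr)T_{\nu\mu}=0$.

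The key step is to show that the pieces $T_{\nu\mu}$ are linearly independent across distinct pairs $(\nu,\mu)$, so that each coefficient must vanish separately. I would use the commuting superoperator pair $X\otimes Y\mapsto [H,X]\otimes Y$ and $X\otimes Y\mapsto X\otimes[H^\top,Y]$, that is, $\mathrm{ad}_{H\otimes I}$ and $\mathrm{ad}_{I\otimes H^\top}$ acting on $\caB(\caH\otimes\caH)$. Since $[H,J_{a,\nu}]=\nu J_{a,\nu}$, transposition gives $[H^\top,(J_{a,\mu})^\top]=-\mu (J_{a,\mu})^\top$. Hence $T_{\nu\mu}$ is a joint eigenvector with eigenvalue pair $(\nu,-\mu)$. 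Joint eigenvectors belonging to distinct eigenvalue pairs of commuting diagonalizable operators are linearly independent; equivalently, one can apply the projectors $\Pi_\nu\otimes\Pi_{\mu}^{\top}$ onto the joint eigenspaces. Therefore
\[
\bigl(e^{\beta\mu/2}-e^{-\beta\nu/2}\bigr)T_{\nu\mu}=0 \quad\text{for every pair } (\nu,\mu).
\]

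Finally, for $\beta>0$ (assumed throughout), $e^{\beta\mu/2}=e^{-\beta\nu/2}$ holds iff $\mu=-\nu$, because the exponential is injective on $\bbR$ and Bohr frequencies are real. So whenever $\nu\neq-\mu$ the scalar factor is nonzero, and it follows that $T_{\nu\mu}=0$.

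The only delicate point is the sector-wise independence step. In particular, one must check that the transpose flips the sign of the frequency, so the joint eigenvalue is $(\nu,-\mu)$, and that the projection onto a joint sector is well defined. Neither the irreducibility nor the Hermiticity hypotheses are needed for this lemma.
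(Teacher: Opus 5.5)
Your proposal is correct and takes essentially the same route as the paper. Both rewrite the condition as $\sum_{\nu,\mu}\bigl(e^{\beta\mu/2}-e^{-\beta\nu/2}\bigr)T_{\nu\mu}=0$ and separate the terms using that $T_{\nu\mu}$ is a joint eigenoperator of $\mathrm{ad}_{H\otimes I}$ and $\mathrm{ad}_{I\otimes H^\top}$ with eigenvalue pair $(\nu,-\mu)$; the only cosmetic difference is that the paper uses Hilbert--Schmidt orthogonality of the joint eigenspaces, while you use linear independence, which is equally valid.
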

\begin{proof}
    Note that
    \begin{equation}
        \sum_{a\in\caA}J_a\otimes J_a^\top = \sum_{\nu,\mu\in\caF(H)}T_{\nu\mu}.
    \end{equation}
    The condition \eref{eq:condition} is equivalent to
    \begin{equation}\label{eq:1}
        \sum_{\nu,\mu\in \caF(H)}\left(e^{-\beta\nu/2} - e^{\beta\mu/2}\right)T_{\nu\mu} = 0.
    \end{equation}
    We first show that tensors associated with distinct frequency pairs are mutually orthogonal with respect to the Hilbert--Schmidt inner product. Indeed,
\begin{equation}
[H\otimes I,T_{\nu\mu}] = \nu T_{\nu\mu},
\end{equation}
whereas
\begin{equation}
[I\otimes H^\top,T_{\nu\mu}]
=
-\mu T_{\nu\mu}.
\end{equation}
Thus, \(T_{\nu\mu}\) is a joint eigenoperator of the commuting
Hermitian superoperators
\(\operatorname{ad}_{H\otimes I}\) and
\(\operatorname{ad}_{I\otimes H^\top}\), with joint eigenvalue
\((\nu,-\mu)\). Consequently,
\begin{equation}
\tr \left(
T_{\nu\mu}^\dagger T_{\nu'\mu'}
\right)
=
0
\qquad
\text{whenever }
(\nu,\mu)\neq(\nu',\mu').
\end{equation}
Take the Hilbert--Schmidt inner product of
\eref{eq:1} with \(T_{\nu\mu}\). Then for all $T_{\nu\mu} \neq 0$, we require that $e^{-\beta\nu/2} = e^{\beta\mu/2}$, which implies $\nu = - \mu$. Conversely, $\nu \neq -\mu$ implies $T_{\nu\mu} = 0$. This completes the proof of the lemma. 
\end{proof}

\begin{corollary} [Restatement of \coref{coro:gkls}] Suppose $\mathrm{Alg}\left(\{J_a\}_{a\in\caA}\right) = \caB((\bbC^2)^N)$ and each $J_a$ is Hermitian.  If $\caL_{\rm SoS}$ satisfies \eref{eq:condition}, then
\begin{equation}\caL_{\rm SoS}(\cdot) =\sum_{a\in\caA}\sum_{\nu\in\caF(H)} e^{-\beta\nu/2}\left[ J_{a,\nu}(\cdot)J_{a,\nu}^\dag -\frac{1}{2}\bigl\{J_{a,\nu}^\dag J_{a,\nu},\cdot\bigr\} \right]. 
\end{equation} 
\end{corollary}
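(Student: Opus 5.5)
The plan is to expand every term of $\caL_{\rm SoS}$ in Bohr-frequency sectors. Then Lemma~\ref{lemma:bohr-condition} kills every cross term with $\mu\neq-\nu$. Only the diagonal pairs $(\nu,-\nu)$ survive, and these should reassemble into the Davies form.

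\textbf{Step 1: the two jump terms.} Write $\Delta^{\pm2}(J_a)=\sum_{\mu}e^{\pm\beta\mu/2}J_{a,\mu}$. The first jump term becomes
\begin{equation}
\frac12\sum_a J_a(\cdot)\Delta^2(J_a)=\frac12\sum_{\nu,\mu}e^{\beta\mu/2}\sum_a J_{a,\nu}(\cdot)J_{a,\mu}.
\end{equation}
The superoperator $X\mapsto\sum_a J_{a,\nu}XJ_{a,\mu}$ has vectorized form exactly $T_{\nu\mu}$, since $\mathrm{vec}(AXB)=(A\otimes B^\top)\mathrm{vec}(X)$. By Lemma~\ref{lemma:bohr-condition} it vanishes unless $\mu=-\nu$.

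Because each $J_a$ is Hermitian, $J_{a,-\nu}=J_{a,\nu}^\dag$. So this term reduces to $\frac12\sum_{a,\nu}e^{-\beta\nu/2}J_{a,\nu}(\cdot)J_{a,\nu}^\dag$. The same argument applied to $\frac12\sum_a\Delta^{-2}(J_a)(\cdot)J_a=\frac12\sum_{\nu,\mu}e^{-\beta\nu/2}\sum_aJ_{a,\nu}(\cdot)J_{a,\mu}$ gives an identical contribution. Together they yield $\sum_{a,\nu}e^{-\beta\nu/2}J_{a,\nu}(\cdot)J_{a,\nu}^\dag$.

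\textbf{Step 2: the operator $K$.} Expand
\begin{equation}
K=\frac12\sum_{\nu,\mu}e^{\beta(\nu-\mu)/4}\sum_aJ_{a,\nu}J_{a,\mu}.
\end{equation}
The map $A\otimes B^\top\mapsto AB$ is linear, and $\sum_aJ_{a,\nu}J_{a,\mu}$ is the image of $T_{\nu\mu}$ under it. Hence this sum vanishes whenever $\mu\neq-\nu$.

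Keeping $\mu=-\nu$ and relabelling $\nu\to-\nu$ gives
\begin{equation}
K=\frac12\sum_{a,\nu}e^{-\beta\nu/2}J_{a,\nu}^\dag J_{a,\nu}.
\end{equation}
Each $J_{a,\nu}^\dag J_{a,\nu}$ lies in the zero-frequency sector, i.e. it commutes with $H$. Therefore $\Delta^{\pm1}(K)=K$, and the remaining terms become $-K(\cdot)-(\cdot)K=-\frac12\sum_{a,\nu}e^{-\beta\nu/2}\{J_{a,\nu}^\dag J_{a,\nu},\cdot\}$.

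\textbf{Step 3: combine.} Adding Steps 1 and 2 gives exactly the claimed Davies/GKLS form. This form is manifestly a Lindbladian with jump operators $e^{-\beta\nu/4}J_{a,\nu}$.

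\textbf{Main obstacle.} The main care point is the transfer of the vanishing of $T_{\nu\mu}$ to both the sandwich superoperator and the product $\sum_aJ_{a,\nu}J_{a,\mu}$. Both are images of the same tensor under fixed linear maps, so $T_{\nu\mu}=0$ suffices for each. A second point is bookkeeping of signs in the exponents when relabelling $\nu\to-\nu$, using $(X_\nu)^\dag=(X^\dag)_{-\nu}$.
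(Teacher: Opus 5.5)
Your proposal is correct and follows essentially the same route as the paper: you expand both jump terms and $K$ in Bohr-frequency sectors, use Lemma~\ref{lemma:bohr-condition} to keep only the pairs $\mu=-\nu$, and relabel using $J_{a,-\nu}=J_{a,\nu}^\dag$. You also make explicit two steps the paper leaves implicit: that $\sum_a J_{a,\nu}J_{a,\mu}$ is a linear image of $T_{\nu\mu}$, and that $\Delta^{\pm1}(K)=K$ because each $J_{a,\nu}^\dag J_{a,\nu}$ commutes with $H$.
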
 
\begin{proof} 
Note that
\begin{align} 
    \frac{1}{2}\sum_{a\in\caA}\Bigl[J_a(\cdot)\Delta^2(J_a) +\Delta^{-2}(J_a)(\cdot)J_a\Bigr] =\frac{1}{2}\sum_{a\in\caA,\nu,\mu\in\caF(H)}\left(e^{\beta\mu/2} + e^{-\beta\nu/2}\right)J_{a,\nu}(\cdot)J_{a,\mu}
\end{align} 
By virtue of \pref{lemma:bohr-condition}, only the sectors with $\mu=-\nu$ survive in the expansion. Hence,
\begin{align} 
    \frac{1}{2}\sum_{a\in\caA}\Bigl[J_a(\cdot)\Delta^2(J_a) +\Delta^{-2}(J_a)(\cdot)J_a\Bigr] = \sum_{a\in\caA,\nu\in\caF(H)} e^{-\beta\nu/2}J_{a,\nu}(\cdot)J_{a,\nu}^\dag,
\end{align} 
where we used $J_a = J_a^\dag$ and $J_{a,-\nu}=J_{a,\nu}^\dag$. After computation,  
\begin{equation} 
K=\frac12\sum_{a\in\caA}\Delta(J_a)\Delta^{-1}(J_a) =\frac12\sum_{a\in\caA,\nu\in\caF(H)}e^{\beta\nu/2}J_{a,\nu}J_{a,\nu}^\dag =\frac12\sum_{a\in\caA,\nu\in\caF(H)}e^{-\beta\nu/2}J_{a,\nu}^\dag J_{a,\nu}.\end{equation} 
This completes the proof of the corollary.
\end{proof} 

\section{Verification of the counterexample in \sref{sec:map}}
\label{app:verification}

It is direct to show that $\caL_{\rm SoS}$ is Hermitian-preserving and $\caL_{\rm SoS}^\dag(I) = 0$. According to \lref{lem:lindbladian}, we only need to show that $P_\perp \caJ(\caL_{\rm SoS}) P_\perp \succeq 0$. In particular, for the SoS superoperator, we have
\begin{equation}
    P_\perp \caJ(\caL_{\rm SoS})P_\perp = \frac{1}{2}P_\perp (\caJ_{\Delta} \caJ_P + \caJ_P \caJ_\Delta)P_\perp.
\end{equation}

Recall that
\begin{equation}
    J_1 = \begin{bmatrix}
        1 & 1 \\
        1 & -1
    \end{bmatrix},\quad J_2 = \begin{bmatrix}
        0 & -\rmi \\
        \rmi & 0
    \end{bmatrix},\quad J_3 = \begin{bmatrix}
        1 & 0 \\
        0 & -1
    \end{bmatrix}.
\end{equation}
Hence, in the basis of $\{|00\>,|01\>,|10\>,11\>\}$,
\begin{gather}
    |J_1\>\!\> = (1,1,1,-1),\quad
    |J_2\>\!\> = (0,-\rmi,\rmi,0),\quad
    |J_3\>\!\> = (1,0,0,-1),\\
    \caJ_P = \sum_{a=1}^3|J_a\>\!\>\<\!\<J_a| = \begin{bmatrix}
        2 & 1 & 1 & -2 \\
        1 & 2 & 0 & -1 \\
        1 & 0 & 2 & 1 \\
        -2 & -1 & -1 & 2
    \end{bmatrix}.
\end{gather}
The choice of $\beta,H$ gives
\begin{equation}
   \caJ_\Delta  = \mathrm{diag}\{1,2,1/2,1\}.
\end{equation}
Therefore,
\begin{equation}
    P_\perp (\caJ_{\Delta} \caJ_P + \caJ_P \caJ_\Delta)P_\perp = \begin{bmatrix}
        4 & 3 & 0 & -4 \\
        3 & 8 & 0 & -3 \\
        0 & 0 & 2 & 0 \\
        -4 & -3 & 0 & 4
    \end{bmatrix}.
\end{equation}
Its eigenvalues are $0,2,8\pm 3\sqrt{2}$. Hence, $P_\perp \caJ(\caL_{\rm SoS}) P_\perp \succeq 0$, and the example is a Lindbladian.

\section{Proof of \thref{thm:optimal_generators} and \pref{prop:replacer}}

\label{app:replace}

\begin{theorem}[Restatement of \thref{thm:optimal_generators}]
Let $\{W_a\}_{a=1}^{d^2}$ be any Hermitian basis of
$\caB((\bbC^2)^{\otimes N})$ orthonormal in the Hilbert--Schmidt inner
product, $\Tr(W_a W_b)=\delta_{ab}$, and set
\begin{equation}\label{eq:optimal_J}
    J_a = e^{-\beta H/4}W_ae^{-\beta H/4}.
\end{equation}
Then each $J_a$ is Hermitian, $\mathrm{Alg}(\{J_a\})=\caB((\bbC^2)^{\otimes N})$, and
\begin{equation}\label{eq:optimal_parent}
    \rmM_\caH
    =\frac{1}{2}\sum_{a=1}^{d^2}\Gamma_a^\dag\Gamma_a
    = Z_\beta\,\Pi_\beta.
\end{equation}
Consequently $\operatorname{spec}(\rmM_\caH)=\{0,Z_\beta,\ldots,Z_\beta\}$,
\begin{equation}
    \mathrm{Gap}(\rmM_\caH)=\|\rmM_\caH\|=Z_\beta,
\end{equation}
so the maximal relative spectral gap is $1$ for
every finite-dimensional $H$ and every $\beta>0$.
\end{theorem}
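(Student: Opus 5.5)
The idea is to compute the modular annihilators explicitly. With $J_a = e^{-\beta H/4}W_a e^{-\beta H/4}$ they collapse to $\Gamma_a = (\Delta^{-1}(J_a)\otimes I - I\otimes \Delta(J_a)^\top)$ with simple factors: $\Delta^{-1}(J_a) = e^{-\beta H/4}J_a e^{\beta H/4} = e^{-\beta H/2}W_a$ and $\Delta(J_a) = e^{\beta H/4}J_a e^{-\beta H/4} = W_a e^{-\beta H/2}$.

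Hermiticity of $J_a$ is immediate because $W_a$ and $e^{-\beta H/4}$ are Hermitian. For the algebra statement, the map $X\mapsto e^{-\beta H/4}Xe^{-\beta H/4}$ is an invertible linear map on $\caB((\bbC^2)^{\otimes N})$. It therefore sends the basis $\{W_a\}$ to a basis, so the linear span of the $J_a$ alone is already all of $\caB$, and $\mathrm{Alg}(\{J_a\})=\caB$.

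For the main identity, set $A\coloneqq e^{-\beta H/2}$. Then
\begin{equation}
\Gamma_a = AW_a\otimes I - I\otimes (W_aA)^\top = (A\otimes I)(W_a\otimes I) - (I\otimes A^\top)(I\otimes W_a^\top).
\end{equation}
Expanding $\Gamma_a^\dag\Gamma_a$ and summing over $a$ requires three completeness identities for a Hermitian Hilbert--Schmidt orthonormal basis:
\begin{equation}
\sum_a W_a^2 = d\,I,\qquad \sum_a W_a\otimes W_a^\top = \textstyle\sum_{ij}|i\>\<j|\otimes|j\>\<i|^\top = |\Omega\>\<\Omega|,\qquad \sum_a W_aXW_a = \Tr(X) I.
\end{equation}
The diagonal terms then give $\sum_a W_aA^2W_a\otimes I = \Tr(A^2)I\otimes I = Z_\beta I$, and similarly $I\otimes \sum_a (W_a A^2 W_a)^\top = Z_\beta I$. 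The cross terms $-\sum_a W_aA\otimes (W_aA)^\top - \text{h.c.}$ need more care, since $A$ does not commute with $W_a$ and the ordering in $\Gamma_a^\dag\Gamma_a$ matters.

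Concretely, the cross term is $\sum_a (W_aA\otimes I)(I\otimes A^\top W_a^\top) = (I\otimes A^\top)\bigl(\sum_a W_a\otimes W_a^\top\bigr)(A\otimes I) = (I\otimes A^\top)|\Omega\>\<\Omega|(A\otimes I)$. Using $(I\otimes A^\top)|\Omega\> = (A\otimes I)|\Omega\>$, this equals $|A\>\!\>\<\!\<A|$ with $|A\>\!\> = (A\otimes I)|\Omega\>$. Since $A = Z_\beta^{1/2}\rb^{1/2}$, we have $|A\>\!\>\<\!\<A| = Z_\beta|\rb^{1/2}\>\<\rb^{1/2}|$, and the Hermitian-conjugate cross term is identical.

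Collecting terms, $\sum_a\Gamma_a^\dag\Gamma_a = 2Z_\beta I - 2Z_\beta|\rb^{1/2}\>\<\rb^{1/2}|$, so $\rmM_\caH = Z_\beta\Pi_\beta$. The spectrum, gap, norm and the ratio $r=1$ follow at once, since $\Pi_\beta$ is a rank-$(d^2-1)$ projector.

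The main obstacle is the bookkeeping of transposes and orderings in the cross terms, and verifying the identity $\sum_a W_a\otimes W_a^\top = |\Omega\>\<\Omega|$, which is the vectorized completeness of the basis and uses $W_a^\top = W_a^*$. The diagonal-term identity $\sum_a W_aXW_a = \Tr(X)I$ also needs to be checked. Both follow from $\sum_a W_a\otimes W_a = \mathrm{SWAP}$, itself the Hilbert--Schmidt completeness of the basis, combined with partial transposition.
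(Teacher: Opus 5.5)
Your proposal is correct and follows essentially the same route as the paper: compute $\Delta^{-1}(J_a)=e^{-\beta H/2}W_a$ and $\Delta(J_a)=W_a e^{-\beta H/2}$, expand $\Gamma_a^\dagger\Gamma_a$, and apply the completeness relations $\sum_a W_a(\cdot)W_a=\Tr(\cdot)I$ and $\sum_a W_a\otimes W_a^\top=|\Omega\rangle\langle\Omega|$, together with $(A\otimes I)|\Omega\rangle=(I\otimes A^\top)|\Omega\rangle=Z_\beta^{1/2}|\rb^{1/2}\rangle$. You also justify Hermiticity and $\mathrm{Alg}(\{J_a\})=\caB$ explicitly, via invertibility of $X\mapsto e^{-\beta H/4}Xe^{-\beta H/4}$; the paper's proof leaves these points implicit.
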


\begin{proof}

Note that
\begin{equation}
    \Delta^{-1}(J_a)=e^{-\beta H/2} W_a,
    \qquad
    \Delta(J_a)=W_a e^{-\beta H/2},
    \qquad
    \Gamma_a = e^{-\beta H/2} W_a\otimes I - I\otimes (W_a e^{-\beta H/2})^\top,
\end{equation}
Completeness of basis gives
\begin{equation}\label{eq:completeness}
    \sum_{a=1}^{d^2}W_a(\cdot)W_a=\tr(\cdot)I,
    \qquad \sum_{a=1}^{d^2} W_a\otimes W_a^\top = |\Omega\>\<\Omega|,
\end{equation}
where $|\Omega\> \coloneqq \sum_{i=1}^{d}|ii\>$. Let $E_\beta \coloneqq e^{-\beta H/2}$. Then
\begin{equation}
    \rmM_\caH = \tr(E_\beta^2)I\otimes I - \frac{1}{2}(I\otimes E_\beta^\top)|\Omega\>\<\Omega|(E_\beta\otimes I) - \frac{1}{2}(E_\beta\otimes I)|\Omega\>\<\Omega|(I\otimes E_\beta^\top).
\end{equation}
Hence, the proof is completed by observing that
\begin{equation}
    Z_\beta^{1/2}|\rb^{1/2}\> = E_\beta \otimes I|\Omega\> = I\otimes E_\beta^\top|\Omega\>.
\end{equation}

\end{proof}

\begin{proposition}[Restatement of \pref{prop:replacer}]
For the generators $\{J_a = e^{-\beta H/4}W_a e^{-\beta H/4}\}_{a=1}^{d^2}$, condition
\eref{eq:lindblad_condition} holds, and
\begin{equation}
    \caL_{\rm SoS}(X)=Z_\beta\left(\Tr(X)\rb - X\right).
\end{equation}
\end{proposition}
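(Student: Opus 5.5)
Everything here comes from direct substitution, using the completeness identities \eref{eq:completeness} from the proof of \thref{thm:optimal_generators}. Write $E_\beta=e^{-\beta H/2}$, so that $\Delta(X)=e^{\beta H/4}Xe^{-\beta H/4}$ and $\Delta^{\pm2}(X)=E_\beta^{\mp1}XE_\beta^{\pm1}$. For $J_a=e^{-\beta H/4}W_ae^{-\beta H/4}$ we then have
\begin{equation}
\Delta^2(J_a)=W_aE_\beta,\qquad \Delta^{-2}(J_a)=E_\beta W_a,\qquad J_a=E_\beta^{1/2}W_aE_\beta^{1/2}.
\end{equation}

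\textbf{Condition \eref{eq:lindblad_condition}.} I will push the $E_\beta$ factors out of the tensor sums and use $\sum_a W_a\otimes W_a^\top=|\Omega\>\<\Omega|$. The left-hand side becomes
\begin{equation}
(E_\beta^{1/2}\otimes (E_\beta^{1/2})^\top)\,|\Omega\>\<\Omega|\,(E_\beta^{1/2}\otimes I).
\end{equation}
The right-hand side becomes
\begin{equation}
(E_\beta\otimes (E_\beta^{1/2})^\top)\,|\Omega\>\<\Omega|\,(I\otimes (E_\beta^{1/2})^\top).
\end{equation}
Next I apply the ricochet identity $(A\otimes I)|\Omega\>=(I\otimes A^\top)|\Omega\>$, together with its bra version $\<\Omega|(A\otimes I)=\<\Omega|(I\otimes A^\top)$. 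This moves all factors onto the first tensor leg, and both sides reduce to the same operator
\begin{equation}
(E_\beta\otimes I)|\Omega\>\<\Omega|(E_\beta\otimes I)=Z_\beta|\rb^{1/2}\>\<\rb^{1/2}|.
\end{equation}
This is the one place where care is needed. Transposes and operator ordering must be tracked correctly, since $E_\beta$ is real-symmetric only in the eigenbasis of $H$; I will simply use the ricochet identity with general $A^\top$.

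\textbf{Form of the Lindbladian.} I will compute each term of \eref{eq:sos} using $\sum_aW_aXW_a=\Tr(X)I$. The first jump term gives
\begin{equation}
\sum_aJ_aX\Delta^2(J_a)=E_\beta^{1/2}\Big(\sum_aW_a(E_\beta^{1/2}X)W_a\Big)E_\beta=\Tr(E_\beta^{1/2}X)\,E_\beta^{3/2}.
\end{equation}
That looks wrong, so I will recheck the exponents. The correct computation is $\Delta^2(J_a)=e^{\beta H/2}J_ae^{-\beta H/2}=e^{\beta H/4}W_ae^{-3\beta H/4}$, and $\Delta(J_a)=W_ae^{-\beta H/2}$, as the preceding proof states. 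With these, the term becomes
\begin{equation}
\sum_a J_aX\Delta^2(J_a)=e^{-\beta H/4}\Tr(e^{-\beta H/4}Xe^{\beta H/4})e^{-3\beta H/4}=\Tr(X)e^{-\beta H}.
\end{equation}
By the symmetric computation, $\sum_a\Delta^{-2}(J_a)XJ_a=\Tr(X)e^{-\beta H}$ as well. So the jump part equals $\Tr(X)Z_\beta\rb$.

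For $K$, I use $\Delta(J_a)=W_aE_\beta$ and $\Delta^{-1}(J_a)=E_\beta W_a$. Then
\begin{equation}
K=\tfrac12\sum_aW_aE_\beta^2W_a=\tfrac12\Tr(e^{-\beta H})I=\tfrac{Z_\beta}{2}I.
\end{equation}
This $K$ is invariant under $\Delta^{\pm1}$, so the last two terms of \eref{eq:sos} together give $-Z_\beta X$. Combining the pieces yields $\caL_{\rm SoS}(X)=Z_\beta(\Tr(X)\rb-X)$.

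The only real obstacle is bookkeeping the exponents of $e^{\pm\beta H/4}$ consistently. I will fix $\Delta^{k}(J_a)=e^{(k-1)\beta H/4}W_ae^{-(k+1)\beta H/4}$ once and use it throughout. As a cross-check, I will verify that $\caL_{\rm SoS}$ vectorizes to $-\rmM_\rho\rmM_\caH\rmM_\rho^{-1}$ with $\rmM_\caH=Z_\beta\Pi_\beta$.
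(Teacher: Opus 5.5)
Your evaluation of $\caL_{\rm SoS}$ is correct, but your verification of \eref{eq:lindblad_condition} is wrong as written.

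In your first display you set $\Delta^{2}(J_a)=W_aE_\beta$ and $\Delta^{-2}(J_a)=E_\beta W_a$. These are $\Delta^{\pm1}(J_a)$, not $\Delta^{\pm2}(J_a)$. Your own rule $\Delta^{\pm2}(X)=E_\beta^{\mp1}XE_\beta^{\pm1}$ gives $\Delta^{2}(J_a)=E_\beta^{-1/2}W_aE_\beta^{3/2}$ and $\Delta^{-2}(J_a)=E_\beta^{3/2}W_aE_\beta^{-1/2}$. You caught this slip later, in the Lindbladian part, but did not carry the correction back to the condition.

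There are two further problems. First, the left-hand display has $(E_\beta^{1/2})^\top$ on the second leg, where even your formula would give $E_\beta^\top$. Second, the asserted common value is not what your displays produce. Applying the ricochet identity to them gives $(E_\beta\otimes I)|\Omega\>\<\Omega|(E_\beta^{1/2}\otimes I)$ and $(E_\beta^{3/2}\otimes I)|\Omega\>\<\Omega|(E_\beta^{1/2}\otimes I)$. These differ from each other, and neither equals $Z_\beta|\rb^{1/2}\>\<\rb^{1/2}|$. Even with the transcription fixed, you would only be checking $\sum_aJ_a\otimes\Delta(J_a)^\top=\sum_a\Delta^{-1}(J_a)\otimes J_a^\top$, which is not \eref{eq:lindblad_condition}. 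So this half of the statement is not established by what you wrote.

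The ricochet method does work once the exponents are right:
\begin{align*}
\sum_a J_a\otimes\Delta^2(J_a)^\top&=\bigl(E_\beta^{1/2}\otimes(E_\beta^{3/2})^\top\bigr)|\Omega\>\<\Omega|\bigl(E_\beta^{1/2}\otimes(E_\beta^{-1/2})^\top\bigr),\\
\sum_a \Delta^{-2}(J_a)\otimes J_a^\top&=\bigl(E_\beta^{3/2}\otimes(E_\beta^{1/2})^\top\bigr)|\Omega\>\<\Omega|\bigl(E_\beta^{-1/2}\otimes(E_\beta^{1/2})^\top\bigr).
\end{align*}
Using $(A\otimes B^\top)|\Omega\>=(AB\otimes I)|\Omega\>$ and $\<\Omega|(A\otimes C^\top)=\<\Omega|(CA\otimes I)$, both reduce to $(e^{-\beta H}\otimes I)|\Omega\>\<\Omega|$, which is a non-Hermitian operator.

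The paper instead uses the equivalent commutator form. It writes $\sum_aJ_a\otimes J_a^\top$ as $|\Omega\>\<\Omega|$ sandwiched between factors $e^{-\beta H/4}\otimes(e^{-\beta H/4})^\top$. It then notes that $\rb^{1/2}\otimes(\rb^{-1/2})^\top$ commutes with those factors and fixes both $|\Omega\>$ and $\<\Omega|$.

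For the form of $\caL_{\rm SoS}$, your route genuinely differs from the paper's. You evaluate \eref{eq:sos} term by term with $\sum_aW_aXW_a=\Tr(X)I$, getting jump part $\Tr(X)e^{-\beta H}$ and $K=Z_\beta I/2$. The paper instead takes $\rmM_\caH=Z_\beta\Pi_\beta$ from \thref{thm:optimal_generators}, reads off $\caH(X)=Z_\beta[X-\Tr(\rb^{1/2}X)\rb^{1/2}]$, and undoes the $\rb^{1/4}$ similarity transform. The paper's route reuses that theorem; yours is more self-contained and needs nothing beyond the completeness identity.
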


\begin{proof}
Note that
\begin{equation}
    \sum_{a=1}^{d^2}J_a\otimes J_a^\top = \left[e^{-\beta H/4}\otimes \left(e^{-\beta H/4}\right)^\top\right]|\Omega\>\<\Omega|\left[e^{-\beta H/4}\otimes \left(e^{-\beta H/4}\right)^\top\right].
\end{equation}
A direct computation shows it commute with $\rb^{1/2}\otimes \rb^{-1/2,\top}$. Hence, $\caL_{\rm SoS}$ is a Lindbladian. 

To solve the Lindbladian, we first write $\rmM_\caH$ as
\begin{equation}
    \rmM_\caH = Z_\beta I\otimes I - \sum_{a=1}^{d^2} (E_\beta W_a) \otimes (E_\beta W_a)^\top.
\end{equation}
Therefore,
\begin{gather}
    \caH(X) = Z_\beta X - \sum_{a=1}^{d^2} E_\beta W_a X E_\beta W_a = Z_\beta\left[X - \tr\left(\rb^{\frac{1}{2}}X\right)\rb^{\frac{1}{2}}\right],\\
    \caL_{\rm SoS}(X) = -\rb^{\frac{1}{4}}\caH\left(\rb^{-\frac{1}{4}}X \rb^{-\frac{1}{4}}\right)\rb^{\frac{1}{4}} = \tr(X) e^{-\beta H} - Z_\beta X.
\end{gather}
\end{proof}

\section{Proof of \thref{thm:freefermion}}
\label{app:proof1}

Start with the DLL Lindbladian for free-fermion models \cite{ding2025efficient}. In the construction, the full Lindbladian  is
\begin{equation}
    \caL_{\mathrm{DLL}}(\cdot) = -\rmi[C,\cdot] + \sum_a \left(L_a (\cdot) L_a^\dag  - \frac{1}{2}\{L_a^\dag L_a,\cdot\}\right)
\end{equation}
where $\Delta^2(L_a) = L_a^\dag$ for all $a$. In particular, if $\left[\sum_a L_a^\dag L_a, H\right] = 0$, then $C = 0$, and the parent Hamiltonian is simply
\begin{equation}
    \rmM_{\caH,\mathrm{DLL}} = K \otimes I + I \otimes K^\top - \sum_a \Delta(L_a) \otimes \Delta(L_a)^*,\qquad K = \frac{1}{2}\sum_a L_a^\dag L_a.
\end{equation}
Let $H = \sum_{a,b}h_{ab}\omega_a \omega_b$ be a free-fermion model on $N$-qubit Hilbert space. We choose the set of jump operators constructed by Gaussian-filter function, as proposed by Refs. \cite{ding2025efficient,Smid2025}:
\begin{equation}
\begin{aligned}
    L_a &= \int_{-\infty}^{\infty} f(t) e^{\rmi H t}\omega_a e^{-\rmi H t} dt\quad a = 1,\ldots,2N,\\
    f(t) &= \sqrt{\frac{2}{\pi\beta^2}}\exp\left[-\frac{2}{\beta^2}\left(t - \rmi\frac{\beta}{4}\right)^2\right],\\
    \hat{f}(\nu) &= \int_{-\infty}^{\infty} f(t) e^{\rmi \nu t}dt = \exp\left[-\frac{(1 + \beta\nu)^2}{8} + \frac{1}{8}\right],
\end{aligned}
\end{equation}
and the coherent term is $0$. After computation, we obtain
\begin{equation}
    L_a = \sum_{b=1}^{2N}\left(e^{\beta h - 2\beta^2 h^2}\right)_{ab}\omega_b.
\end{equation}
Note that $(e^{-2\beta^2 h^2})^\top = e^{-2\beta^2 h^2}$, and $(e^{\beta h})^\top = e^{-\beta h}$. Hence, the Lindbladian is
\begin{equation}
    \caL_{\rm DLL}(\cdot) = \sum_{a,b=1}^{2N} \left(e^{-2\beta h -4\beta^2 h^2} \right)_{ab}\left(\omega_a (\cdot) \omega_b - \frac{1}{2}\{\omega_b \omega_a, \cdot\}\right).
\end{equation}
Observe that $\Delta^{\pm 1}(\omega_a) = \sum_{b=1}^{2N}(e^{\mp \beta h})_{ab}\omega_b$. Then the parent Hamiltonian of DLL Lindbladian is
\begin{gather}\label{eq:parentdll}
    \rmM_{\caH,\mathrm{DLL}} = K \otimes I + I \otimes K^\top - \sum_{a,b=1}^{2N}\left(e^{-4\beta^2 h^2}\right)_{ab}\omega_a \otimes\omega_b^\top,\\
    K = \frac{1}{2}\sum_{a,b=1}^{2N}\left(e^{2\beta h - 4\beta^2 h^2}\right)_{ab}\omega_a \omega_b.
\end{gather}
We first demonstrate that $\rmM_{\caH,\mathrm{DLL}}$ is an SoS parent Hamiltonian, and then show that there exists an entire family of such Hamiltonians parameterized by a coefficient matrix $S$.

\begin{theorem}[Restatement of \thref{thm:freefermion}]
Let
\begin{equation}
    J_a = \sum_{b=1}^{2N}(e^{-2\beta^2 h^2})_{ab}\omega_b\qquad \forall a = 1,\ldots,2N.
\end{equation}
Then $\mathrm{Alg}(\{J_a\}_{a=1}^{2N})$ is a set of unital irreducible algebra, and
    \begin{equation} 
    \mathrm{M}_{\caH,\mathrm{DLL}} = \frac{1}{2}\sum_{a=1}^{2N} \Gamma_a^\dag \Gamma_a,\qquad
    \Gamma_a = \Delta^{-1}(J_a) \otimes I - I \otimes \Delta(J_a)^\top.
\end{equation}
\end{theorem}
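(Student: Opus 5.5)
Irreducibility comes first. The matrix $S\coloneqq e^{-2\beta^2h^2}$ is a real symmetric matrix. It is real because $h^2$ is real: $h$ is purely imaginary, so $h^2=(h^*)^2$. It is symmetric because $h^2$ is Hermitian and real. It is positive definite, hence invertible. Therefore $\{J_a\}$ spans the same linear space as $\{\omega_b\}$. Each $\omega_b=\sum_a (S^{-1})_{ba}J_a$ lies in $\mathrm{Alg}(\{J_a\})$, so by the Majorana irreducibility lemma and Burnside's theorem the algebra is all of $\caB((\bbC^2)^{\otimes N})$.

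For the identity itself, I would compute $\Gamma_a$ explicitly using $\Delta^{\pm1}(\omega_b)=\sum_c(e^{\mp\beta h})_{bc}\omega_c$. Since $S$ is a function of $h$, it commutes with $e^{\pm\beta h}$, which gives
\begin{equation}
\Gamma_a=\sum_{b}(Se^{\beta h})_{ab}\,\omega_b\otimes I-\sum_b (Se^{-\beta h})_{ab}\, I\otimes\omega_b^\top .
\end{equation}
I would then expand $\frac12\sum_a\Gamma_a^\dag\Gamma_a$ into three pieces: left--left, right--right, and cross terms. I would use $\omega_b^\dag=\omega_b$, $(\omega_b^\top)^\dag=\omega_b^*$, and that the coefficient matrices have conjugates $(Se^{\beta h})^*=Se^{-\beta h}$, since $h^*=-h$ and $S$ is real.

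The left--left piece is
\begin{equation}
\frac12\sum_{b,c}\bigl[(Se^{\beta h})^\dag Se^{\beta h}\bigr]_{bc}\,\omega_b\omega_c\otimes I=\frac12\sum_{b,c}(S^2e^{2\beta h})_{bc}\,\omega_b\omega_c\otimes I .
\end{equation}
With $S^2=e^{-4\beta^2h^2}$, this equals $K\otimes I$ as given in \eref{eq:parentdll}. The right--right piece gives $\frac12\sum(S^2e^{-2\beta h})^{\dagger\ldots}$ acting as $I\otimes\omega_b^*\omega_c^\top$, and I would rewrite it as $I\otimes K^\top$. This step uses $(\omega_b\omega_c)^\top=\omega_c^\top\omega_b^\top$, relabels indices, and uses $(S^2e^{2\beta h})^\top=S^2e^{-2\beta h}$. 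The cross terms combine to
\begin{equation}
-\frac12\sum_{b,c}\bigl[(Se^{\beta h})^\dag Se^{-\beta h}\bigr]_{bc}\,\omega_b\otimes\omega_c^\top+\text{h.c.},
\end{equation}
whose coefficient matrix is $S^2$. The Hermitian conjugate contributes a second copy with $\omega_b^*\otimes$-type factors. For these I would use the realization where $\omega_b^*=\pm\omega_b$ consistently with $\omega_c^\top$, or better, simply note that each cross term equals the conjugate of the other termwise. Together they give $-\sum_{b,c}(e^{-4\beta^2h^2})_{bc}\,\omega_b\otimes\omega_c^\top$, matching \eref{eq:parentdll}.

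The main obstacle is the bookkeeping of transposes and complex conjugates on the second tensor factor. This means verifying that the right--right block is exactly $I\otimes K^\top$ and that the two cross terms add rather than partially cancel. I would handle this without fixing a representation, using only $(\omega_b^\top)^\dag=\overline{\omega_b}$ and the identity $(A\otimes B^\top)^\dag=A^\dag\otimes \overline{B}$, together with the symmetry and reality properties of $S^2e^{\pm2\beta h}$. As a sanity check, I would confirm that the resulting expression agrees with the general expansion of $\rmM_\caH$ stated for $S=f(h)$ in Section~\ref{sec:free_fermion} when $S^2=e^{-4\beta^2h^2}$.
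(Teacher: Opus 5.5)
Your proposal is correct and follows essentially the same route as the paper. Irreducibility comes from invertibility of $e^{-2\beta^2h^2}$ together with Majorana irreducibility, and the identity comes from expanding $\frac12\sum_a\Gamma_a^\dag\Gamma_a$ explicitly and comparing it term by term with \eref{eq:parentdll}. The transpose/conjugate bookkeeping you flag as the main obstacle is settled by one identity, $\overline{\omega_b}=\omega_b^\top$, which follows from $\omega_b^\dag=\omega_b$: it makes $I\otimes\omega_b^\top$ Hermitian, turns $\omega_b^*\omega_c^\top$ into $(\omega_c\omega_b)^\top$, and shows the two cross terms are literally equal, so no choice of representation is needed.
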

\begin{proof}
Since $\mathrm{Alg}(\{\omega_a\}_{a=1}^{2N})$ is a set of unital irreducible algebra, and $(e^{-2\beta^2 h^2})$ is invertible, the algebra $\mathrm{Alg}(\{J_a\}_{a=1}^{2N})$ is also irreducible.

Then we prove the second result. Note that $\rmM_{\caH,\rm DLL}$ has been derived in \eref{eq:parentdll}.
By definition, the modular annihilators are
\begin{gather}
    \Gamma_a = \sum_{b=1}^{2N}\left[\left(e^{\beta h - 2\beta^2 h^2}\right)_{ab} \omega_b \otimes I - \left(e^{-\beta h-2\beta^2 h^2}\right)_{ab} I \otimes \omega_b^\top\right]\qquad a = 1,\ldots,2N.
\end{gather}
Because $h^\top = h^* = -h$, 
\begin{equation}
    \Gamma_a^\dag = \sum_{b=1}^{2N}\left[\left(e^{-\beta h - 2\beta^2 h^2}\right)_{ab} \omega_b \otimes I - \left(e^{\beta h-2\beta^2 h^2}\right)_{ab} I \otimes \omega_b^\top\right]\qquad a = 1,\ldots,2N.
\end{equation}
After computation,
\begin{equation}\label{eq:middle}
     \frac{1}{2}\sum_{a=1}^{2N}\Gamma_a^\dag \Gamma_a = \frac{1}{2}\sum_{b_1,b_2}\left(e^{2\beta h - 4\beta^2 h^2}\right)_{b_1,b_2} (\omega_{b_1} \omega_{b_2} \otimes I + I \otimes \omega_{b_2}^\top \omega_{b_1}^\top) - \sum_{b_1,b_2=1}^{2N}\left(e^{-4\beta^2 h^2}\right)_{b_1,b_2}\omega_{b_1} \otimes \omega_{b_2}^\top.
\end{equation}
Here we have used the fact that
\begin{gather}
    \sum_{a,b_1,b_2}(e^{\beta h-2\beta^2 h^2})_{ab_1}(e^{-\beta h - 2\beta^2 h^2})_{ab_2}\omega_{b_1}^\top \omega_{b_2}^\top = \sum_{b_1,b_2}\left(e^{2\beta h - 4\beta^2 h^2}\right)_{b_1,b_2} \omega_{b_2}^\top \omega_{b_1}^\top,\\
    \sum_{a,b_1,b_2=1}^{2N}\left(e^{-\beta h - 2\beta^2 h^2}\right)_{ab_1}  \left(e^{-\beta h - 2\beta^2 h^2}\right)_{ab_2}\omega_{b_1} \otimes \omega_{b_2}^\top = \sum_{b_1,b_2=1}^{2N}\left(e^{-4\beta^2 h^2}\right)_{b_1,b_2}\omega_{b_1} \otimes \omega_{b_2}^\top.
\end{gather}
The proof is completed by comparing \eref{eq:middle} to \eref{eq:parentdll}.
\end{proof}

\section{Proof of \pref{prop:projector}}
\label{app:AGSP}

To analyze $\Gamma_a$, we introduce
    \begin{equation}
        \gamma_a \coloneqq \sum_{b=1}^{2N} \left(S e^{\beta h}\right)_{ab}\omega_b.
    \end{equation}
    Then 
    \begin{equation}
        \Gamma_a = \Delta^{-1}(J_a) \otimes I - I \otimes \Delta(J_a)^\top = \gamma_a \otimes I - I \otimes \gamma_a^*.
    \end{equation}
Note that using the condition $h^\top = -h$, we get
\begin{align}
    \gamma_a^2 &= \sum_{b_1,b_2} \left(S e^{\beta h}\right)_{ab_1}\omega_{b_1} \left(S e^{ \beta h}\right)_{ab_2}\omega_{b_2} = \sum_b \left(S e^{\beta h}\right)_{ab} ^2 = \left(S^2\right)_{aa}.
\end{align}
The result for $(\gamma_a^\dag)^2$ is the same. Therefore, $\gamma_a^2 = (\gamma_a^\dag)^2$ for all $a$.

Now we can complete the proof.
\begin{proposition}[Restatement of \pref{prop:projector}]
    Define
    \begin{equation}
        R_{+,a}
        \coloneqq
        \frac{\gamma_a+\gamma_a^\dagger}{2},
        \qquad
        R_{-,a}
        \coloneqq
        \frac{\gamma_a-\gamma_a^\dagger}{2\rmi}.
    \end{equation}
    Then $\{R_{+,a},R_{-,a}\}=0$, and there exist nonnegative scalars $r_{+,a}$ and $r_{-,a}$ such that
    \begin{equation}
        R_{+,a}^2=r_{+,a}I,
        \qquad
        R_{-,a}^2=r_{-,a}I.
    \end{equation}
    Furthermore,
    \begin{equation}
        \left(\Gamma_a^\dagger\Gamma_a\right)^2
        =
        4(r_{+,a} + r_{-,a})\,\Gamma_a^\dagger\Gamma_a.
    \end{equation}
\end{proposition}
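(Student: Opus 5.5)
The plan is to reduce everything to a two-generator Clifford-type algebra. For fixed $a$, write the coefficient row $c_b\coloneqq (Se^{\beta h})_{ab}=u_b+\rmi v_b$ with $u,v\in\bbR^{2N}$. Then $\gamma_a=U+\rmi V$ with $U\coloneqq\sum_b u_b\omega_b$ and $V\coloneqq\sum_b v_b\omega_b$, both Hermitian, so that $R_{+,a}=U$ and $R_{-,a}=V$. The Majorana anticommutation relations immediately give
\begin{equation}
U^2=\|u\|^2 I,\qquad V^2=\|v\|^2 I,\qquad \{U,V\}=2(u\cdot v)I .
\end{equation}
Hence $r_{+,a}=\|u\|^2\ge 0$ and $r_{-,a}=\|v\|^2\ge0$. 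It remains to show $u\cdot v=0$. For this I use the already-established identity $\gamma_a^2=\sum_b c_b^2=(S^2)_{aa}$, which is real because $S$ is real symmetric, commutes with $h$, and $e^{\beta h}(e^{\beta h})^\top=I$ by $h^\top=-h$. Since $\gamma_a^2=U^2-V^2+\rmi\{U,V\}$, the imaginary part $2u\cdot v$ must vanish. This yields $\{R_{+,a},R_{-,a}\}=0$ together with $s\coloneqq\gamma_a^2=(r_{+,a}-r_{-,a})I$.

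From these relations I extract the only identities I need. Set $r\coloneqq r_{+,a}+r_{-,a}$. Then
\begin{equation}
\gamma_a^\dagger\gamma_a+\gamma_a\gamma_a^\dagger=2(U^2+V^2)=2r,\qquad \gamma_a^2=(\gamma_a^\dagger)^2=s .
\end{equation}
These imply the cubic relation $\gamma_a\gamma_a^\dagger\gamma_a=\gamma_a(2r-\gamma_a\gamma_a^\dagger)=2r\gamma_a-s\gamma_a^\dagger$, and its adjoint. The same relations hold for $\bar\gamma\coloneqq\gamma_a^*=U^\top-\rmi V^\top$ on the second tensor factor, with the same real $r$ and $s$, since $U^\top,V^\top$ obey identical relations.

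Next, write $\Gamma_a=A-B$ with $A=\gamma_a\otimes I$ and $B=I\otimes\bar\gamma$. Operators in different tensor factors commute. To prove $(\Gamma_a^\dagger\Gamma_a)^2=4r\,\Gamma_a^\dagger\Gamma_a$, it suffices to prove $\Gamma_a\Gamma_a^\dagger\Gamma_a=4r\,\Gamma_a$ and then multiply on the left by $\Gamma_a^\dagger$. I would expand $(A-B)(A^\dagger-B^\dagger)(A-B)$ into eight words. Terms involving a single factor are reduced with the cubic relation, e.g.\ $AA^\dagger A=2rA-sA^\dagger$. Mixed terms are reordered using $[A,B]=[A,B^\dagger]=0$ and then reduced with $A^\dagger A+AA^\dagger=2r$ and $A^2=B^2=s$. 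For example, $AA^\dagger B+BA^\dagger A$ is not reduced directly, but combined with $AB^\dagger B$-type partners.

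The main obstacle is this bookkeeping, namely checking that all the $s$-dependent terms such as $-sA^\dagger$, $-sB^\dagger$ and $sA^\dagger$-type contributions from $BA^\dagger B = A^\dagger B^2$ cancel. In particular, $-AB^\dagger A=-A^2B^\dagger=-sB^\dagger$ and $-BA^\dagger B=-sA^\dagger$ should cancel against the $-sA^\dagger$ and $-sB^\dagger$ produced by the cubic relations with the correct sign. The remaining terms should assemble into $2r(A-B)+2r(A-B)=4r\Gamma_a$. If the direct expansion gets messy, my fallback is to note that $U\otimes I$, $V\otimes I$, $I\otimes U^\top$ and $I\otimes V^\top$, after normalization, generate a small finite-dimensional algebra. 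Then one can verify that $\Gamma_a^\dagger\Gamma_a$ has spectrum contained in $\{0,4r\}$ on a faithful representation, for instance by treating $\Gamma_a=P+\rmi Q$ with $P=U\otimes I-I\otimes U^\top$ and $Q=V\otimes I+I\otimes V^\top$. In that representation, $P^2$ and $Q^2$ are each $2r_\pm$ times a projector, and $P$ and $Q$ anticommute.
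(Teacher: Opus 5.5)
Your proposal is correct. The first half coincides with the paper's proof, but you obtain the quartic identity by a genuinely different route.

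The first half is the same argument. You split $\gamma_a=U+\rmi V$, read off $R_{\pm,a}^2$ and $\{R_{+,a},R_{-,a}\}$ from the Majorana relations, and force $u\cdot v=0$ from the reality of $\gamma_a^2=(S^2)_{aa}$. The paper phrases the last point as $\gamma_a^2=(\gamma_a^\dagger)^2$, which is equivalent.

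The routes differ in the last step. The paper introduces $R_z=\rmi R_+R_-$ and writes $\Gamma_a^\dagger\Gamma_a=2[rI+Q]$, where the paper's $Q$ is $-R_+\otimes R_+^\top+R_-\otimes R_-^\top+R_z\otimes I+I\otimes R_z^\top$. It then asserts $Q^2=r^2I$ without displaying the computation. You instead prove the equivalent partial-isometry relation $\Gamma_a\Gamma_a^\dagger\Gamma_a=4r\,\Gamma_a$. Your proof uses only $A^2=B^2=s$ (a real multiple of $I$), $AA^\dagger+A^\dagger A=BB^\dagger+B^\dagger B=2rI$, and commutation across tensor factors.

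The bookkeeping you were worried about does close, but two of your signs are off:
\begin{itemize}
\item The word $BA^\dagger B$ carries a plus sign and equals $+sA^\dagger$. It cancels the $-sA^\dagger$ in $AA^\dagger A=2rA-sA^\dagger$.
\item The cubic term for $B$ enters as $-BB^\dagger B=-2rB+sB^\dagger$. It cancels $-AB^\dagger A=-sB^\dagger$.
\end{itemize}
The remaining mixed words need no cross-combination. You have $-(AA^\dagger B+BA^\dagger A)=-(AA^\dagger+A^\dagger A)B=-2rB$ and $AB^\dagger B+BB^\dagger A=A(B^\dagger B+BB^\dagger)=2rA$, so the total is $4r(A-B)$.

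Each approach buys something different. Your route is shorter and replaces the unshown $Q^2$ computation by two scalar identities. It also makes visible that reality of $\gamma_a^2$ is used precisely to get $A^2=B^2$. The same expansion gives $(A-B)(A-B)^\dagger(A-B)=2(r_A+r_B)(A-B)$ for any $A,B$ acting on different tensor factors, provided $A^2=B^2$ is scalar, $AA^\dagger+A^\dagger A=2r_AI$ and $BB^\dagger+B^\dagger B=2r_BI$. The paper's route instead exhibits $\Gamma_a^\dagger\Gamma_a=4r\cdot\tfrac12(I+Q/r)$ with an explicit projector, which is the form behind its sum-of-projectors and AGSP remark.

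You should drop your fallback, because as stated it is wrong. With your $P=U\otimes I-I\otimes U^\top$ and $Q=V\otimes I+I\otimes V^\top$, one gets $\{P,Q\}=2(U\otimes V^\top-V\otimes U^\top)$, which is nonzero whenever $u,v\neq 0$. Also, $P^2=2r_{+,a}I-2U\otimes U^\top$ is $4r_{+,a}$ times a projector, not $2r_{+,a}$ times one.
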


\begin{proof}
    We omit the subscript $a$. Write
    \begin{equation}
        \gamma
        =
        \sum_{b=1}^{2N}(x_b+\rmi y_b)\omega_b
        =
        R_+ +\rmi R_-,
        \qquad
        x_b,y_b\in\bbR,
    \end{equation}
    where
    \begin{equation}
        R_+\coloneqq\frac{\gamma+\gamma^\dagger}{2}
        =\sum_{b=1}^{2N}x_b\omega_b,
        \qquad
        R_-\coloneqq\frac{\gamma-\gamma^\dagger}{2\rmi}
        =\sum_{b=1}^{2N}y_b\omega_b.
    \end{equation}
    Using the Majorana relations $\{\omega_b,\omega_c\}=2\delta_{bc}I$,
    we obtain
    \begin{equation}
        R_+^2=r_+ I,
        \qquad
        R_-^2=r_- I,
        \qquad
        \{R_+,R_-\}
        =
        2\left(\sum_{b=1}^{2N}x_by_b\right)I,
    \end{equation}
    where
    \begin{equation}
        r_+\coloneqq\sum_{b=1}^{2N}x_b^2,
        \qquad
        r_-\coloneqq\sum_{b=1}^{2N}y_b^2.
    \end{equation}
    On the other hand,
    \begin{equation}
        \gamma^2
        =
        R_+^2-R_-^2+\rmi\{R_+,R_-\},
        \qquad
        (\gamma^\dagger)^2
        =
        R_+^2-R_-^2-\rmi\{R_+,R_-\}.
    \end{equation}
    Since $\gamma^2=(\gamma^\dagger)^2$, it follows that
       $ \{R_+,R_-\}=0$.
    Define
    \begin{equation}
        R_z\coloneqq\frac{\rmi}{2}[R_+,R_-]=\rmi R_+ R_-
    \end{equation}
    which satisfies
    \begin{equation}
        R_z^2=r_+r_- I,
        \qquad
        \{R_z,R_+\}=\{R_z,R_-\}=0.
    \end{equation}
    Recall that
    \begin{equation}
        \Gamma
        =
        \gamma\otimes I-I\otimes\gamma^*
        =
        \left(R_+\otimes I-I\otimes R_+^\top\right)
        +
        \rmi\left(R_-\otimes I+I\otimes R_-^\top\right).
    \end{equation}
    A direct calculation gives
    \begin{equation}
        \Gamma^\dagger\Gamma
        =
        2\left[(r_+ + r_-)I+Q\right],
    \end{equation}
    where
    \begin{equation}
        Q
        \coloneqq
        -R_+\otimes R_+^\top
        +R_-\otimes R_-^\top
        +R_z\otimes I
        +I\otimes R_z^\top.
    \end{equation}
    After computation, it can be verified that
    \begin{equation}
        Q^2
        =
        (r_++r_-)^2I.
    \end{equation}
    Finally,
    \begin{align}
        \left(\Gamma^\dagger\Gamma\right)^2 =
        4[(r_+ + r_-)I+Q]^2 =
        8(r_+ + r_-)[(r_+ + r_-)I+Q] =
        4(r_+ + r_-)\,\Gamma^\dagger\Gamma.
    \end{align}
\end{proof}

\section{Proof of \pref{prop:cost}}
\label{app:proof_of_gap}

\begin{proposition}[Restatement of \pref{prop:cost}]
Suppose
   $ H=\sum_{a,b=1}^{2N}h_{ab}\omega_a\omega_b$
is a free-fermion Hamiltonian, where $h$ has spectrum
$\{\pm\lambda_n/2:n=1,\ldots,N\}$. Let $S=f(h)$, where $f$ is a
real-valued function satisfying
    $f(\lambda_n/2)=f(-\lambda_n/2)\neq 0$
for every $n$. Then
\begin{equation}
\begin{aligned}
    \operatorname{Gap}(\rmM_\caH)
    &=
    2\min_n
    \left\{
        f\left(\frac{\lambda_n}{2}\right)^2
        \cosh(\beta\lambda_n)
    \right\},\\
    \|\rmM_\caH\|
    &=
    4\sum_{n=1}^N
    f\left(\frac{\lambda_n}{2}\right)^2
    \cosh(\beta\lambda_n).
\end{aligned}
\end{equation}
\end{proposition}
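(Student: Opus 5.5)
The plan is to reduce $\rmM_\caH$ to a sum of $N$ mutually commuting single-mode pieces, diagonalize each piece using \pref{prop:projector}, and then read off the gap and the norm. The starting point is the expansion of $\rmM_\caH$ in \sref{sec:free_fermion}:
\begin{equation*}
\rmM_\caH=\tfrac12\sum_{a,b}(S^2e^{2\beta h})_{ab}(\omega_a\omega_b\otimes I+I\otimes\omega_b^\top\omega_a^\top)-\sum_{a,b}(S^2)_{ab}\,\omega_a\otimes\omega_b^\top .
\end{equation*}
This expression is covariant under the orthogonal rotation $\tilde\omega_a=\sum_b R_{ba}\omega_b$ that block-diagonalizes $h$. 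The reason is that $S=f(h)$, $S^2$ and $S^2e^{2\beta h}$ are all functions of $h$, so they become block diagonal in the same basis. Hence I will rewrite $\rmM_\caH=\sum_{n=1}^N\rmM_n$, where $\rmM_n$ contains only the Majoranas $\tilde\omega_{2n-1},\tilde\omega_{2n}$ on both copies. Since the rotation is orthogonal, the annihilators transform by $\Gamma_a\mapsto\sum_b R_{ba}\Gamma_b$, and $\sum_a\Gamma_a^\dag\Gamma_a$ is invariant. Each $\rmM_n$ can therefore also be written as $\frac12(\tilde\Gamma_{2n-1}^\dag\tilde\Gamma_{2n-1}+\tilde\Gamma_{2n}^\dag\tilde\Gamma_{2n})$, with $\tilde\Gamma$ built from $\tilde\omega$.

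Next I check that the $\rmM_n$ commute with one another. Every term of $\rmM_n$ is one of $\tilde\omega_a\tilde\omega_b\otimes I$, $I\otimes\tilde\omega_b^\top\tilde\omega_a^\top$, or $\tilde\omega_a\otimes\tilde\omega_b^\top$, with $a,b$ in mode $n$. Two such operators from distinct modes commute, because each anticommuting pair of Majoranas appears an even number of times. For example, $(\tilde\omega_a\otimes\tilde\omega_b^\top)(\tilde\omega_c\otimes\tilde\omega_d^\top)$ differs from the reversed product by two sign flips. It follows that the spectrum of $\rmM_\caH$ consists of the sums of eigenvalues of the individual $\rmM_n$, taken on a common eigenbasis.

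For the single-mode spectrum I use \pref{prop:projector} and its proof. For each rotated annihilator, $r_++r_-=\sum_b|(Se^{\beta h})_{ab}|^2=(Se^{2\beta h}S)_{aa}$. The $2\times2$ block of $e^{2\beta h}$ has diagonal entries $\cosh(\beta\lambda_n)$, and $f(\lambda_n/2)=f(-\lambda_n/2)$ makes $S$ scalar on that block. So $r_++r_-=f(\lambda_n/2)^2\cosh(\beta\lambda_n)$. Consequently each $\frac12\tilde\Gamma_a^\dag\tilde\Gamma_a$ has spectrum $\{0,c_n\}$ with $c_n:=2f(\lambda_n/2)^2\cosh(\beta\lambda_n)$.

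The main obstacle is to show that the two terms within a mode combine to give $\operatorname{spec}(\rmM_n)=\{0,c_n,c_n,2c_n\}$. Restricted to the effective four-Majorana algebra of mode $n$ (the right copy dressed by a parity string), this is a $4\times4$ computation. I would first try to show that the two projector terms in a mode commute, using the $R_z$ structure from the proof of \pref{prop:projector}. If that fails, I would compute $\rmM_n$ explicitly, which gives $\rmM_n=c_n(\text{number of excitations})$.

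Given the single-mode spectrum, the two claimed formulas follow. The norm is the maximum total eigenvalue, $\sum_n 2c_n=4\sum_n f(\lambda_n/2)^2\cosh(\beta\lambda_n)$. The ground state is unique by \thref{thm:algebra}, since $S$ is invertible. The first excitation excites exactly one mode, so $\mathrm{Gap}=\min_n c_n=2\min_n f(\lambda_n/2)^2\cosh(\beta\lambda_n)$.
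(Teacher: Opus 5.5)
Your proposal is correct and is essentially the paper's own argument: rotate $h$ into its canonical $2\times 2$ blocks, use $\tilde\Gamma_a=\sum_b R_{ba}\Gamma_b$ with $R$ orthogonal to see that $\rmM_\caH$ is unchanged, split it into mutually commuting single-mode pieces with spectrum $\{0,c_n,c_n,2c_n\}$, and read off the gap and the norm. The paper only asserts the single-mode spectrum, so your route through \pref{prop:projector} plus the $4\times 4$ check is more explicit; the two projectors within a mode do in fact commute. Like the paper, you take for granted that every combination of single-mode eigenvalues is realized, which is true here but strictly needs a short sector argument, because the mode pieces share Jordan--Wigner strings rather than acting on disjoint tensor factors.
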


\begin{proof}
Suppose coefficient matrix is 
\begin{equation}
    h = \bigoplus_{n=1}^N \frac{\rmi}{2}\begin{pmatrix}
        0 & \lambda_n \\
        -\lambda_n & 0
    \end{pmatrix}.
\end{equation}
Then $H = \rmi\sum_{n=1}^{N}\lambda_n \omega_{2n-1} \omega_{2n} = \sum_{n=1}^N H_n$, which is the sum of $N$ commuting blocks. Hence, 
\begin{equation}
    S = \bigoplus_{n=1}^N \begin{pmatrix}
        f_n & 0 \\
        0 & f_n
    \end{pmatrix},\qquad f_n \coloneqq f\left(\frac{\lambda_n}{2}\right) = f\left(-\frac{\lambda_n}{2}\right).
\end{equation}
Start with the first block $n=1$. By definition, we have
\begin{equation}
    J_1 = f_1 \omega_1 ,\quad J_2 = f_1 \omega_2.
\end{equation}
Note that $[H,\omega_1] = [H_1,\omega_1] =-\rmi 2\lambda_1 \omega_2, [H,\omega_2] = [H_1,\omega_2] = \rmi 2\lambda_1 \omega_1$. Hence, we can write $\rmM_\caH$ as summation of mutually commuting Hermitian operators:
\begin{equation}
    \frac{1}{2}\Gamma_{2n-1}^\dag \Gamma_{2n-1} + \frac{1}{2}\Gamma_{2n}^\dag \Gamma_{2n}\qquad n = 1,\ldots, N.
\end{equation}
For each $n$, the spectrum of $(\Gamma_{2n-1}^\dag \Gamma_{2n-1} + \Gamma_{2n}^\dag \Gamma_{2n})/2$ is
\begin{equation}
    \left\{0,\quad 2f_n^2 \cosh(\beta\lambda_n),\quad 2f_n^2 \cosh(\beta\lambda_n),\quad 4f_n^2 \cosh(\beta \lambda_n)\right\}.
\end{equation}
Therefore, 
\begin{equation}
    \mathrm{Gap}(\rmM_\caH) = 2\min_n \left[f_n^2\cosh(\beta \lambda_n)\right],\qquad \|\rmM_\caH\| = 4\sum_{n=1}^N f_n^2\cosh(\beta \lambda_n).
\end{equation}
In the more general scenarios where $H = \sum_{a,b=1}^{2N}h_{ab}\omega_a \omega_b$, there exists an orthogonal matrix $R$ such that
\begin{equation}
    R^\top h R = \bigoplus_{n=1}^N \frac{\rmi}{2}\begin{pmatrix}
        0 & \lambda_n \\
        - \lambda_n & 0
    \end{pmatrix}.
\end{equation}
Let $\tilde{\omega}_a \coloneqq \sum_b R_{ba} \omega_b$, then
\begin{equation}
    H = \rmi \sum_{n=1}^N \lambda_n \tilde{\omega}_{2n-1} \tilde{\omega}_{2n},
\end{equation}
and $\{\tilde{\omega}_a\}_{a=1}^{2N}$ is also a set of Majorana operators. If we introduce
\begin{equation}
    \tilde{J}_a = \sum_b (R^\top SR)_{ab}\tilde{\omega}_b,\quad 
    \tilde{\Gamma}_a = \Delta^{-1}(\tilde{J}_a) \otimes I - I \otimes \Delta(\tilde{J}_a)^\top,\quad \tilde{\rmM}_\caH = \frac{1}{2}\sum_a \tilde{\Gamma}_a^\dag \tilde{\Gamma}_a,
\end{equation}
Then the spectrum of $\tilde{\rmM}_\caH$ is
\begin{equation}
    \left\{\sum_{n=1}^N E_n : E_n \in \{0, 2f_n^2\cosh(\beta\lambda_n), 2f_n^2\cosh(\beta\lambda_n), 4f^2_n\cosh(\beta\lambda_n)\}\right\}.
\end{equation}
Observe that
\begin{equation}
    \tilde{J}_a = \sum_b R_{ba}J_b,\qquad \tilde{\Gamma}_a = \sum_b R_{ba}\Gamma_b.
\end{equation}
Due to the orthogonality of $R$, we have 
\begin{equation}
    \rmM_\caH = \frac{1}{2}\sum_a \Gamma_a^\dag \Gamma_a = \frac{1}{2}\sum_a \tilde{\Gamma}_a^\dag \tilde{\Gamma}_a = \tilde{\rmM}_\caH.
\end{equation}
This completes the proof of the proposition.
\end{proof}

\section{Proof of \pref{prop:mixing_time}}
\label{app:proof_of_mixing}

\begin{proposition}[Restatement of \pref{prop:mixing_time}]
Suppose $H = \sum_{a,b=1}^{2N}h_{ab}\omega_a \omega_b$ is a free-fermion model, where $h$ has spectrum $\{\pm \lambda_n/2 : n=1,\ldots,N\}$. Let $S=f(h)$, where $f(x)$ is a real-valued function satisfying $f(\lambda_n/2) = f(-\lambda_n/2)$ for all $n$. Then for every initial state $\rho_0$, 
\begin{equation}
\label{eq:worst_case_ff_simple}
    \frac12
    \left\|
    e^{t\mathcal L_{\rm SoS}}(\rho_0)-\rho_\beta
    \right\|_1
    \le
    2N e^{-g_\star t},\quad g_{\star} \coloneqq \min_n 2f(\lambda_n/2)^2\cosh(\beta\lambda_n). 
\end{equation}
Therefore,
\begin{equation}
    t_{\rm mix}(\epsilon)
    \le
    \frac{1}{g_\star}
    \log\left(\frac{2N}{\epsilon}\right).
\end{equation}
\end{proposition}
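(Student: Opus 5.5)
The plan is not to use the generic gap-based mixing lemma of Appendix~\ref{sec:lindbladian}. That lemma would produce a prefactor $\|\rb^{-1}\|^{1/2}$, which is exponential in $N\beta$, whereas the target bound $2N e^{-g_\star t}$ has no $\beta$-dependence in the prefactor. Instead I would exploit the mode decomposition used in the proof of \pref{prop:cost}. Rotate to the normal-mode Majoranas $\tilde\omega_a=\sum_b R_{ba}\omega_b$, in which $h=\bigoplus_n \frac{\rmi}{2}\lambda_n\sigma_y$-type blocks and $S=f(h)=\bigoplus_n f_n I_2$. Since the Kossakowski matrix $S^2e^{-2\beta h}$ transforms by the same orthogonal congruence, it becomes block diagonal. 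Hence $\caL_{\rm SoS}=\sum_{n=1}^N \caL_n$, where $\caL_n$ involves only $\tilde\omega_{2n-1},\tilde\omega_{2n}$.

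Every term of $\caL_n$ has the form $\tilde\omega_a(\cdot)\tilde\omega_b$ or $\{\tilde\omega_b\tilde\omega_a,\cdot\}$ with $a,b$ in block $n$. It is therefore even in the mode-$n$ Majoranas. Using $\{\tilde\omega_a,\tilde\omega_c\}=0$ across blocks, I would check directly that $[\caL_n,\caL_m]=0$ as superoperators; the two sign flips from moving a pair past a pair cancel. This gives the factorization $e^{t\caL_{\rm SoS}}=\prod_n e^{t\caL_n}$.

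Next I would solve a single mode exactly in the Heisenberg picture on the basis $\{I,\tilde\omega_{2n-1},\tilde\omega_{2n},\tilde Z_n\}$ of the mode-$n$ algebra. The operators $\caL_n^\dagger$ annihilate $I$ and act as multiplication by $-g_n$ on the span of $\tilde\omega_{2n-1},\tilde\omega_{2n}$, possibly combined with a rotation that does not affect the norm. Here $g_n=2f_n^2\cosh(\beta\lambda_n)$, consistent with the spectrum $\{0,g_n,g_n,2g_n\}$ found in \pref{prop:cost} and the similarity between $\rmM_\caL$ and $-\rmM_\caH$. Finally, $\caL_n^\dagger$ sends $\tilde Z_n\mapsto -2g_n(\tilde Z_n-\tanh(\beta\lambda_n)I)$.

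Let $\caP_n$ be the limit of $e^{t\caL_n}$. It is the channel that replaces the mode-$n$ factor by $\frac12(I+\tanh(\beta\lambda_n)\tilde Z_n)$. From the explicit action one obtains $e^{t\caL_n}-\caP_n$ as a combination of maps of the form $X\mapsto \tilde\omega X\tilde\omega$-type conjugations and $\tilde Z_n$-twirls, with coefficients $O(e^{-g_n t})$. I would aim for $\|e^{t\caL_n}-\caP_n\|_{1\to1}\le 4e^{-g_n t}$ on the full (not just mode-$n$) operator space, by writing the difference as an explicit linear combination of a few unitary conjugations and replacement maps.

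Then I would telescope:
\begin{equation*}
e^{t\caL_{\rm SoS}}-\prod_n\caP_n=\sum_{k=1}^N\Bigl(\prod_{j<k}\caP_j\Bigr)\bigl(e^{t\caL_k}-\caP_k\bigr)\Bigl(\prod_{j>k}e^{t\caL_j}\Bigr).
\end{equation*}
Here $\prod_n\caP_n(\rho_0)=\rb$ by \eref{eq:gibbs_ff}, and the $\caP_j$ and $e^{t\caL_j}$ are trace-norm contractions since they are CPTP. Each summand is then bounded by $4e^{-g_k t}\le 4e^{-g_\star t}$. Multiplying by $\tfrac12$ gives $2Ne^{-g_\star t}$. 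Solving $2Ne^{-g_\star t}\le\epsilon$ gives \eref{eq:mixing}.

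The main obstacle is the single-mode bound with a $\beta$-independent constant. It must hold in $1\to1$ norm on operators that are entangled with the other modes, and across the fermionic (non-tensor) structure. I would first try writing $e^{t\caL_n}-\caP_n$ explicitly via Kraus-like conjugations by $\tilde\omega_{2n-1},\tilde\omega_{2n},\tilde Z_n$, each of which is a unitary on the full space. If that gets messy, the fallback is to Jordan--Wigner map mode $n$ to a qubit, check the parity-compatibility so that the commuting factors act like tensor factors, and bound the resulting qubit maps.
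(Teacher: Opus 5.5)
Your proposal is correct and follows essentially the same route as the paper: split $\caL_{\rm SoS}=\sum_n\caL_n$ into commuting single-mode generalized amplitude-damping generators, solve each mode exactly to get $\|e^{t\caL_n}-\mathcal{P}_n\|\le 2e^{-g_nt}+2e^{-2g_nt}\le 4e^{-g_nt}$, and telescope to $2Ne^{-g_\star t}$. The paper resolves the fermionic non-tensor issue you flag with your fallback: each $\caL_n$ is unitarily equivalent to a single-qubit generator tensored with the identity (for instance, take a Jordan--Wigner ordering with mode $n$ first), and unitary invariance of the diamond norm (equivalently, the full-space $1\to1$ norm you use) then gives the single-mode bound.
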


\begin{proof}
After diagonalising the single-particle Hamiltonian $h$, introduce
canonical fermionic modes
\begin{equation}
    \tilde{a}_n
    \coloneqq
    \frac12
    \left(
    \widetilde{\omega}_{2n-1}
    -
    i\widetilde{\omega}_{2n}
    \right).
\end{equation}
The free-fermion Lindbladian decomposes as
    $\mathcal L_{\rm SoS}
    =
    \sum_{n=1}^N\mathcal L_n$,
where
\begin{equation}
\label{eq:single_mode_GAD}
    \mathcal L_n(\cdot)
    =
    \gamma_{n,-}\left(\tilde{a}_n (\cdot)\tilde{a}_n^\dag  - \frac{1}{2}\{\tilde{a}_n^\dag \tilde{a}_n,\cdot\}\right)
    +
    \gamma_{n,+}\left(\tilde{a}_n^\dag (\cdot)\tilde{a}_n  - \frac{1}{2}\{\tilde{a}_n \tilde{a}_n^\dag,\cdot\}\right),
\end{equation}
with
\begin{equation}
    \gamma_{n,-}
    \coloneqq
    2f_n^2e^{-\beta\lambda_n},
    \qquad
    \gamma_{n,+}
    \coloneqq
    2f_n^2e^{\beta\lambda_n}.
\end{equation}
Let
\begin{equation}
    g_n
    \coloneqq
    \frac{\gamma_{n,+}+\gamma_{n,-}}{2}
    =
    2f_n^2\cosh(\beta\lambda_n).
\end{equation}

Note that
\begin{equation}
    [\mathcal L_n,\mathcal L_m]=0,
    \qquad \forall n\neq m.
\end{equation}
Consequently,
\begin{equation}
    e^{t\mathcal L_{\rm SoS}}
    =
    \prod_{n=1}^N e^{t\mathcal L_n}.
\end{equation}

Let $\caE_n\coloneqq\lim_{t\to\infty}e^{t\mathcal L_n}$.
Each $\mathcal L_n$ is unitarily equivalent to a single-qubit
generalized amplitude-damping generator tensored with the identity
on all other modes. Therefore it suffices to bound the corresponding
single-qubit channel.

Let $\tau_n \coloneqq(I+m_nZ)/2$ denote the stationary state of the $n$th mode, where
$|m_n|\le1$.  For the replacement channel
    $\mathcal R_{\tau_n}(\cdot) = \tr(\cdot)\tau_n$,
define
    $\Delta_{n,t}
    \coloneqq e^{t\caL_n} - \mathcal R_{\tau_n}$.
Its action on the Pauli basis is
\begin{gather}
    \Delta_{n,t}(I)
    =
    -m_ne^{-2g_nt}Z,
    \qquad
    \Delta_{n,t}(X)
    =
    e^{-g_nt}X,
    \\
    \Delta_{n,t}(Y)
    =
    e^{-g_nt}Y,
    \qquad 
    \Delta_{n,t}(Z)
    =
    e^{-2g_nt}Z.
\end{gather}
Thus, 
\begin{align}
\label{eq:delta_rank_one}
    \Delta_{n,t}(\cdot)
    &=
    \frac{e^{-g_nt}}{2}
    \tr(X\cdot)X
    +
    \frac{e^{-g_nt}}{2}
    \tr(Y\cdot)Y
    +
    \frac{e^{-2g_nt}}{2}
    \tr[(Z-m_nI)\cdot]Z,\\
    \|\Delta_{n,t}\|_\diamond 
    &\le
    2e^{-g_nt}
    +
    (1+|m_n|)e^{-2g_nt}
    \le
    2e^{-g_nt}
    +
    2e^{-2g_nt}.
\end{align}
Unitary invariance of the diamond norm therefore gives
\begin{equation}
    \|e^{t\caL_n}-\caE_n\|_\diamond
    \le
    2e^{-g_nt}
    +
    2e^{-2g_nt}.
\end{equation}
Using the telescoping identity,
\begin{align}
    \prod_{n=1}^N e^{t\caL_n}
    -
    \prod_{n=1}^N \caE_n
    =
    \sum_{n=1}^N
    \left(
    \prod_{j<n}e^{t\caL_j}
    \right)
    (e^{t\caL_n}-\caE_n)
    \left(
    \prod_{j>n}\caE_j
    \right),
\end{align}
we obtain
\begin{align}
    \left\|
    e^{t\mathcal L_{\rm SoS}}
    -
    \prod_{n=1}^N \caE_n
    \right\|_\diamond
    \le
    \sum_{n=1}^N
    \|e^{t\caL_n}-\caE_n\|_\diamond
    \le
    2\sum_{n=1}^N
    \left(
    e^{-g_nt}+e^{-2g_nt}
    \right).
\end{align}
Note that
\begin{equation}
    \prod_{n=1}^N\caE_n
    =
    \mathcal R_{\rho_\beta},
    \qquad
    \mathcal R_{\rho_\beta}(\cdot)
    =
    \tr(\cdot)\rho_\beta.
\end{equation}
Therefore, for every density matrix $\rho_0$,
\begin{align}
    \frac12
    \left\|
    e^{t\mathcal L_{\rm SoS}}(\rho_0)
    -
    \rho_\beta
    \right\|_1
    \le
    \frac12
    \left\|
    e^{t\mathcal L_{\rm SoS}}
    -
    \mathcal R_{\rho_\beta}
    \right\|_\diamond
    \le
    \sum_{n=1}^N
    \left(
    e^{-g_nt}
    +
    e^{-2g_nt}
    \right).
\end{align}
Using $g_n\ge g_\star$, we obtain
\begin{equation}
    \frac12
    \left\|
    e^{t\mathcal L_{\rm SoS}}(\rho_0)
    -
    \rho_\beta
    \right\|_1
    \le
    2Ne^{-g_\star t}.
\end{equation}
Solving the right-hand side for a target accuracy $\epsilon$
completes the proof.
\end{proof}

\section{Proof of \pref{prop:spectral_gap}}
\label{app:proof_prop}
    \begin{proposition}[Restatement of \pref{prop:spectral_gap}]
        Suppose $\mathrm{Alg}(\{J_a\}_{a\in\caA})$ is an unital irreducible algebra on $(\bbC^2)^{\otimes N}$ and $\rmM_\caH$ is an SoS parent Hamiltonian of the form \eref{eq:modular_annihilator} generated from $\{J_a\}_{a\in\caA}$ and each $J_a$ is Hermitian. It has ground state $|\Psi\>$ and spectral gap $\mathrm{Gap}(\rmM_\caH) = g$. Let $\widetilde{\Delta^{\pm 1}}(J_a)$ be the approximation of $\Delta^{\pm 1}(J_a)$ with operator norm error bound $\|\widetilde{\Delta^{\pm 1}}(J_a) - \Delta^{\pm 1}(J_a)\| \le \varepsilon$. Let 
        \begin{equation}
            \widetilde{\rmM}_\caH \coloneqq \frac{1}{2}\sum_{a\in\caA}\left(\widetilde{\Delta^{-1}}(J_a) \otimes I - I\otimes \widetilde{\Delta}(J_a)^\top\right)^\dag \left(\widetilde{\Delta^{-1}}(J_a) \otimes I - I\otimes \widetilde{\Delta}(J_a)^\top\right).
        \end{equation}
        If 
        \begin{equation}
            \sum_{a\in \caA} (\|\Gamma_a\|\varepsilon + \varepsilon^2) \le \frac{g}{8},
        \end{equation}        
        then $\widetilde{\rmM}_\caH$ has a unique ground state $|\widetilde{\Psi}\>$ with spectral gap $\mathrm{Gap}(\widetilde{\rmM}_\caH) \ge g/2$, and
        \begin{equation}
            1 - |\<\Psi|\widetilde{\Psi}\>|^2 \le \frac{4|\caA|\varepsilon^2}{g}.
        \end{equation}
    \end{proposition}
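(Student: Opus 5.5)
The proof will be a perturbation argument on the SoS factorization. Write $\widetilde{\Gamma}_a = \Gamma_a + E_a$, where
\begin{equation}
E_a \coloneqq \bigl(\widetilde{\Delta^{-1}}(J_a)-\Delta^{-1}(J_a)\bigr)\otimes I - I\otimes\bigl(\widetilde{\Delta}(J_a)-\Delta(J_a)\bigr)^\top .
\end{equation}
By the triangle inequality, and because transposition preserves the operator norm, $\|E_a\|\le 2\varepsilon$. Expanding gives
\begin{equation}
\widetilde{\rmM}_\caH - \rmM_\caH = \frac{1}{2}\sum_a\left(\Gamma_a^\dag E_a + E_a^\dag\Gamma_a + E_a^\dag E_a\right),
\end{equation}
so $\|\widetilde{\rmM}_\caH-\rmM_\caH\|\le \sum_a(2\|\Gamma_a\|\varepsilon+2\varepsilon^2)\le g/4$ under the hypothesis. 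The constants will need care: I will either absorb the factor $2$ by reading the stated $\varepsilon$ as a bound on $\|E_a\|$ up to constants, or accept the resulting $g/4$. Either way, the perturbation norm $\delta\coloneqq\|\widetilde{\rmM}_\caH-\rmM_\caH\|$ is at most $g/4$.

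\textbf{Step 1: gap.} By \thref{thm:algebra}, $\rmM_\caH$ has the simple eigenvalue $0$ and all other eigenvalues are $\ge g$. Weyl's inequality places the lowest eigenvalue of $\widetilde{\rmM}_\caH$ in $[0,\delta]$; positivity is automatic since $\widetilde{\rmM}_\caH$ is SoS. The second eigenvalue lies in $[g-\delta,\infty)$. Hence the ground state is unique and $\mathrm{Gap}(\widetilde{\rmM}_\caH)\ge g-2\delta\ge g/2$.

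\textbf{Step 2: fidelity.} The tempting route is Davis--Kahan, giving $\sin\theta\le \delta/(g-\delta)$. That would produce a bound linear in $\varepsilon\|\Gamma_a\|$, not the claimed $4|\caA|\varepsilon^2/g$. I will instead exploit the SoS structure variationally, in two bounds.

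First, evaluate the perturbed Hamiltonian on the exact ground state. Since $\Gamma_a|\Psi\>=0$, we have $\widetilde{\Gamma}_a|\Psi\> = E_a|\Psi\>$, and therefore
\begin{equation}
\<\Psi|\widetilde{\rmM}_\caH|\Psi\> = \frac{1}{2}\sum_a\|E_a|\Psi\>\|^2 \le 2|\caA|\varepsilon^2 .
\end{equation}

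Second, decompose $|\Psi\> = c|\widetilde{\Psi}\>+|\phi\>$ with $|\phi\>\perp|\widetilde{\Psi}\>$. Then
\begin{equation}
\<\Psi|\widetilde{\rmM}_\caH|\Psi\> \ge \<\phi|\widetilde{\rmM}_\caH|\phi\> \ge \widetilde{\lambda}_1\|\phi\|^2 \ge (g/2)\,(1-|c|^2).
\end{equation}
This uses $\widetilde{\lambda}_0\ge 0$ and $\widetilde{\lambda}_1\ge \mathrm{Gap}(\widetilde{\rmM}_\caH)\ge g/2$. It also uses that the cross terms vanish because $|\widetilde{\Psi}\>$ is an eigenvector.

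Combining the two bounds gives $1-|\<\Psi|\widetilde{\Psi}\>|^2\le 4|\caA|\varepsilon^2/g$, which is exactly the claimed bound.

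The main obstacle is recognizing that the quadratic bound comes from this variational step rather than from generic eigenvector perturbation theory. The only remaining care is the constant bookkeeping in $\|E_a\|$. With $\|E_a|\Psi\>\|\le 2\varepsilon$ the constant comes out exactly as stated; with a looser estimate it changes only by a constant factor.
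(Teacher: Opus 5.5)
Your proposal is correct and follows the paper's proof essentially step for step. It uses the same expansion of $\widetilde{\rmM}_\caH-\rmM_\caH$ with $\|E_a\|\le 2\varepsilon$ to bound the perturbation by $g/4$, the same Weyl argument for $\mathrm{Gap}(\widetilde{\rmM}_\caH)\ge g/2$, and the same variational sandwich $\tfrac{g}{2}\bigl(1-|\langle\Psi|\widetilde{\Psi}\rangle|^2\bigr)\le\langle\Psi|\widetilde{\rmM}_\caH|\Psi\rangle\le 2|\caA|\varepsilon^2$ for the fidelity. Your hedge about constants is unnecessary: $\|E_a\|\le 2\varepsilon$ already gives exactly the $g/4$ perturbation bound and the stated constant $4|\caA|\varepsilon^2/g$.
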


\begin{proof}
Let $\delta_a
    \coloneqq
    \widetilde{\Gamma}_a-\Gamma_a$. Since the transpose preserves the operator norm, the assumed
approximation bounds imply
\begin{equation}
            \|\delta_a\| \le \|\widetilde{\Delta}(J_a) - \Delta(J_a)\| + \|\widetilde{\Delta^{-1}}(J_a) - \Delta^{-1}(J_a)\| \le 2\varepsilon.
        \end{equation}
Note that
\begin{align}
    \widetilde{\rmM}_\caH-\rmM_\caH =
    \frac{1}{2}\sum_{a\in\caA}
    \left(
        \Gamma_a^\dagger\delta_a
        +
        \delta_a^\dagger\Gamma_a
        +
        \delta_a^\dagger\delta_a
    \right),
\end{align}
hence
\begin{align}
    \|
        \widetilde{\rmM}_\caH-\rmM_\caH
    \|\le
    \frac{1}{2}\sum_{a\in\caA}
    \left(
        2\|\Gamma_a\|\|\delta_a\|
        +
        \|\delta_a\|^2
    \right) \le
    2\sum_{a\in\caA}
    \left(
        \|\Gamma_a\|\varepsilon+\varepsilon^2
    \right)
    \le \frac{g}{4}.
\end{align}

Let $\widetilde{\lambda}_0$ and
$\widetilde{\lambda}_1$ be the two smallest eigenvalues of
$\widetilde{\rmM}_\caH$. Weyl's inequality gives
\begin{equation}
    \widetilde{\lambda}_0
    \le
    \|
        \widetilde{\rmM}_\caH-\rmM_\caH
    \|,
    \qquad
    \widetilde{\lambda}_1
    \ge
    g-
    \|
        \widetilde{\rmM}_\caH-\rmM_\caH
    \|.
\end{equation}
Therefore,
\begin{equation}
    \operatorname{Gap}(\widetilde{\rmM}_\caH)
    =
    \widetilde{\lambda}_1-\widetilde{\lambda}_0
    \ge
    g-
    2\|
        \widetilde{\rmM}_\caH-\rmM_\caH
    \|
    \ge \frac{g}{2}.
\end{equation}

Since $\Gamma_a|\Psi\rangle=0$, we have
\begin{align}
    \langle\Psi|
        \widetilde{\rmM}_\caH
    |\Psi\rangle =
    \frac{1}{2}\sum_{a\in\caA}
    \langle\Psi|
        \delta_a^\dagger\delta_a
    |\Psi\rangle \le
    \frac{1}{2}\sum_{a\in\caA}\|\delta_a\|^2
    \le
    2|\caA|\varepsilon^2.
\end{align}
Let $\widetilde{g}
\coloneqq \operatorname{Gap}(\widetilde{\rmM}_\caH)$. Using the spectral
decomposition of $\widetilde{\rmM}_\caH$ and the condition that the ground state of
$\widetilde{\rmM}_\caH$ is unique, we obtain
\begin{equation}
    \langle\Psi|
        \widetilde{\rmM}_\caH
    |\Psi\rangle
    \ge
    \widetilde{g}
    \left(
        1-
        |\langle\Psi|\widetilde{\Psi}\rangle|^2
    \right).
\end{equation}
Finally,
\begin{equation}
    1-
    |\langle\Psi|\widetilde{\Psi}\rangle|^2
    \le
    \frac{2|\caA|\varepsilon^2}{\widetilde{g}}
    \le
    \frac{4|\caA|\varepsilon^2}{g}.
\end{equation}
\end{proof}

\section{Proof of \thref{thm:local}}
\label{app:local}

We begin with a finite-range consequence of Theorem~2.3 and
Sec.~2.2.1 of Ref.~\cite{perez2023locality}.

\begin{theorem}\label{thm:araki}
Let
\begin{equation}
    H=\sum_j h_j,
    \qquad
    \operatorname{diam}\!\left(\operatorname{supp}(h_j)\right)
    \le r,
    \qquad
    J\coloneqq
    \sup_{\ell}
    \sum_{j:\,\ell\in\operatorname{supp}(h_j)}
    \|h_j\|.
\end{equation}
Suppose that $H$ is a one-dimensional Hamiltonian and that $V$ is a
single-site Pauli operator supported at site $i$. Let
\begin{equation}
    \Lambda_R(i)\coloneqq
    \{j:\operatorname{dist}(i,j)\le R\},
    \qquad
    H_R\coloneqq
    \sum_{j:\,\operatorname{supp}(h_j)\subseteq\Lambda_R(i)}h_j.
\end{equation}
Then, for every $x>0$, $\bar r \coloneqq r+1$,
\begin{equation}\label{eq:araki_finite_range}
\begin{aligned}
&\left\|
e^{xH}Ve^{-xH}
-
e^{xH_R}Ve^{-xH_R}
\right\|                                                     \\
&\quad\le
\exp\left[
    2xJ+
    4xJ(r+1)^2 e^{1+4xJ(r+1)}
\right]
\frac{
    \left[
        4xJ(r+1)^2e^{1+4xJ(r+1)}
    \right]^{
        \lfloor R/\bar r\rfloor+1
    }
}{
    \left(\lfloor R/(r+1)\rfloor+1\right)!
}.
\end{aligned}
\end{equation}
Moreover, if $0<\varepsilon\le e^{-1}$ and
\begin{equation}
    4xJ(r+1)<1,
\end{equation}
then it suffices to choose
\begin{equation}\label{eq:araki_radius_scaling}
    R = 
    O\left(
        xJr^3
        +
        r\log(\varepsilon^{-1})
    \right)
\end{equation}
in order to make the right-hand side of
\eref{eq:araki_finite_range} at most $\varepsilon$.

\end{theorem}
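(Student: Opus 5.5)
The plan has two parts: first obtain the bound \eref{eq:araki_finite_range} by specializing the general Araki-type estimate (Theorem~2.3 of Ref.~\cite{perez2023locality}) to a one-dimensional finite-range interaction, then derive the radius scaling \eref{eq:araki_radius_scaling} from it by an elementary Stirling-type estimate.

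\emph{Step 1 (the bound).} I would expand both imaginary-time evolutions as power series, $e^{xH}Ve^{-xH}=\sum_k \frac{x^k}{k!}\mathrm{ad}_H^k(V)$, and similarly for $H_R$. Each $\mathrm{ad}_H^k(V)$ is a sum over ordered sequences $(h_{j_1},\dots,h_{j_k})$ in which each new term overlaps the support accumulated so far; otherwise the nested commutator vanishes. Since $\mathrm{diam}(\operatorname{supp} h_j)\le r$, each step enlarges the support by at most $r+1$ sites. Hence, for $k\le\lfloor R/\bar r\rfloor$, every surviving sequence lies inside $\Lambda_R(i)$, and the $H$ and $H_R$ terms agree. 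The difference is therefore a tail starting at $n\coloneqq\lfloor R/\bar r\rfloor+1$.

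The main obstacle is bounding this tail. A naive count fails: at step $k$ the number of admissible $h_j$ grows like $kJ(r+1)$, which produces a $k!$ that cancels the $1/k!$, so the series converges only for small $x$. This is exactly what the cluster/combinatorial analysis of Ref.~\cite{perez2023locality} controls, and it yields the factors $e^{1+4xJ(r+1)}$ and the prefactor $\exp[2xJ+\cdots]$. I would not redo that analysis. Instead I would verify its hypotheses for our interaction: the local norm sum is $J$, and the number of interaction sets overlapping a given one is at most $(r+1)^2$ in one dimension, using $\operatorname{supp}(h_j)\ne\operatorname{supp}(h_k)$. I would then read off \eref{eq:araki_finite_range} with the decay parameter $a\coloneqq 4xJ(r+1)^2e^{1+4xJ(r+1)}$ and distance $n$.

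\emph{Step 2 (radius scaling).} Assume $4xJ(r+1)<1$. Then $e^{1+4xJ(r+1)}\le e^2$, so $a\le 4e^2xJ(r+1)^2=O(xJr^2)$, and the prefactor satisfies $\exp[2xJ+a]\le e^{2a}$. Using $n!\ge (n/e)^n$, the right-hand side is at most
\begin{equation}
e^{2a}\left(\frac{ea}{n}\right)^n .
\end{equation}
Choose $n\ge 2ea$, so that $(ea/n)^n\le 2^{-n}$. Additionally require $n\ge (2a+\log(\varepsilon^{-1}))/\log 2$. Then the bound is at most $e^{2a}2^{-n}\le\varepsilon$. Both requirements hold for some $n=O(a+\log\varepsilon^{-1})$; the assumption $\varepsilon\le e^{-1}$ ensures $\log\varepsilon^{-1}\ge1$ so that the constants absorb. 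Finally, $R=(r+1)(n-1)+O(r)=O\bigl(r\cdot(xJr^2+\log\varepsilon^{-1})\bigr)=O(xJr^3+r\log\varepsilon^{-1})$, which is \eref{eq:araki_radius_scaling}.
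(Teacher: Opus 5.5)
Your proposal follows the paper's proof. The bound comes from applying Theorem~2.3 of Ref.~\cite{perez2023locality} (with $s=-\rmi x$, $\ell=R$) together with its finite-range estimate from Sec.~2.2.1, which gives $e^{2xJ}e^{a}a^{q}/q!$ with $a=4xJ\bar r^{2}e^{1+4xJ\bar r}$ and $q=\lfloor R/\bar r\rfloor+1$. The radius scaling then follows from $q!\ge(q/e)^q$ and $q\ge\max\{2ea,(2xJ+a+\log\varepsilon^{-1})/\log 2\}$, just as you do; your simplification $2xJ\le a$ is harmless.

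One correction to Step~1 concerns the hypotheses to check. They are $\Omega_0\le J$ and the finite-range condition $\Omega_k=0$ for $k\ge r+1$, both immediate from $\operatorname{diam}(\operatorname{supp}h_j)\le r$. The factor $(r+1)^2$ in $a$ is produced by the Sec.~2.2.1 estimate. It does not come from counting overlapping supports, and a bound of $(r+1)^2$ on that count is neither needed nor true in general.
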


\begin{proof}
Let $\Phi_X$ denote the interaction obtained by collecting all terms
with support $X$:
\begin{equation}
    \Phi_X
    \coloneqq
    \sum_{j:\,\operatorname{supp}(h_j)=X}h_j.
\end{equation}
In the notation of Ref.~\cite{perez2023locality}, the interaction
strength satisfies
\begin{equation}
    \Omega_0
    =
    \sup_{\ell}
    \sum_{X\ni\ell}\|\Phi_X\|
    \le J.
\end{equation}
Moreover, the condition
$\operatorname{diam}(\operatorname{supp}(h_j))\le r$ implies
\begin{equation}
    \Omega_k=0
    \qquad
    \text{for every }k\ge\bar r=r+1.
\end{equation}

Ref.~\cite{perez2023locality} defines the complex-time evolution by
\begin{equation}
    \Gamma_\Lambda^s(A)
    =
    e^{\rmi sH_\Lambda}Ae^{-\rmi sH_\Lambda}.
\end{equation}
We apply Theorem~2.3 therein with
\begin{equation}
    A=V,
    \qquad
    s=-\rmi x,
    \qquad
    |s|=x,
    \qquad
    \ell=R.
\end{equation}
Since $V$ is a single-site Pauli operator,
Theorem~2.3 in Ref.\cite{perez2023locality} gives
\begin{equation}\label{eq:araki_intermediate}
\begin{aligned}
&\left\|
e^{xH}Ve^{-xH}
-
e^{xH_R}Ve^{-xH_R}
\right\| \le
e^{2xJ}
\sum_{k=R+1}^{L}
e^{4xJk}\Omega_k^*(4x),
\end{aligned}
\end{equation}
where $L$ is chosen sufficiently large that the corresponding
finite-volume Hamiltonian equals $H$.

Applying the finite-range estimate of Sec.~2.2.1 of
Ref.~\cite{perez2023locality} with range parameter $\bar r$ yields
\begin{equation}
\sum_{k=R+1}^{L}
e^{4xJk}\Omega_k^*(4x)
\le
e^a\frac{a^q}{q!},
\end{equation}
where
\begin{equation}
    \bar r \coloneqq r+1,\qquad a\coloneqq
    4xJ\bar r^{\,2}e^{1+4xJ\bar r},
    \qquad
    q\coloneqq
    \left\lfloor\frac{R}{\bar r}\right\rfloor+1.
\end{equation}
Substituting this estimate into
\eref{eq:araki_intermediate} proves
\eref{eq:araki_finite_range}.

It remains to derive the sufficient choice of $R$. By the elementary
factorial bound
    $q!\ge\left(q/e\right)^q$,
the right-hand side of \eref{eq:araki_finite_range} satisfies
\begin{equation}
    e^{2xJ+a}\frac{a^q}{q!}
    \le
    e^{2xJ+a}
    \left(\frac{ea}{q}\right)^q.
\end{equation}
It is sufficient to choose
\begin{equation}
    q\ge\max\left\{2ea,
    \frac{
        2xJ+a+\log(\varepsilon^{-1})
    }{\log 2}\right\}.
\end{equation}
Indeed, the first condition gives $ea/q\le1/2$, while the second
ensures $e^{2xJ+a}2^{-q}\le\varepsilon$. Consequently,
\begin{equation}
    q
    =
    O\left(
        1+a+xJ+\log(\varepsilon^{-1})
    \right).
\end{equation}

If $4xJ\bar r<1$, then
\begin{equation}
    a
    =
    4xJ\bar r^{\,2}e^{1+4xJ\bar r}
    \le
    4e^2xJ\bar r^{\,2}.
\end{equation}
Since $\bar r\ge1$, we also have
$xJ\le xJ\bar r^{\,2}$. For
$0<\varepsilon\le e^{-1}$, the constant term is absorbed by
$\log(\varepsilon^{-1})$, and hence
\begin{equation}
    q
    =
    O\left(
        xJ\bar r^{\,2}
        +
        \log(\varepsilon^{-1})
    \right).
\end{equation}
Because
$q=\lfloor R/\bar r\rfloor+1$, this proves
\eref{eq:araki_radius_scaling}.
\end{proof}

\begin{remark}
The correspondence with the notation of
Ref.~\cite{perez2023locality} is
\begin{equation}
    A\mapsto V,
    \qquad
    s\mapsto-\rmi x,
    \qquad
    |s|\mapsto x,
    \qquad
    \ell\mapsto R,
    \qquad
    \Omega_0\mapsto J,
    \qquad
    |\operatorname{supp}(A)|\mapsto1.
\end{equation}
The auxiliary parameter $\bar r=r+1$ accounts for the difference
between the convention
$\operatorname{diam}(\operatorname{supp}(h_j))\le r$ used here and
the convention $\Omega_k=0$ for $k\ge\bar r$ used in
Sec.~2.2.1 of Ref.~\cite{perez2023locality}.
\end{remark}

Then we can prove \thref{thm:local} in the main text.
\begin{theorem}[Restatement of \thref{thm:local}]
Suppose $H$ is an $N$-qubit Hamiltonian of the form \eref{eq:finiterange}, and $V$ is a Pauli operator supported on the $i$th site, and $\beta < 1/(Jr+J)$. Then for error tolerance \(0 < \varepsilon < e^{-1}\), there exists a truncated Hamiltonian \(H_{R,i}\) supported on a neighborhood of the $i$th site of radius
\begin{equation}
    R = O(\beta Jr^3 + r\log(\varepsilon^{-1})),
\end{equation}
such that the \(m\) step Lanczos approximation satisfies
\begin{equation}
    \|\Delta(V)-\caK(m,\operatorname{ad}_{H_{R,i}},\beta/4,V)\| \le \varepsilon.
\end{equation}
provided that
\begin{equation}
m=O\left(R+ \log(\varepsilon^{-1})\right) = O(\beta J r^3 + r\log(\varepsilon^{-1})).
\end{equation}
\end{theorem}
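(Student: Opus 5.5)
The proof splits the error by the triangle inequality into a locality (truncation) part and a Krylov part:
\begin{equation*}
\|\Delta(V)-\caK(m,\operatorname{ad}_{H_{R,i}},\beta/4,V)\|
\le
\bigl\|e^{\beta H/4}Ve^{-\beta H/4}-e^{\beta H_{R,i}/4}Ve^{-\beta H_{R,i}/4}\bigr\|
+
\bigl\|e^{\beta H_{R,i}/4}Ve^{-\beta H_{R,i}/4}-\caK(m,\operatorname{ad}_{H_{R,i}},\beta/4,V)\bigr\|.
\end{equation*}
The goal is to make each term at most $\varepsilon/2$.

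\textbf{Locality term.} We apply \thref{thm:araki} with $x=\beta/4$. The hypothesis $\beta<1/(Jr+J)$ gives $4xJ(r+1)=\beta J(r+1)<1$, so the theorem applies. With tolerance $\varepsilon/2$ (still at most $e^{-1}$ up to constants, since $\log(2/\varepsilon)=O(\log\varepsilon^{-1})$ for $\varepsilon<e^{-1}$), it yields $R=O(\beta Jr^3+r\log\varepsilon^{-1})$.

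\textbf{Krylov term.} We work on the $(2R+1)$-qubit space $\caH_R$, with inner product $\<X,Y\>=\Tr_{\caH_R}(X^\dagger Y)$ and $V_1=V/\sqrt{2^{2R+1}}$. Extending by identity outside $\Lambda_R(i)$ does not change operator norms, so we may regard $e^{\beta H_{R,i}/4}Ve^{-\beta H_{R,i}/4}$ and $\caK$ as operators on $\caH_R$.

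Let $D=2^{2R+1}$. Then $\|X\|\le\|X\|_{R,F}=\sqrt D\,\|X/\sqrt D\|_{R,F}$. By \lref{lem:polyapprox} with $G=\operatorname{ad}_{H_{R,i}}$, the normalized Frobenius error is at most $2e^{xg}(exg/m)^m$, where $g=\|\operatorname{ad}_{H_{R,i}}\|_{F\to F}\le 2\|H_{R,i}\|$. The operator-norm error is therefore at most $\sqrt D\cdot 2e^{xg}(exg/m)^m$.

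The factor $\sqrt D=2^{R+1/2}$ must be absorbed. Requiring $\sqrt D\,2e^{xg}(exg/m)^m\le\varepsilon/2$ amounts to
\begin{equation*}
m\log\!\bigl(m/(exg)\bigr)\ge (R+\tfrac12)\log 2+xg+\log(4/\varepsilon).
\end{equation*}

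\textbf{Main obstacle.} The key step is bounding $xg$ independently of $N$. Because at most $O(R)$ terms $h_j$ lie in $\Lambda_R(i)$, each site carrying total strength at most $J$, we get $\|H_{R,i}\|\le(2R+1)J$. Hence $xg\le \beta J(2R+1)/2=O(R)$, using $\beta J<1$.

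Taking $m=C(R+\log\varepsilon^{-1})$ with $C$ a large enough constant makes $m/(exg)\ge e^2$. The left-hand side is then at least $2m$, which dominates the right-hand side, which is $O(R+\log\varepsilon^{-1})$. Substituting the value of $R$ from the locality step gives $m=O(\beta Jr^3+r\log\varepsilon^{-1})$. Combining the two halves gives total error at most $\varepsilon$, with all constants independent of $N$.
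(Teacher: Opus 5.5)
Your proposal is correct and follows the paper's proof essentially step for step. The paper uses the same triangle-inequality split into an Araki-type truncation error (\thref{thm:araki} with $x=\beta/4$, valid since $\beta J(r+1)<1$) and a Krylov error on the $(2R+1)$-qubit window. In both, the $2^{R+1/2}$ Frobenius-to-operator-norm factor is absorbed into $m$ using $\|H_{R,i}\|\le(2R+1)J$ and $\beta J<1$. The only difference is cosmetic: you unpack the tail bound of \lref{lem:polyapprox} explicitly, whereas the paper invokes \thref{thm:worst} with $\varepsilon_{\mathrm{K}}=\varepsilon/2^{R+3/2}$, which leads to the same inequality.
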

\begin{proof}
Introduce the effective Frobenius norm $\|\cdot\|_{R,F}$ on subspace $\caH_R \coloneqq \bigotimes_{j\in\Lambda_R(i)}\caH_j$, where $\caH_j$ is the local Hilbert space on the $j$th site. Note that for any $X\in \caB(\caH_R)$, we have $\|X\| \le \|X\|_{R,F}$. This Hilbert space has dimension $2^{2R+1}$. By virtue of \thref{thm:worst}, with
\begin{equation}
    m = O(\|H_{R,i}\|\beta + \log(\varepsilon_{\rmK}^{-1}))
\end{equation}
steps of Lanczos recursion, one can ensure that
\begin{equation}
    \left\|\caK(m,\operatorname{ad}_{H_{R,i}},\beta/4,V) - e^{\beta H_{R,i}/4} V e^{-\beta H_{R,i}/4}\right\|_{R,F} \le \sqrt{2^{2R+1}}\varepsilon_K.
\end{equation}
In conjugation with the Araki-type bound in \thref{thm:araki}, when $\beta J(r+1) < 1$, and
\begin{equation}
    R = O\left(\beta Jr^3 + r \log(\varepsilon_{\rm LR}^{-1})\right),
\end{equation}
we have
\begin{align}
    \|\caK(m,\operatorname{ad}_{H_{R,i}},\beta/4,V) - \Delta(V)\| &\le  \left\|\caK(m,\operatorname{ad}_{H_{R,i}},\beta/4,V) - e^{\beta H_{R,i}/4} V e^{-\beta H_{R,i}/4}\right\|  \nonumber \\
    &\quad + \left\|e^{\beta H_{R,i}/4} V e^{-\beta H_{R,i}/4} - \Delta(V)\right\| \nonumber\\
    &\le 2^{R+1/2}\varepsilon_{\rm K} + \varepsilon_{\mathrm{LR}}.
\end{align}
The theorem is then proved by choosing $\varepsilon_{\mathrm{LR}} = \varepsilon/2, \varepsilon_\rmK = \varepsilon/2^{R+3/2}$ and using the upper bound $\|H_{R,i}\| \le (2R+1)J$.

\end{proof}

\end{document}